\DeclarePairedDelimiter{\twonorm}{\lVert}{\rVert}
\DeclarePairedDelimiter{\bracket}{[}{]}
\DeclareMathOperator{\mR}{\mathbb{R}}
\DeclareMathOperator{\mZ}{\mathbb{Z}}
\DeclareMathOperator{\cL}{\mathcal{L}}
\DeclareMathOperator{\lat}{\mathcal{L}}
\DeclareMathOperator{\basis}{\mathbf{B}}
\DeclareMathOperator{\cR}{\mathcal{R}}
\DeclareMathOperator{\cY}{\mathcal{Y}}
\DeclareMathOperator{\cD}{\mathcal{D}}
\DeclareMathOperator{\vol}{vol}
\newcommand{\Exp}{\mathbb{E}}
\DeclareMathOperator{\real}{\mathbb{R}}
\DeclareMathOperator{\nat}{\mathbb{N}}
\newcommand{\intg}{\mathbb{Z}}
\newcommand{\Prob}[2]{\Pr_{#1}\bracket{#2}}
\newcommand{\inProd}[2]{\langle{#1},{#2}\rangle}
\newcommand{\poly}{\text{poly}}
\newcommand{\dual}[1]{#1^*}
\newcommand{\svp}{\textsf{SVP}}
\newcommand{\SVP}{\textsf{SVP}}
\newcommand{\BDD}{\textsf{BDD}}
\newcommand{\DGS}{\textsf{DGS}}
\newcommand{\cvp}{\textsf{CVP}}
\newcommand{\CVP}{\textsf{CVP}}
\newcommand{\CVPP}{\textsf{CVPP}}
\newcommand{\BDDP}{\textsf{BDDP}}
\newcommand{\ZLIP}{\textsf{$\mZ$LIP}}
\newcommand{\mat}[1]{\mathbf{#1}}
\newcommand{\lattice}[1]{\mathcal{#1}}
\newcommand{\DGauss}[2]{D_{#1\ifthenelse{\equal{#1}{}}{}{,#2}}}
\newcommand{\sddist}{\operatorname{d}_{\text{SD}}}
\newcommand\numberthis{\addtocounter{equation}{1}\tag{\theequation}}
\newcommand{\vect}[1]{\boldsymbol{#1}}
\newcommand{\eps}{\varepsilon}
\renewcommand{\epsilon}{\varepsilon}
\newcommand{\topic}[1]{\subsubsection{#1.}}
\newcommand{\norm}[1]{\left\lVert#1\right\rVert}
\DeclarePairedDelimiter{\set}{\{}{\}}
\DeclarePairedDelimiter\bra{\langle}{\rvert}
\DeclarePairedDelimiter\ket{\lvert}{\rangle}
\newcommand{\RETURN}{\STATE\textbf{return }}
\pgfplotsset{compat=1.16} 
\begin{document}

\author{Divesh Aggarwal\inst{1}\thanks{This work was partially supported by the Singapore Ministry of Education under grant MOE2019-T2-1-145, National Research Foundation under grant R-710-000-012-135, NRF award NRF-NRFI09-0005"}  \and
Yanlin Chen\inst{2} \and
Rajendra Kumar\inst{3}\thanks{This research has been supported in part by the National Research Foundation Singapore under its AI Singapore Programme [Award Number: AISG-RP-2018-005]
} \and
Yixin Shen\inst{4}\thanks{This research was partially supported by EPSRC grant EP/W02778X/2 and by the French Agence Nationale de la Recherche through the France 2030 program under grant agreement No. ANR-22-PETQ-0008 PQ-TLS.}}
\authorrunning{D. Aggarwal et al.}

\institute{Centre for Quantum Technologies and National
            University of Singapore. email: \texttt{dcsdiva@nus.edu.sg}. \and 
             QuSoft and CWI, the Netherlands. email: \texttt{yanlin@cwi.nl} \and
Indian Institute of Technology Delhi, New Delhi, India. email: \texttt{rjndr2503@gmail.com} 
            \and 
            Univ Rennes, Inria, CNRS, IRISA, Rennes, France. email: \texttt{yixin.shen@inria.fr}
            }

\title{Improved Classical and Quantum Algorithms for the Shortest Vector Problem via Bounded Distance Decoding\thanks{A preliminary version of this paper titled "Improved (Provable) algorithms for the Shortest Vector Problem via Bounded Distance Decoding" appeared in the proceedings of the 38th International Symposium on Theoretical Aspects of Computer Science (STACS 2021)~\cite{ACKS21}.}}

\maketitle
\begin{abstract}The most important computational problem on lattices is the Shortest Vector Problem ($\SVP$). In this paper, we present new algorithms that improve the state-of-the-art for provable classical/quantum algorithms for $\SVP$. We present the following results.
\begin{enumerate}
\item A new algorithm for $\svp$ that provides a smooth tradeoff between time complexity and memory requirement. For any positive integer $4\leq q\leq \sqrt{n}$, our algorithm takes $q^{13n+o(n)}$ time and requires $poly(n)\cdot q^{16n/q^2}$ memory. This tradeoff which ranges from enumeration ($q=\sqrt{n}$) to sieving ($q$ constant), is a consequence of a new time-memory tradeoff for Discrete Gaussian sampling above the smoothing parameter. 

\item A quantum algorithm for $\svp$ that runs in time $2^{0.950n+o(n)}$ and requires $2^{0.5n+o(n)}$ classical memory and $\poly(n)$ qubits. In Quantum Random Access Memory (QRAM) model this algorithm takes only $2^{0.835n+o(n)}$ time and requires a QRAM of size $2^{0.293n+o(n)}$, $\poly(n)$ qubits and $2^{0.5n}$ classical space. This improves over the previously  fastest classical (which is also the fastest quantum) algorithm due to~\cite{ADRSD15} that has a time and space complexity $2^{n+o(n)}$.

\item A classical algorithm for $\svp$ that runs in time $2^{1.669n+o(n)}$ time and $2^{0.5n+o(n)}$ space. This improves over an algorithm of~\cite{CCL18} that has the same space complexity.
\end{enumerate}

The time complexity of our classical and quantum algorithms are obtained using a known upper bound on a quantity related to the lattice kissing number which is $2^{0.402n}$. We conjecture that for most lattices this quantity is a $2^{o(n)}$. Assuming that this is the case, our classical algorithm runs in time $2^{1.292n+o(n)}$, our quantum algorithm runs in time $2^{0.750n+o(n)}$ and our quantum algorithm in QRAM model runs in time $2^{0.667n+o(n)}$.

As a direct application of our result, using the reduction in \cite{Ducas23}, we obtain a provable quantum algorithm for the Lattice Isomorphism Problem in the case of the trivial lattice $\mZ^n$ ($\ZLIP$) that runs in time $2^{0.417n+o(n)}$. Our algorithm requires a QRAM of size $2^{0.147n+o(n)}$, $\poly(n)$ qubits and $2^{0.25n}$ classical space. 

\keywords{Lattices \and Shortest Vector Problem \and Discrete Gaussian Sampling \and Time-Space Tradeoff \and Quantum computation \and Bounded distance decoding .}
\end{abstract}

\section{Introduction}

A lattice $\cL = \cL(\vect{b}_1, \ldots, \vect{b}_n) := \{\sum_{i=1}^n z_i \vect{b}_i : z_i \in \mZ\}$ is the set of all integer combinations of linearly independent vectors $\vect{b}_1,\dots,\vect{b}_n \in \mR^n$.
 We call $n$ the rank of the lattice and $(\vect{b}_1, \ldots, \vect{b}_n)$ a basis of the lattice.
 
The most important computational problem on lattices is the Shortest Vector Problem ($\SVP$). 
Given a basis for a lattice $\cL \subseteq \mR^n$,
$\SVP$ asks us to compute a non-zero vector in $\cL$ with the smallest Euclidean norm. 
Starting from the '80s, the use of approximate and exact solvers for $\SVP$ (and other lattice problems) gained prominence for their applications in algorithmic number theory~\cite{LLL82}, convex optimization~\cite{Lenstra83,Kannan87,FrankT87}, coding theory~\cite{Buda89}, and cryptanalysis tool~\cite{Shamir84,Brickell84,LagariasO85}. The security of many cryptographic primitives is based on the worst-case hardness  
of (a decision variant of) approximate $\mathsf{SVP}$ to within polynomial 
factors~\cite{Ajtai96,MR04,Regev09,Regev06,MR08,Gentry09,BV14} in the sense that any cryptanalytic attack on these cryptosystems that runs in time polynomial in the security parameter implies a polynomial time algorithm to solve approximate $\mathsf{SVP}$ to within polynomial factors. 
Such cryptosystems have attracted a lot of research interest due to their conjectured resistance to quantum attacks.

The $\SVP$ is a well-studied computational problem in both its exact and approximate (decision) versions. By a randomized reduction, it is known to be NP-hard to approximate within any constant factor, and hard to approximate within a factor $n^{c/\log \log n}$ for some $c > 0$ under reasonable complexity-theoretic assumptions~\cite{van1981another,Micciancio98,Khot05,HR07}.  For an approximation factor $2^{\mathcal{O}(n)}$, one can solve $\SVP$ in time polynomial in $n$ using the celebrated LLL lattice basis reduction algorithm~\cite{LLL82}.
In general, the fastest known algorithm(s) for approximating $\SVP$ within factors polynomial in $n$ rely on (a variant of) the BKZ lattice basis reduction algorithm~\cite{Schnorr1987AHO,SE94,AKS01,GN08,HPS11,ALNS19}, which can be seen as a generalization of the LLL algorithm and  gives an
$r^{n/r}$ approximation in $2^{\mathcal{O}(r)}\poly(n)$ time. All these algorithms internally use an algorithm for solving (near) exact $\SVP$ in lower-dimensional lattices.
Therefore, finding faster algorithms to solve exact $\SVP$ is critical to choosing security parameters of cryptographic primitives.

As one would expect from the hardness results above, all known algorithms  for solving exact $\SVP$, 
including the ones we present here, require at least exponential time. In fact, the fastest known algorithms also require exponential space. There has been some recent evidence~\cite{AS18b} showing that one cannot hope to get a $2^{o(n)}$ time algorithm for $\SVP$ if one believes in complexity theoretic conjectures such as the (Gap) Exponential Time Hypothesis. Most of the known algorithms for $\SVP$ can be broadly classified into two classes: (i) the algorithms that require memory polynomial in $n$ but run in time $n^{\mathcal{O}(n)}$ and (ii) the algorithms that require memory $2^{\mathcal{O}(n)}$ and run in time $2^{\mathcal{O}(n)}$.

The first class, initiated by Kannan~\cite{Kannan87,Helfrich85,HS07,GN10,MW15,albrecht2020faster}, combines basis reduction with exhaustive enumeration inside Euclidean balls. While enumerating vectors requires $2^{\mathcal{O}(n\log n)}$ time, it is much more space-efficient than other kinds of algorithms for exact $\SVP$.
 
Another class of algorithms, and currently the fastest, is based on sieving. First developed by Ajtai, Kumar, and Sivakumar~\cite{AKS01}, they generate many lattice vectors and then divide-and-sieve to create shorter and shorter vectors iteratively. A sequence of improvements~\cite{Regev04,NV08,MV10,PS09,ADRSD15,AS18}, has led to a $2^{n+o(n)}$ time and space algorithm by sieving the lattice vectors and carefully controlling the distribution of the output, thereby outputting a set of lattice vectors that contains the shortest vector with overwhelming probability.

An alternative approach using the Voronoi cell of the lattice was proposed by Micciancio and Voulgaris~\cite{MV13} and gives a deterministic $2^{2n+o(n)}$-time and $2^{n+o(n)}$-space algorithm for $\SVP$ (and many other lattice problems). 

There are variants~\cite{NV08,MV10,LMP15,BDGL16} of the above mentioned sieving algorithms that, under some heuristic assumptions, have an asymptotically smaller (but still $2^{\Theta(n)}$) time and space complexity than their provable counterparts. 

\paragraph{Algorithms giving a time/space tradeoff.} Even though sieving algorithms are asymptotically the fastest known algorithms for $\SVP$, the memory requirement, in high dimension, has historically been a limiting factor to run these
algorithms.  
Some recent works \cite{Ducas18,ADHKPS19} have shown how to use new tricks to make it possible to use sieving on high-dimensional lattices in practice and benefit from their efficient running time \cite{Records}. 

Nevertheless, it would be ideal and has been a long standing open question to obtain an algorithm that achieves the ``best of both worlds", i.e. an algorithm that runs in time $2^{\mathcal{O}(n)}$ and requires memory polynomial in $n$. 
In the absence of such an algorithm, it is desirable to have a smooth tradeoff between the time and memory requirement  that interpolates between the current best sieving algorithms and the current best enumeration algorithms. 

To this end, Bai, Laarhoven, and Stehl{\'{e}}~\cite{BLS16}  proposed the tuple sieving algorithm, providing such a tradeoff based on heuristic assumptions similar in nature to prior sieving algorithms. This algorithm was later proven to have time and space complexity $k^{O(n)}$ and $k^{O(n/k) }$, under the same heuristic assumptions~\cite{HK17}. One can vary the parameter $k$ to obtain a smooth time/space tradeoff. Nevertheless, it is still desirable to obtain a provable variant of this algorithm that does not rely on any heuristics.

Kirchner and Fouque~\cite{Kirchner16} attempted to do this. They claim an algorithm for solving $\SVP$ in time $k^{\Theta(n)}$ and in space $k^{\Theta(n/k)}$ for any positive integer $k>1$. Unfortunately, their analysis falls short of supporting their claimed result, and the correctness of the algorithm is not clear. We refer the reader to Section~\ref{section_comparison} for more details.

In addition to the above, Chen, Chung, and Lai~\cite{CCL18} propose a variant of the algorithm based on Discrete Gaussian sampling in~\cite{ADRSD15}. Their algorithm runs in time $2^{2.05n+o(n)}$ and the memory requirement is $2^{0.5n+o(n)}$. The quantum variant of their algorithm runs in time $2^{1.2553n+o(n)}$ and has the same space complexity. Their algorithm has the best space complexity among known provably correct algorithms that run in time $2^{O(n)}$.

A number of works have also investigated the potential quantum
speedups for lattice algorithms, and $\SVP$ in particular. A similar
landscape to the classical one exists, although the quantum memory model has its importance. While quantum enumeration algorithms only require qubits~\cite{ANS18}, sieving algorithms require more powerful QRAMs~\cite{LMP15,Kir19}.

\subsection{Our results}

We first present a new algorithm for $\SVP$ that provides a smooth tradeoff between the time complexity and memory requirement of $\SVP$ without any heuristic assumptions. This algorithm is obtained by giving a new algorithm for sampling lattice vectors from the Discrete Gaussian distribution that runs in time $k^{\mathcal{O}(n)}$ and requires $k^{\mathcal{O}(n/k)}$ space.

\begin{theorem}[Time-space tradeoff for smooth discrete Gaussian, informal]\label{theorem_main_dgs}
There is an algorithm that takes as input a lattice $\cL \subset \mathbb{R}^n$, a positive integer $q$, and a parameter $s$ above the smoothing parameter of $\cL$, and outputs $q^{16n/q^2}$ samples from $D_{\cL,s}$ using $q^{13n+o(n)}$ time and $poly(q)\cdot q^{16n/q^2}$ space. 
\end{theorem}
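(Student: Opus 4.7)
The plan is to extend the ``pair-and-discard'' technique from \cite{ADRSD15} to a $q$-tuple version, trading the factor-$\sqrt{2}$ parameter reduction per round for a factor-$\sqrt{q}$ reduction by combining $q$ samples at a time. This interpolates naturally between the ADRS sieving regime ($q=2$) and enumeration-style behaviour ($q \approx \sqrt{n}$), and should directly feed into the $\SVP$ reduction advertised in the introduction.

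First I would establish the core combining step. Given $N$ samples $\vect{x}_1,\ldots,\vect{x}_N$ drawn i.i.d.\ from $D_{\cL,s}$ with $s$ above the smoothing parameter $\eta_\varepsilon(\cL)$, bucket them by their residue modulo $q\cL$. For any ordered $q$-tuple $(\vect{x}_{i_1},\dots,\vect{x}_{i_q})$ whose sum lies in a fixed coset $\vect{c}+q\cL$, the vector $\tfrac{1}{q}\bigl(\vect{x}_{i_1}+\cdots+\vect{x}_{i_q}-\vect{c}\bigr)$ is a point of $\cL$. A convolution-of-discrete-Gaussians argument, generalising the sum-of-two-Gaussians lemma from \cite{ADRSD15}, should show that this output is statistically close to $D_{\cL,s/\sqrt{q}}$. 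The smoothing condition ensures that the $\cL/q\cL$-residues of the $\vect{x}_i$ are near-uniform, so each coset of $q\cL$ receives roughly $N^q/q^n$ of the $N^q$ ordered tuples.

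Next I would set up the iterative schedule. Using Klein's polynomial-space sampler on a basis pre-reduced by LLL, start with samples at an initial parameter $s_0 = 2^{O(n)}\cdot \lambda_1(\cL)$. Apply the $q$-combining step $k = O\bigl(\log(s_0/s)/\log q\bigr) = O(n/\log q)$ times, each round shrinking the parameter by $\sqrt{q}$, until the target $s$ is reached. At every level the number of samples must exceed the final output count $q^{16n/q^2}$. The recursion $N_{i+1}\approx N_i^{\,q}/q^n$ (outputs per level), combined with the number of levels, dictates how large $N_0$ must be; balancing these two quantities is what should produce the stated space bound $\poly(q)\cdot q^{16n/q^2}$ and a total running time of $q^{13n+o(n)}$, the latter being dominated by enumerating $q$-tuples per level times the number of levels.

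The main obstacle will be the distributional analysis. Reusing samples across many $q$-tuples destroys independence, so one cannot simply treat the output as i.i.d.\ Gaussian. I expect to adopt the max-divergence or R\'enyi-divergence framework of \cite{ADRSD15} to bound the drift from the ideal $D_{\cL, s/\sqrt{q}}$ after each round and to argue that this drift compounds only polynomially over the $O(n/\log q)$ rounds. A second subtlety is the initialisation: Klein's algorithm only samples from parameters depending on the basis quality, so interleaving with basis reduction is needed to ensure $s_0/s$ is at most singly exponential in $n$. Carefully balancing $s_0$, the per-round sample loss, and the total number of rounds is what should pin down the constants $13$ and $16$ in the final time and space exponents.
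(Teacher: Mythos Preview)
Your proposal has a genuine gap in two related places.

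\textbf{Tuple size vs.\ space.} You combine exactly $q$ samples and divide by $q$, getting a $\sqrt{q}$ reduction in the Gaussian parameter per round. But then, to guarantee that some $q$-tuple hits a prescribed coset of $q\cL$, you need $N^q \gtrsim q^n$, i.e.\ $N \gtrsim q^{n/q}$ samples in memory. That gives space $q^{\Theta(n/q)}$, not the $q^{\Theta(n/q^2)}$ claimed in the theorem. The paper decouples the tuple size from $q$: it combines $8d+1$ vectors with $d \approx q^2/16$ and still divides by $q$, so each round only shrinks the parameter by a constant factor $q/\sqrt{8d+1}\approx\sqrt{2}$, but the memory requirement drops to $q^{n/d}\approx q^{16n/q^2}$. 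The price is more rounds ($k=O(n\log\log n/\log n)$ rather than $O(n/\log q)$), which is absorbed into the $q^{o(n)}$ term. Your parameter schedule, with $q$-tuples and $\sqrt{q}$ reduction, cannot reach the stated space bound no matter how you balance the recursion.

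\textbf{Independence.} Your recursion $N_{i+1}\approx N_i^{\,q}/q^n$ implicitly outputs \emph{all} good tuples, which are massively dependent (each input vector appears in exponentially many outputs), and a R\'enyi-divergence patch will not survive $O(n/\log q)$ rounds of this. The paper avoids the issue entirely: it splits the samples into two lists $L_1,L_2$, and for each $\vect v\in L_1$ it searches for \emph{one} $8d$-subset of $L_2$ summing to $\vect v\bmod q\cL$, outputs $\tfrac1q(\sum\vect x_{i_j}-\vect v)$, and then \emph{removes} the used vectors from both lists. Each input is used at most once, so Micciancio--Peikert's convolution lemma gives genuinely independent outputs. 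The nontrivial part is showing enough combinations survive the removals; this is done via the Leftover Hash Lemma applied to $(\vect c_1,\dots,\vect c_{N/2})\mapsto\sum y_i\vect c_i$ over weight-$8d$ vectors $y$, plus a counting argument that at most $\tfrac23 q^n$ cosets can ``disappear'' after the deletions. Consequently the number of outputs per round is $N_i/\mathrm{poly}(d)$, not $N_i^{\,q}/q^n$.
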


Using the standard reduction from  Bounded Distance Decoding ($\BDD$) with preprocessing (where an  algorithm solving the problem is allowed unlimited preprocessing time on the lattice before the algorithm receives the target vector) to Discrete Gaussian Sampling ($\DGS$) from~\cite{DRS14} and a reduction from $\SVP$ to $\BDD$ given in~\cite{CCL18}, we obtain the following. 
\begin{theorem}[Time-space tradeoff for $\SVP$]\label{theorem_main_svp}
Let $n \in \mathbb{N}, q \in [4,\sqrt{n}]$ be a positive integer. Let $\cL$ be a lattice of rank $n$. There is a randomized algorithm that solves $\svp$ in time $ q^{13n+o(n)}$ and in space $poly(n)\cdot q^{\frac{16n}{q^2}}$.
\end{theorem}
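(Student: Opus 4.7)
The plan is to chain three ingredients already in the literature: (i) Theorem~\ref{theorem_main_dgs} on the time-space tradeoff for discrete Gaussian sampling above smoothing, (ii) the reduction from $\BDD$ with preprocessing to $\DGS$ of Dadush, Regev and Stephens-Davidowitz~\cite{DRS14}, and (iii) the reduction from $\SVP$ to $\BDD$ of Chen, Chung and Lai~\cite{CCL18}. The whole solver for $\SVP$ is pipelined as $\SVP \rightarrow \BDD \rightarrow \DGS$.

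Concretely, I would first apply the $\SVP$-to-$\BDD$ reduction of~\cite{CCL18} to the input lattice, producing a polynomial number of $\BDD$ instances whose collective solution yields a shortest vector. Each such instance is then solved by the preprocessed $\BDD$-from-$\DGS$ algorithm of~\cite{DRS14}, which requires only $\poly(n)$ i.i.d.\ samples from $D_{\cL,s}$ for some $s$ slightly above $\eta_\eps(\cL)$. Crucially, these samples depend only on the lattice (not on any $\BDD$ target), so a \emph{single} preprocessing stage suffices to serve all $\poly(n)$ $\BDD$ queries. That preprocessing pool is generated by one invocation of Theorem~\ref{theorem_main_dgs}, which returns $q^{16n/q^2}$ samples---comfortably more than the $\poly(n)$ samples required---in time $q^{13n+o(n)}$ and space $\poly(n)\cdot q^{16n/q^2}$.

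Adding up the contributions, the $\DGS$ call of Theorem~\ref{theorem_main_dgs} dominates both time and space, while the $\SVP$-to-$\BDD$ reduction of~\cite{CCL18} and the $\BDD$-from-$\DGS$ decoder of~\cite{DRS14} contribute only $\poly(n)$ overhead in both resources. This yields total runtime $q^{13n+o(n)}$ and total space $\poly(n)\cdot q^{16n/q^2}$, exactly as claimed.

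The main obstacle I foresee is parameter matching: one must verify that the Gaussian width $s$ required by the $\BDD$ decoder, at the decoding distance produced by the $\SVP$-to-$\BDD$ reduction, indeed lies above the smoothing parameter of $\cL$, so that Theorem~\ref{theorem_main_dgs} applies. Since~\cite{CCL18} already pairs its reduction with the above-smoothing $\DGS$ algorithm of~\cite{ADRSD15}, this condition is inherited verbatim and introduces no new obstruction. A secondary bookkeeping point is to confirm that the regime $q\in[4,\sqrt{n}]$ is precisely the one in which Theorem~\ref{theorem_main_dgs} was stated, which is immediate, and that $\poly(q)$ in the space bound of Theorem~\ref{theorem_main_dgs} is absorbed into $\poly(n)$ since $q\le\sqrt{n}$.
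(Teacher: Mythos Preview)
Your high-level pipeline $\SVP\to\BDD\to\DGS$ is indeed the paper's route, but the quantitative claims about the two reductions are wrong, and this is not just a bookkeeping issue---it is the whole content of the theorem.

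First, the $\SVP$-to-$\BDD$ reduction of~\cite{CCL18} (Theorem~\ref{theorem_enlargeBDD}) does \emph{not} produce a polynomial number of $\BDD$ instances. It enumerates all lattice points in a ball of radius $p\alpha\lambda_1$ by making $p^n$ calls to an $\alpha$-$\BDD$ oracle, where $p=\lceil 1/\alpha\rceil$. Second, the $\BDD$-from-$\DGS$ decoder of~\cite{DRS14} (Theorem~\ref{theorem_BDDtoDGS}) does \emph{not} use $\poly(n)$ Gaussian samples: it needs $m=\mathcal{O}(n\log(1/\eps)/\sqrt{\eps})$ samples at width $\eta_\eps$, and the decoding radius it achieves depends on $\eps$ through Lemma~\ref{lemma_smoothinglambda}. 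To get any useful $\alpha$ here one must take $\eps$ exponentially small, so $m$ is exponential as well. In the paper, $\eps=q^{-32n/q^2}$, which yields a decoding coefficient of only $\alpha\ge 0.1/q$ (Theorem~\ref{thm:BDD}); each $\BDD$ call then consumes $m=q^{16n/q^2+o(n)}$ samples, and the $\SVP$ step requires $(10q)^n$ such calls. The final accounting is $q^{13n+o(n)}+(10q)^n\cdot q^{16n/q^2+o(n)}=q^{13n+o(n)}$: the $\DGS$ preprocessing does dominate, but only because the other two exponential terms happen to be smaller, not because they are polynomial.

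The ``parameter matching'' you flag as a secondary concern is therefore the main substance: one must actually compute the decoding coefficient $\alpha$ obtainable from samples at $\eta_\eps$ with $\eps=q^{-\Theta(n/q^2)}$, and then check that $(\lceil 1/\alpha\rceil)^n\cdot m$ fits under $q^{13n+o(n)}$. Without this calculation the argument is incomplete.
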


If we take $k=q^2$, then the time complexity of the previous $\svp$ algorithm becomes $k^{6.5n+o(n)}$ and the space complexity $\poly(n)\cdot k^{(8n/k)}$. Our tradeoff is thus the same (up to a constant in the exponents) as what was claimed by Kirchner and Fouque~\cite{Kirchner16} and proven in~\cite{HK17} {\em under heuristic assumptions}.

\medskip

Our second result is a quantum algorithm for $\SVP$ that improves over the current {\em fastest quantum algorithm} for $\SVP$~\cite{ADRSD15} (Notice that the algorithm in~\cite{ADRSD15} is still the fastest classical algorithm for $\SVP$). 

\begin{theorem}[Quantum Algorithm for $\SVP$]\label{theorem_main_algorithm_quantum}
There is a quantum algorithm that solves $\SVP$ in $2^{0.950n+o(n)}$ time and classical $2^{0.5n+o(n)}$ space with an additional $\poly(N)$ qubits. In the Quantum Random Access Memory (QRAM) model, there is an algorithm that solves $\SVP$ in $2^{0.8345n+o(n)}$ time and require a QRAM of size $2^{0.293n+o(n)}$, $\poly(n)$ qubits and $2^{0.5n+o(n)}$ classical space. 
\end{theorem}

Our third result is a classical algorithm for $\SVP$ that improves over the  algorithm from~\cite{CCL18} and results in the fastest classical algorithm that has a space complexity $2^{0.5n+o(n)}$. 
\begin{theorem}[Algorithm for $\SVP$ with $2^{0.5n+o(n)}$ space]\label{theorem_main_algorithm_classical}
There is a classical algorithm that solves $\SVP$ in $2^{1.669n+o(n)}$ time and $2^{0.5n+o(n)}$ space.
\end{theorem}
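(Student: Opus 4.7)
The plan is to follow the blueprint of~\cite{CCL18} that turns $\SVP$ into a bounded number of $\BDD$ instances solved via discrete Gaussian sampling, and to tighten two of its ingredients so as to drop the running-time exponent from $2.05$ to $1.740$ while keeping the space at $2^{0.5n+o(n)}$. First, I would reduce $\SVP$ to $\BDD$ via coset enumeration: passing to the sublattice $2\cL$, a shortest vector of $\cL$ must lie in a coset $v + 2\cL$ for some representative $v \in \cL$ of length at most $\lambda_1(\cL)$, and the number of non-trivial cosets that can contain such a representative is controlled by the kissing number $\tau(\cL) \le 2^{0.402n+o(n)}$. This gives at most $2^{0.402n+o(n)}$ cosets to enumerate, and recovering the shortest vector reduces to solving a single $\BDD$ instance on $2\cL$ at decoding radius $\lambda_1/2$ in each of them.

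Second, I would solve each $\BDD$ instance using the reduction of~\cite{DRS14}: once a pool of discrete Gaussian samples from $D_{2\cL,s}$ with $s$ slightly above the smoothing parameter has been generated, each $\BDD$ query can be answered in time polynomial in the number of samples consumed per call. The pool is produced once as preprocessing, using a variant of the above-smoothing $\DGS$ sampler of~\cite{ADRSD15} or Theorem~\ref{theorem_main_dgs} with a suitable choice of $q$, and its size is tuned to exactly $2^{0.5n+o(n)}$ so that storing it fits the space budget. The same pool is then reused across all $2^{0.402n+o(n)}$ $\BDD$ calls via a union bound on the success probability, exploiting the fact that DRS14's decoder treats the samples as non-adaptive advice.

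The hard part will be balancing the Gaussian parameter $s$, the number of samples kept, and the per-call cost of $\BDD$ so that the product of the number of cosets and the per-call time lands at exactly $2^{1.740n+o(n)}$. Using only $2^{0.5n}$ samples limits the decoding radius achievable by the DRS14 decoder, so $s$ must be chosen large enough that the target distance $\lambda_1/2$ lies within the achievable radius, yet small enough that the resulting per-instance work, when multiplied by $\tau(\cL) \le 2^{0.402n+o(n)}$, reaches the target exponent $1.740$. A subsidiary subtlety is to verify that a single stored pool suffices with high probability for all coset queries simultaneously, which amounts to taking the sample count slightly larger than $\log \tau(\cL)$ times the per-call requirement, absorbed into the $o(n)$ term.
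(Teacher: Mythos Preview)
Your proposal has a genuine gap at its very first step, and the paper's route is quite different.

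\textbf{The coset-enumeration step is circular.} You correctly observe that a shortest vector $\vect{w}$ lies in some coset of $\cL/2\cL$ and that at most $2^{0.402n+o(n)}$ cosets can contain a representative of length $\le\lambda_1(\cL)$. But you give no mechanism to \emph{enumerate} those cosets. Knowing there are few of them does not help unless you can list them, and listing the cosets with a short representative is exactly the problem of listing the short lattice vectors --- which is what we are trying to solve. The framework of~\cite{CCL18} (Theorem~\ref{theorem_enlargeBDD}) enumerates all of $\mathbb{Z}_p^n$ precisely because it does not know in advance which cosets matter.

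\textbf{The $\BDD$ radius is also off.} If you did know the right coset, recovering $\vect{w}$ means solving $\BDD$ on $2\cL$ with target any representative of $\vect{w}+2\cL$; the distance to $2\cL$ is $\|\vect{w}\|=\lambda_1(\cL)=\tfrac{1}{2}\lambda_1(2\cL)$. That is $\alpha=1/2$, not ``decoding radius $\lambda_1/2$'' (which would be $\alpha=1/4$). The reduction of~\cite{DRS14} never reaches $\alpha=1/2$; moreover with only $2^{0.5n}$ samples the achievable $\alpha$ is well below that. So even granting the enumeration, the per-coset $\BDD$ instance is not solvable with the resources you allocate. (This also explains why your back-of-the-envelope product would come out far below $2^{1.740n}$ --- a sign something is wrong, not a bonus.)

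\textbf{What the paper actually does.} The paper keeps the $p^n$-enumeration of~\cite{CCL18} but with $p=2$, which by itself only covers a ball of radius $2\alpha\lambda_1<\lambda_1$ around the origin. The new idea is to place the enumeration centre at a \emph{random} point $\vect{t}$ at distance just under $\alpha\lambda_1$ from the origin, so the $2\alpha\lambda_1$-ball around $\vect{t}$ intersects the $\lambda_1$-sphere in a spherical cap; Theorem~\ref{theorem_covering} shows this cap contains a shortest vector with probability $\sin^{n}\phi$. One then optimises $\alpha$: with $\alpha\approx 0.4097$ (Theorem~\ref{theorem_41BDD}, built from $\DGS$ at $\eta_\varepsilon$ via Lemma~\ref{lem:dgs_at_smoothing}), each $\BDD$ call costs $2^{0.4108n+o(n)}$, the cap has mass $2^{-0.3298n}$, and the total is $2^{n}\cdot 2^{0.4108n}\cdot 2^{0.3298n}=2^{1.741n+o(n)}$ with $2^{0.5n+o(n)}$ space. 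The kissing-number bound $2^{0.402n}$ enters only through Lemma~\ref{lemma_smoothinglambda} when relating $\eta_\varepsilon(\cL^*)$ to $\lambda_1(\cL)$, not as a count of cosets to enumerate.
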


The time complexity of our second and third results are obtained using a known upper bound on a quantity $\beta(\cL)^n$ related to the kissing number, which depends on
the lattice and is always upper bounded by $2^{0.402n}$.
We analysed the dependency of the running time of our algorithm
in this quantity $\beta(\cL)$ and plotted (Figure~\ref{fig:complexity_svp_all}) the graph
of the complexity exponent as a function of $\beta(\cL)$.
In practice, for most lattices, $\beta(\cL)^n$ is often a $2^{o(n)}$ \footnote{  
Please refer to our discussion on kissing number in Section~\ref{section_preliminaries}.}.
In this case, the running time of our algorithm is significantly better than when
using the generic upper bound on $\beta(\cL)$.

\begin{theorem}
    For any family $(\cL_n)_n$ of full-rank lattices such that $\cL_n\subseteq\mathbb{R}^n$
    and $\beta(\cL_n)^n=2^{o(n)}$, there
    are algorithms to solve the $\svp$ on $\cL_n$:
    \begin{itemize}
        \item in classical time $2^{1.292n+o(n)}$ and space $2^{0.5n}$,
        \item in quantum time $2^{0.750n+o(n)}$, classical space $2^{0.5n}$
            and $\poly(n)$ qubits,
        \item in quantum time $2^{0.667n+o(n)}$, classical space $2^{0.5n}$,
            $\poly(n)$ qubits and using a QRAM of size $2^{0.167n+o(n)}$.
    \end{itemize}
\end{theorem}

Below in Table \ref{table:classical_algorithms} and \ref{table:quantum_algorithms}, we summarize known provable classical and quantum algorithms respectively. Note that all the classical algorithms are also quantum algorithms but they don't use any quantum power.

\begin{table}
\centering
\setlength{\tabcolsep}{20pt}
\renewcommand{\arraystretch}{1.5}
\begin{tabular}{|c| c| c| c|} 
 \hline
 \textbf{Time Complexity}  & \textbf{Space Complexity} & \textbf{Reference} \\ [0.5ex] 
 \hline
 $n^{\frac{n}{2e}+o(n)}$ & $\poly(n)$ & \cite{Kannan87,HS07}  \\
 \hline
 $2^{\mathcal{O}(n)}$ & $2^{\mathcal{O}(n)}$ & \cite{AKS01}  \\
 \hline
 $2^{2.465n+o(n)}$ & $2^{1.233n+o(n)}$ & \cite{PS09}\\ 
 \hline
 $2^{2n+o(n)}$ & $2^{n+o(n)}$ & \cite{MV10}  \\
 \hline
 $2^{n+o(n)}$ & $2^{n+o(n)}$ & \cite{ADRSD15}  \\
 \hline
 $2^{2.05n+o(n)}$ & $2^{0.5n+o(n)}$ & \cite{CCL18}\\ 
 \hline
 $2^{1.669n+o(n)}$ & $2^{0.5n+o(n)}$ & This paper\\ 
 [1ex] 
 \hline
\end{tabular}
\caption{Classical algorithms for Shortest vector problem.}
\label{table:classical_algorithms}
\end{table}

\begin{table}
\centering
\setlength{\tabcolsep}{20pt}
\renewcommand{\arraystretch}{1.5}
\begin{tabular}{|c| c| c| c|} 
 \hline
 \textbf{Time Complexity}  & \textbf{Space Complexity} & \textbf{Reference} \\ [0.5ex] 
 \hline
 $2^{1.799n+o(n)}$ & $2^{1.286n+o(n)}$  QRAM & \cite{LMP15}  \\
 \hline
 $2^{1.2553n+o(n)}$ & $2^{0.5n+o(n)}$ & \cite{CCL18}  \\
  \hline
 $2^{0.950n+o(n)}$ & $2^{0.5n+o(n)}$ & This paper\\
 \hline
 $2^{0.835n+o(n)}$ & $2^{0.5n+o(n)}$ Classical space and $2^{0.293n+o(n)}$ QRAM & This paper\\ [1ex] 
 \hline
\end{tabular}
\caption{Quantum algorithms for Shortest vector problem.}
\label{table:quantum_algorithms}
\end{table}

\topic{Roadmap}
We give a high-level overview of our proofs in Section~\ref{sec:overview}. We compare our results with the previous known algorithms that claim/conjecture a time-space tradeoff for $\SVP$ in Section~\ref{section_comparison}.
Section~\ref{section_preliminaries}  contain some preliminaries on lattices. The proofs of the time-space tradeoff for Discrete Gaussian sampling above the smoothing parameter and the time-space tradeoff for $\SVP$ are given in Section \ref{tradeoff}. In Section~\ref{section:BDDP-in-QRAM}, we present a Quantum algorithm for $\BDD$ with preprocessing, that we will use later in our quantum Algorithm for $\SVP$. In section~\ref{section-:algorithm-BDD}, we present improved algorithms for $\BDD$. Our quantum and classical algorithms for solving $\SVP$ with space complexity $2^{0.5n+o(n)}$ are presented in Section~\ref{sec:spherical_caps} respectively. Section~\ref{section_relation_kissing} shows how the time complexity of our algorithms varies on a quantity related to the kissing number. Section~\ref{section-:ZLIP} gives a direct application of our results to the Lattice Isomorphism Problem in the case of the trivial lattice $\mZ^n$ ($\ZLIP$).

\begin{remark} The paper is an extended version of \cite{ACKS21}, published at the 38th International Symposium on Theoretical Aspects of Computer Science (STACS 2021). Compared to \cite{ACKS21}, this paper gives a quantum speedup for Bounded distance decoding in processing using QRAM. Using this quantum speedup, we get a much faster quantum algorithm for $\SVP$. We also realised that one of the conditions in Theorem~\ref{theorem_enlargeBDD} is unnecessary \footnote{See Remark~\ref{rm:unnecessary_condition}}, this leads to a better time complexity in all our algorithms compared to the previous version. Furthermore, we study the dependency of our algorithms on a quantity related to the kissing number in Section~\ref{section_relation_kissing}.
\end{remark}
\subsection{Proof overview}\label{sec:overview}

We now include a high-level description of our proofs. Before describing our proof ideas, we emphasize that it was shown in~\cite{DRS14,ADRSD15} that given an algorithm for $\DGS$ a constant factor $c$ above the smoothing parameter, we can solve the problem of $\BDD$ where the target vector is within distance $\alpha \lambda_1(\cL)$ of the lattice, where the constant $\alpha < 0.5$ depends on the constant $c$. Additionally, using~\cite{CCL18}, one can enumerate all lattice points within distance $p\delta$ to a target $\vect{t}$ by querying $p^n$ times a $\BDD$ oracle with decoding distance $\delta$ (or $p^{n/2}$ times if we are given a quantum $\BDD$ oracle). Thus, by choosing $p = \lceil \lambda_1(\cL)/\delta \rceil$ and $\vect{t} = \vect{0}$, an algorithm for $\BDD$ immediately gives us an algorithm for $\SVP$. Therefore, it suffices to give an algorithm for $\DGS$ above the smoothing parameter.

\topic{Time-space tradeoff for $\DGS$ above smoothing} 

Recall that  efficient algorithms are known for sampling from a discrete Gaussian with a large enough parameter (width) \cite{Klein00,GPV08,BLPRS13}.
In~\cite{ADRSD15}, the authors begin by sampling $N = 2^{n+o(n)}$ vectors from the Discrete Gaussian distribution with (large) parameter $s$  and then look for pairs of vectors whose sum is in $2\cL$, or equivalently pairs of vectors that lie in the same coset $\vect{c} \in \cL /2\cL$. Since there are $2^n$ cosets, if we take $\Omega(2^n)$ samples from $D_{\cL, s}$, almost all of the resulting vectors (except at most $2^n$ vectors) will be paired and are statistically close to independent samples from the distribution $D_{\cL, s/\sqrt{2}}$, provided that the parameter $s$ is sufficiently above the smoothing parameter.

To reduce the space complexity, we modify the idea of the algorithm by generating random samples and checking if the summation of $d$ of those samples is in $q\cL$ for some integer $q$. Intuitively, if we start with  two lists of vectors ($L_1$ and $L_2$) of size $q^{\mathcal{O}(n/d)}$ from $D_{\cL, s}$, where $s$ is sufficiently above the smoothing parameter, each of these vectors is contained in any coset $q\cL + \vect{c}$ for any $\vect{c} \in \cL/q\cL$ with probability roughly $1/q^n$. We therefore expect that the coset of a uniformly random d-combination of vectors from $L_2$ is uniformly distributed in $\cL/q\cL$. The proof of this statement follows from the Leftover Hash Lemma~\cite{impagliazzo1989pseudo}. We therefore expect that for any vector $\vect{v}\in L_1$, with high probability, there is a set of $d$ vectors $\vect{x}_1, \ldots, \vect{x}_d$ in $L_2$ that sum to a vector in $q \cL+\vect{v}$, and hence $\frac{1}{q}\left(\sum_{i=1}^d \vect{x}_i -\vect{v}\right) \in \cL$. A lemma by Micciancio and Peikert (\cite{MP13}) shows that this vector is statistically close to a sample from the distribution, $D_{\cL, s\sqrt{d+1}/q}$. We can find such a combination by trying all subsets of $d$ vectors. 

We would like to repeat this and find $q^{\mathcal{O}(n/d)}$ (nearly) independent vectors in $q\cL$. It is not immediately clear how to continue since, in order to guarantee independence, one would not want to reuse the already used vectors $\vect{x}_1, \ldots, \vect{x}_d$ and conditioned on the choice of these vectors, the distribution of the cosets containing the remaining vectors is disturbed and is no longer nearly uniform. By using a simple combinatorial argument, we show that even after removing any $1/\poly(d)$ fraction of vectors from the list $L_2$, the $d$-combination of vectors in $L_2$ has at least $cq^n$ different cosets. This is sufficient to output $q^{\mathcal{O}(n/d)}$ independent vectors in $q\cL$ with overwhelming probability.

\topic{Quantum algorithm for $\BDDP$ in QRAM} In \cite{DRS14}, Dadush, Regev, and Stephens-Davidowitz gave an algorithm for $\BDD$ with preprocessing (or $\BDDP$), which requires advice containing vectors sampled from discrete Gaussian sampling over the dual lattice. The idea is to use the \emph{periodic Gaussian function} $f$ (defined in Section~\ref{section:BDDP-in-QRAM})  to go near to the closest lattice vector. The function $f$ is periodic over the lattice, and its value depends only on the distance between the input vector and the lattice. Do the gradient ascent by iteratively updating the target vector using values of $\nabla f$ and $f$ such that the distance of the target vector from the closest lattice vector decreases. In this work, we show that we can reduce the time complexity of this algorithm by using the quantum amplitude estimation technique (given in \cite{BHMT02}) with the assumption that the advice string is stored in QRAM. More specifically, we show that just by using $\mathcal{O}(\sqrt{N})$ arithmetic operations in QRAM, we can solve $\BDDP$ where $N$ is the size of the advice string required in \cite{DRS14}.

\topic{An improved algorithm for DGS at the smoothing parameter}
The $\BDD$ to $\DGS$ reduction from \cite{DRS14} requires samples from
$D_{\cL,\eta_\eps(\cL)}$ for $\eps = 2^{-cn}$ for some constant $c$.
In~\cite{ADRSD15}, the authors gave an algorithm that runs in time $2^{n/2+o(n)}$ and outputs $2^{n/2+o(n)}$ samples from $ D_{\cL,s}$ for any $s \geq \sqrt{2}\eta_{0.5}(\cL)$, i.e. a factor $\sqrt{2}$ above the
smoothing parameter. In order to obtain samples at the smoothing
parameter, we construct a dense lattice $\cL'$ of smaller smoothing
parameter than $\cL$. We then sample $2^{0.5n+o(n)}$ vectors from
$D_{\cL',s}$ and reject those that are not in $\cL$. This allows us
to obtain an improved algorithm for BDD with preprocessing.

\topic{Covering surface of a ball by spherical caps}
This result improves the quantum algorithm from~\cite{CCL18}.
As we mentioned above, one can enumerate all lattice points within a $p\delta$ distance to a target $\vect{t}$ by querying $p^n$ times a $\BDD$ oracle with decoding distance $\delta$. Our algorithm for $\BDD$ is obtained by preparing samples from the discrete Gaussian distribution. However, note that the decoding distance of $\BDD$ oracle built by discrete Gaussian samples as shown in~\cite{DRS14} is successful if the target vector is within a radius  $\alpha \lambda_1(\cL)$ for $\alpha < 1/2$ (there is a tradeoff between $\alpha$ and the number of $\DGS$ samples needed), and therefore, if we choose $\vect{t}$ to be $\vect{0}$, as we do in the other algorithms mentioned above, then $p$ has to be at least $3$ to ensure that the shortest vector is one of the vectors output by the enumeration algorithm mentioned above. 
We observe here that if we choose a target $\vect{t}$ to be a random vector on a sphere of a well-chosen radius centered  at the origin, then the shortest vector will be within a radius $2\delta$ from the target $\vect{t}$ with some probability $P$, and thus we can find the shortest vector by making $2^n/P$ calls to the $\BDD$ oracle. An appropriate\footnote{The optimal choice of $\alpha$ is obtained
by numerically optimisation, see Section~\ref{section_relation_kissing}.} choice of the target $\vect{t}$ and the factor $\alpha$ gives an algorithm that runs in time $2^n \cdot 2^{0.669n + o(n)}$.
We then explain how to obtain a quantum quantum speed up of this algorithm
(with its corresponding optimized value of $\alpha$ which is different from the classical one) that runs in time $2^{n/2} \cdot 2^{0.45n+o(n)}$
with a polynomial number of qubits. Finally, if we can store the DGS
samples in a QRAM, we can obtain a further speed up and an algorithm
that runs in time $2^{n/2} \cdot 2^{0.3345n+o(n)}$ and uses a QRAM
of size $2^{0.293n+o(n)}$.

\topic{Dependency on a quantity related to the kissing number}
The running time of the above algorithms crucially depends on a quantity related to the kissing number of the input lattice.
This quantity plays a role in the $\BDD$ to $\DGS$ reduction when relating the decoding radius $\alpha$ to the $\varepsilon$ when sampling at the smoothing parameter $\eta_\varepsilon$. Our algorithms takes significantly less time for the smaller values of this quantity.
However, the only known upper bound on this quantity seems to be
very pessimistic for most lattices. Since we have used this upper bound to derive
the complexity of our algorithm (except in Section~\ref{section_relation_kissing}),
this means that the actual running time of this algorithm might be much better
for most lattices. For a more elaborate discussion on this, see Section~\ref{section_relation_kissing}. 
As far as we know, this is the first time that the time/space complexity dependency on a quantity related to the input lattice has been investigated. For example, the time complexity of the $\SVP$ algorithm in \cite{MV10,PS09} depends on the geometric kissing number which is a universal quantity depending only on the dimension $n$, for which we know a lower bound of $2^{0.2075n}$.

\subsection{Comparison with previous algorithms giving a time/space tradeoff}\label{section_comparison}

Kirchner and Fouque~\cite{Kirchner16} begin their algorithm by sampling an exponential number of vectors from the discrete Gaussian distribution $D_{\cL,s}$ and then using a pigeon-hole principle, show that there are two distinct sums of $d$ vectors (for an appropriate $d$) that are equal mod $q \cL$, for some large enough integer $q$. This results in a $\{-1,0,1\}$ combination of input lattice vectors (of  Hamming weight at most $2d$) in $q \cL$; a similar idea was used in~\cite{BLS16} to construct their tuple sieving algorithm. In both algorithms, it is difficult to control (i) the distribution of the resulting vectors, (ii) the dependence between resulting vectors. 

Bai et al~\cite{BLS16} get around the above issues by making a heuristic assumption that the resulting vectors behave like
independent samples from a ``nice enough'' distribution.
\cite{HK17} proved that this heuristic indeed leads to the
time-memory tradeoff conjectured in \cite{BLS16}, but don't prove correctness.

Kirchner and Fouque, on the other hand, use the pigeon-hole principle to argue that there exist coefficients $\alpha_1, \ldots, \alpha_{2d} \in \{-1,0,1\}$ and $2d$ lattice vectors in the set of input vectors $\vect{v}_1, \ldots, \vect{v}_{2d}$ such that $\frac{\sum_{i=1}^{2d} \alpha_i \vect{v}_i}{q} \in \cL$. It is then stated that $\frac{\sum_{i=1}^{2d} \alpha_i \vect{v}_i}{q}$ has a nice enough Discrete Gaussian distribution. We observe that while the resulting distribution obtained will indeed be close to a discrete Gaussian distribution, we have no control over the parameter $s$ of this distribution and it can be anywhere between $1/q$ and $\sqrt{2d}/q$ depending on the number of nonzero coordinates in $(\alpha_1, \ldots, \alpha_{2d})$.  
For instance, let $\vect{v}_1,\cdots,\vect{v}_5$ be input vectors which are all from $D_{\cL,s}$ for some large $s$ and we want to find the collision in $q\cL$ for some positive integer $q$. Suppose that we find a combination $\vect{w}_1=\vect{v}_1+\vect{v}_2-(\vect{v}_1+\vect{v}_3) \in q\cL$ and another combination $\vect{w}_2=\vect{v}_2+\vect{v}_3-(\vect{v}_4+\vect{v}_5)\in q\cL$, then by Theorem \ref{theorem_SIS}, one would expect that $\vect{w}_1/q\sim D_{\cL,\sqrt{2}s/q}$ and $\vect{w}_2/q\sim D_{\cL,\sqrt{4}s/q}$. This means that the output of the exhaustive search algorithm by Kirchner and Fouque will behave like samples taken from discrete Gaussian distributions with different parameters, making it difficult to keep track of the standard deviation after several steps of the algorithm, and to obtain samples from the Discrete Gaussian distribution at the desired parameter above the smoothing parameter. 
We overcome this issue by showing that there is a combination of the input vectors with a {\em fixed} Hamming weight that is in $q\cL$ as mentioned in Section~\ref{sec:overview}.

There are other technical details that were overlooked in~\cite{Kirchner16}. In particular, one needs be careful with respect to the errors, both in the probability of failure and the statistical distance of the input/output. Indeed the algorithm performs an exponential number
of steps, it is not enough to show that the algorithm succeeds with
``overwhelming probability'' and that the output has a  
``negligible statistical distance'' from the desired output. However, many of the claimed error bounds in ~\cite{Kirchner16}
are not proven making it difficult to verify the proof of the Exhaustive
Search (Theorem~3.4), and of Theorem~3.6.

\section{Preliminaries}\label{section_preliminaries}

Let $\nat = \{1,2,\ldots, \}$. We use bold letters $\vect{x}$ for vectors and denote a vector's coordinates with indices $x_i$. We use $\log$ to represent the logarithm base 2 and $\ln$ to represent the natural logarithm. Throughout the paper, $n$ will always be the dimension of the ambient space $\real^n$.
We will denote the principal branch of Lambert's W function by $W(x)$,
see \cite{CGHJK96} for an introduction to this function.

\topic{Lattices}
A \emph{lattice} $\cL$ is a discrete subgroup of $\mR^{m}$,
or equivalently the set \[\cL(\vect{b}_{1},\dots,\vect{b}_{n})=\left\{ \sum_{i=1}^{n}x_{i}\vect{b}_{i} ~:~ x_{i}\in\mZ\right\} \]
of all integer combinations of $n$ linearly independent vectors
$\vect{b}_{1},\dots,\vect{b}_{n} \in \mR^{m}$. Such $\vect{b}_i$'s form a \emph{basis} of $\cL$.
The lattice $\cL$ is said to be \emph{full-rank} if $n=m$.
We denote by $\lambda_{1}(\cL)$ the first minimum of $\cL$, defined as the length of a shortest non-zero vector of $\cL$.

For a rank $n$ lattice $\lat \subset \mR^n$, the {\em dual lattice}, denoted $\lat^*$, is defined as the set of all points in $\mathsf{span} (\lat)$ that have integer inner products with all lattice points,
\[ \lat^* = \{ \vect{w} \in \mathsf{span}(\lat) : \forall \vect{y} \in \lat, \langle \vect{w},\vect{y}\rangle \in \mZ \}\;. \]
Similarly, for a lattice basis $\basis = (\vect{b}_1,\ldots, \vect{b}_n)$, we define the dual basis $\basis^*=(\vect{b}_1^*,\ldots, \vect{b}_n^*)$ to be the unique set of vectors in $\mathsf{span}(\lat)$ satisfying $\langle \vect{b}_i^*, \vect{b}_j\rangle = 1 $ if $i = j$, and $0$, otherwise. It is easy to show that $\lat^*$ is itself a rank $n$ lattice and $\basis^*$ is a basis of $\lat^*$. Given a lattice  $\basis = (\vect{b}_1,\ldots, \vect{b}_n)$, we denote $\|\basis\|_2= \max\limits_{i} \|\vect{b}_i \|$.

\topic{Kissing number and related quantities}
For any lattice $\cL \subset \real^n$ and $d>0$, let $N(\cL,r)$ denote the number of nonzero lattice
vectors of length at most $r$.
A natural question is to bound this quantity in terms of $r$. When $r<\lambda_1(\cL)$, only the origin
lies inside the ball so $N(\cL,r)=0$. When $r=\lambda_1(\cL)$, this quantity 
is known as the kissing number $\tau(\cL)$ of the lattice:
\[\label{sym:kissing_number}
    \tau(\cL)=|\set{\vect{x}\in\cL:\|\vect{x}\|=\lambda_1(\cL)}|.
\]
Finally when $r\to\infty$, $N(\cL,r)=\tfrac{r^n\vol(B_{n}(1))}{\det\cL}+o(r^n)$ by the geometric interpretation
of the determinant of a lattice. The precise behavior for intermediate values of $r$, however, is
unclear and for that reason we introduce the quantity
\begin{equation}
\label{sym:gamma_number}
    \gamma(\cL)=\inf \{\gamma:\forall r\geq 1,\; N(\cL,r\lambda_1(\cL))\leq \gamma \cdot r^n\}.
\end{equation}
It is clear by the definition that $\gamma(\cL)\geq\tau(\cL)$.
The best known upper bound on this quantity comes from the breakthrough work of Kabatyanskii and Levenshtein \cite{KL78}
which implies \cite{PS09} that
\begin{equation}\label{eq:bound_gamma_KL}
    \gamma(\cL)\leqslant 2^{0.402n}.
\end{equation}

\begin{remark}
To the best of our knowledge, we do not know any family of lattices with kissing number $2^{\Omega(n)}$.\footnote{Serge Vl{\u{a}}du{\c{t}}~\cite{vluaduct2019lattices} gave a construction for a set of lattices and claimed their kissing number is $2^{0.0338n+o(n)}$, while Bennett,  Golovnev, and Stephens-Davidowitz~\cite{BGS24difficulty} recently showed Vl{\u{a}}du{\c{t}}'s construction is invalid. Therefore, showing the existence of a family of lattices with exponential kissing numbers remains an open problem.}
Given this, it seems reasonable to conjecture that for any lattice that one would come across in practice, the kissing number is $2^{o(n)}$. Also, 
given the close connection between $\tau(\cL)$ and $\gamma(\cL)$, we conjecture
that $\gamma(\cL)$ is also $2^{o(n)}$ for almost all lattices. We leave it as an open problem whether there exists a family of lattices for which $\gamma(\cL)$ is exponential in the dimension of the lattice. 
In view of the fact that $\gamma(\cL)$ can be anywhere between $1$ and $2^{0.402n}$,
we will study the dependence of the time complexity of our algorithms on $\gamma(\cL)$ by introducing
\begin{equation}
\label{sym:beta_number}
    \beta(\cL)=\gamma(\cL)^{1/n}.
\end{equation}
The upper bound above can then be reformulated as $\beta(\cL)\leqslant 2^{0.402}$ for any lattice $\cL$. 
\end{remark}

\topic{Probability distributions}
Given two random variables $X$ and $Y$ on a set $E$, we denote by $\sddist$ the
\emph{statistical distance} between $X$ and $Y$, which is defined by
\begin{align*}
    \sddist(X,Y) &=\tfrac{1}{2}\sum_{z\in E}\left|\Prob{X}{X=z}-\Prob{Y}{Y=z}\right| \\
    &= \sum_{z\in E\: : \:\Prob{X}{X=z}> \Prob{Y}{Y=z} }\left(\Prob{X}{X=z}-\Prob{Y}{Y=z}\right) \;.
\end{align*}
We write $X$ is $\eps$-close to $Y$ to denote that the statistical distance between $X$ and $Y$ is at most $\eps$.
Given a finite set $E$, we denote by $U_E$ an uniform random variable on $E$, i.e., for all $x\in E$, $\Prob{U_E}{U_E=x}=\tfrac{1}{|E|}$. 

\topic{Discrete Gaussian Distribution}
For any $s>0$, define $\rho_s(\vect{x})=\exp(-\pi\twonorm{\vect{x}}^2/s^2)$ for all $\vect{x}\in\mR^n$.
We write $\rho$ for $\rho_1$. For a discrete set $S$, we extend $\rho$ to sets by
$\rho_s(S)=\sum_{\vect{x}\in S}\rho_s(\vect{x})$. Given a lattice $\lattice{L}$,
the \emph{discrete Gaussian} $\DGauss{\lattice{L}}{s}$ is the distribution over $\lattice{L}$
such that the probability of a vector $\vect{y}\in\lattice{L}$ is proportional to $\rho_s(\vect{y})$:
\[
  \Prob{X\sim\DGauss{\lattice{L}}{s}}{X=\vect{y}}=\frac{\rho_s(\vect{y})}{\rho_s(\lattice{L})}.
\]

\topic{Data processing inequality}\label{sec:data_process_ineq} When analyzing the output distribution of an algorithm, it is
often convenient to assume that the input distribution is ideal (e.g. uniform). On the other hand,
we will want to run the algorithm on non-ideal input distribution (e.g. with a slight deviation
from uniform). In this case, the output distribution will deviate from the ideal output distribution
and it is important to quantify this divergence. The statistical distance satisfies the following
useful inequality, known as the \emph{data processing inequality}:
\[
    \sddist(f(X),f(Y))\leqslant \sddist(X,Y)
\]
for any two distributions $X$ and $Y$ and any (possibly randomized) algorithm $f$. In other words,
the error does not increase under the application of $f$.

\subsection{Lattice problems}

The following problems play a central role in this paper.
For convenience, when we discuss the running time of algorithms solving the problems
below, we ignore polynomial factors in the bit-length of the individual input basis vectors (i.e. we assume that the input basis has bit-size polynomial in the ambient
dimension $n$).

\begin{definition}
\label{def:dgs}
For $\delta = \delta(n) \geq 0$, $\sigma$ a function that maps lattices to non-negative real numbers, and $m = m(n) \in \mathbb{N}$, $\delta\text{-}\DGS_\sigma^m $ (the Discrete Gaussian Sampling problem) is defined as follows: 
The input is a basis $\basis$ for a lattice $\lat \subset \mR^n$ and a parameter $s > \sigma(\lat)$. The goal is to output a sequence of $m$ vectors whose joint distribution is $\delta$-close to $m$ independent samples from $D_{\lat, s}$.
\end{definition}

We omit the parameter $\delta$ if $\delta = 0$, and the parameter $m$ if $m = 1$. We stress that $\delta$ bounds the statistical distance between the \emph{joint} distribution of the output vectors and $m$ independent samples from $D_{\lat,s}$. We consider the following lattice problems.

\begin{definition}
The search problem $\SVP$ (Shortest Vector Problem) is defined as follows: The input is a basis $\basis$ for a lattice $\lat \subset \mR^n$. The goal is to output a vector $\vect{y} \in \lat$ with $\|\vect{y}\| = \lambda_1(\lat)$.
\end{definition}

\begin{definition}
The search problem $\CVP$ (Closest Vector Problem) is defined as follows: The input is a basis $\basis$ for a lattice $\lat \subset \mR^n$ and a target vector $\vect{t} \in \mR^n$. The goal is to output a vector $\vect{y} \in \lat $ with $\|\vect{y} - \vect{t}\| =  \mathsf{dist}(\vect{t}, \lat)$.
\end{definition}

\begin{definition}
For $\alpha = \alpha(n) < 1/2$, the search problem $\alpha\text{-}\BDD$ (Bounded Distance Decoding) is defined as follows: The input is a basis $\basis$ for a lattice $\lat \subset \mR^n$ and a target vector $\vect{t} \in \mR^n$ with $\mathsf{dist}(\vect{t},\cL) \leq \alpha \cdot \lambda_1(\lat)$. The goal is to output a vector $\vect{y} \in \lat $ with $\|\vect{y} - \vect{t}\| =  \mathsf{dist}(\vect{t}, \lat)$.
\end{definition}

Note that while our other problems become more difficult as the approximation factor $\gamma$ becomes smaller, $\alpha\text{-}\BDD$ becomes more difficult as $\alpha$ gets larger. 

\begin{definition}
The search problems $\gamma$-$\CVPP$ and $\gamma$-$\BDDP$ are the preprocessing analogues of $\gamma$-$\CVP$ and $\gamma$-$\BDD$ respectively. The input for preprocessing is a basis $\mathbf{B}$ of lattice $\cL \subset \real^n$. Given the advice from the preprocessing algorithm and the target vector $\vect{t}\in \cR^n$. The goal is to return solution of $\gamma$-$\CVP$ and $\gamma$-$\BDD$ respectively. The preprocessing algorithm is allowed to take arbitrary time.
\end{definition}

For a lattice $\lattice{L}$ and $\varepsilon>0$, the \emph{smoothing parameter} $\eta_\varepsilon(\lattice{L})$
is the smallest $s$ such that $\rho_{1/s}(\dual{\lattice{L}})=1+\varepsilon$. Recall that if $\lattice{L}$
is a lattice and $\vect{v}\in\lattice{L}$ then $\rho_s(\lattice{L}+\vect{v})=\rho_s(\lattice{L})$ for all $s$. The \emph{smoothing parameter} has the following well-known property.

\begin{lemma}[{\cite[Claim~3.8]{Regev09}}]\label{lemma_above}
	For any lattice $\cL\subset \mathbb{R}^n$, $\vect{c} \in \mathbb{R}^n$, $\epsilon > 0$, and $s \geq \eta_\epsilon(\cL)$,
	\[
	\frac{1 - \epsilon}{1+\epsilon}\leq \frac{\rho_s(\cL + \vect{c})}{\rho_s(\cL)} \leq 1 \;.
	\]
\end{lemma}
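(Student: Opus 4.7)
My plan is to prove both inequalities simultaneously by applying Poisson summation to the sums defining $\rho_s(\cL+\vect c)$ and $\rho_s(\cL)$, converting them into sums over the dual lattice $\cL^\ast$, where the smoothing parameter hypothesis gives direct control.

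First I would record the two ingredients. (i) The Gaussian $\rho_s$ has Fourier transform $\widehat{\rho_s}(\vect w)=s^n\rho_{1/s}(\vect w)$, and translation of the input corresponds to modulation of the transform: the function $\vect x\mapsto \rho_s(\vect x+\vect c)$ has Fourier transform $\vect w\mapsto s^n e^{2\pi i\langle \vect c,\vect w\rangle}\rho_{1/s}(\vect w)$. (ii) Poisson summation for a full-rank lattice $\cL$ gives, for any Schwartz function $f$, $\sum_{\vect x\in \cL} f(\vect x)=\covol(\cL)^{-1}\sum_{\vect w\in\cL^\ast}\widehat f(\vect w)$. (For the non-full-rank case one replaces $\mathbb R^n$ by $\mathrm{span}(\cL)$, with no change to the argument.)

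Applying this to $f(\vect x)=\rho_s(\vect x+\vect c)$ and to $f=\rho_s$, the common factor $s^n/\covol(\cL)$ cancels in the quotient and we obtain
\[
\frac{\rho_s(\cL+\vect c)}{\rho_s(\cL)}=\frac{\sum_{\vect w\in\cL^\ast}e^{2\pi i\langle\vect c,\vect w\rangle}\rho_{1/s}(\vect w)}{\sum_{\vect w\in\cL^\ast}\rho_{1/s}(\vect w)}.
\]
Pairing $\vect w$ with $-\vect w$ (using $\rho_{1/s}(-\vect w)=\rho_{1/s}(\vect w)$) lets me replace the exponential in the numerator by its real part, so
\[
\frac{\rho_s(\cL+\vect c)}{\rho_s(\cL)}=\frac{1+\sum_{\vect w\in\cL^\ast\setminus\{0\}}\cos(2\pi\langle\vect c,\vect w\rangle)\rho_{1/s}(\vect w)}{1+\rho_{1/s}(\cL^\ast\setminus\{0\})}.
\]

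For the upper bound, the inequality $\cos(\cdot)\le 1$ applied term-by-term makes the numerator at most the denominator, giving ratio $\le 1$. For the lower bound, I would use $\cos(\cdot)\ge -1$ to get that the numerator is at least $1-\rho_{1/s}(\cL^\ast\setminus\{0\})$. Now the smoothing-parameter hypothesis $s\ge\eta_\epsilon(\cL)$ translates (by the definition recalled just before the lemma) into $\rho_{1/s}(\cL^\ast\setminus\{0\})\le\epsilon$. Hence the numerator is $\ge 1-\epsilon$ and the denominator is $\le 1+\epsilon$, so the ratio is at least $(1-\epsilon)/(1+\epsilon)$.

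There is no real obstacle here; the only point that needs a brief justification is the switch from the complex exponential to its cosine in the numerator, which could be handled either by the pairing argument above or simply by observing that the numerator, being $\rho_s(\cL+\vect c)$ up to the positive factor $s^n/\covol(\cL)$, is automatically real, so one may take the real part of each summand without change. Everything else reduces to the two inequalities $|\cos|\le 1$ combined with the definition of the smoothing parameter.
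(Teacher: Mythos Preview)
Your argument is correct and is precisely the standard Poisson-summation proof of this claim. The paper itself does not supply a proof of this lemma---it is quoted directly from \cite{Regev09}---so there is nothing further to compare; your write-up is essentially the original argument from that reference.
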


\begin{corollary}
	\label{cor:above_smoothing}
	Let $\cL\subset \mathbb{R}^n$ be a lattice, $q$ be a positive integer, and let $s \geq \eta_\epsilon(q\cL)$. Let $C$ be a random coset in $\cL/q\cL$ sampled such that for any $\vec{c}\in {\cL}/{q\cL}$,\; $\Pr[C = q\cL + \vect{c}] = \frac{\rho_s(q \cL + \vect{c})}{\rho_s(\cL)}$. Also, let $U$ be a coset in $\cL/q\cL$ sampled uniformly at random. Then
	\[
	\sddist(C, U) \le 2\eps \;.
	\]
\end{corollary}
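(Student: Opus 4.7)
}
The plan is to reduce the statistical-distance estimate to a pointwise comparison of the coset mass $\rho_s(q\cL+\vect c)$ with the average mass $\rho_s(\cL)/q^n$, and then use Lemma~\ref{lemma_above} applied to the lattice $q\cL$ (which is legal because the hypothesis $s\geq \eta_\eps(q\cL)$ is precisely the smoothing condition for $q\cL$). First I would unfold the definitions: since every coset of $q\cL$ inside $\cL$ can be written uniquely as $q\cL+\vect c$ for some representative $\vect c$ in a fundamental domain of $\cL/q\cL$, and $|\cL/q\cL|=q^n$, the partition
\[
\rho_s(\cL) \;=\; \sum_{\vect c\in\cL/q\cL} \rho_s(q\cL+\vect c)
\]
reduces everything to the ratios $\rho_s(q\cL+\vect c)/\rho_s(q\cL)$.

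Next I would apply Lemma~\ref{lemma_above} to the lattice $q\cL$ with shift $\vect c$: for each coset,
\[
\tfrac{1-\eps}{1+\eps}\;\leq\;\frac{\rho_s(q\cL+\vect c)}{\rho_s(q\cL)}\;\leq\;1.
\]
Summing this chain of inequalities over the $q^n$ cosets yields
\[
q^n\cdot\tfrac{1-\eps}{1+\eps}\,\rho_s(q\cL)\;\leq\;\rho_s(\cL)\;\leq\;q^n\,\rho_s(q\cL),
\]
which in turn gives two-sided control on the weights of $C$:
\[
\tfrac{1-\eps}{q^n(1+\eps)}\;\leq\;\Pr[C=q\cL+\vect c]\;=\;\frac{\rho_s(q\cL+\vect c)}{\rho_s(\cL)}\;\leq\;\tfrac{1+\eps}{q^n(1-\eps)}.
\]

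Finally, I would convert these pointwise bounds into a statistical-distance bound. Writing $p_{\vect c}=\Pr[C=q\cL+\vect c]$ and $u=1/q^n$, and using the identity $\sddist(C,U)=\sum_{\vect c\,:\,p_{\vect c}<u}(u-p_{\vect c})$ (valid because both are probability distributions), I would upper bound each nonnegative term by $u-\tfrac{1-\eps}{q^n(1+\eps)}=\tfrac{2\eps}{q^n(1+\eps)}$ and then sum over at most $q^n$ terms to obtain $\sddist(C,U)\leq \tfrac{2\eps}{1+\eps}\leq 2\eps$. The choice to work on the side where $p_{\vect c}<u$ is deliberate: the alternative side only gives $\tfrac{2\eps}{1-\eps}$, so this picks out the cleaner constant stated in the corollary.

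The calculation is routine; the only mildly delicate point is picking the ``lower'' side of the distribution to obtain the $2\eps$ bound rather than $2\eps/(1-\eps)$, and remembering that Lemma~\ref{lemma_above} is stated with the asymmetric bounds $\tfrac{1-\eps}{1+\eps}$ and $1$ rather than the symmetric $1\pm\eps$, so both sides of the pointwise inequality must be tracked.
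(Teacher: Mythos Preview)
Your proposal is correct and follows essentially the same route as the paper's proof: both apply Lemma~\ref{lemma_above} to $q\cL$ to obtain the pointwise lower bound $\Pr[C=q\cL+\vect c]\geq \tfrac{1-\eps}{q^n(1+\eps)}$, then use the one-sided formula $\sddist(C,U)=\sum_{p_{\vect c}<u}(u-p_{\vect c})$ and sum the slack $\tfrac{2\eps}{q^n(1+\eps)}$ over at most $q^n$ cosets to arrive at $\tfrac{2\eps}{1+\eps}\le 2\eps$. Your write-up is in fact a bit more explicit about the two-sided bounds and about why the ``lower'' side is the right one to sum over.
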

\begin{proof}
	By Lemma~\ref{lemma_above}, we have that \[\rho_s(q\cL) \geq \rho_s(q\cL + \vect{c}) \geq \frac{1-\eps}{1+\eps} \rho_s(q\cL) \;, \] for any $\vect{c} \in \cL/q\cL$ and hence, 
	\[
	q^n\rho_s(q\cL)\geq \sum_{\vect{c}\in \cL/q\cL}\rho_s(q\cL+\vect{c})=\rho_s(\cL)
	\]
	Therefore, 
	\[
	\frac{\rho_s(q\cL + \vect{c})}{\rho_s(\cL)} \geq \frac{1-\eps}{1+\eps} \cdot \frac{\rho_s(q\cL)}{\rho_s(\cL)} \geq \frac{1-\eps}{1+\eps} \cdot \frac{1}{q^n} \;.
	\]
	We conclude that 
	\begin{align*}
		\sddist(C, U) &= \sum_{\vect{c} \in \cL/q\cL \: : \: \Pr[\vect{C} = \vect{c}] < \Pr[U = \vect{c}]} \left(\Pr[U = \vect{c}] - \Pr[C = \vect{c}]\right) \\
		&\leq \sum_{\vect{c} \in \cL/q\cL \: : \: \Pr[C = \vect{c}] < \Pr[U = \vect{c}]} \Pr[U = \vect{c}] \left(1 - \frac{1-\eps}{1+\eps}\right) \\
		&\leq \sum_{\vect{c} \in \cL/q\cL} \Pr[U = \vect{c}]  \frac{2\eps}{1+\eps} \\
		&\leq \frac{2\eps}{1+\eps} \;,
	\end{align*}
	as needed. 
\end{proof}

We will need the following lemma which is initially proved in \cite{banaszczyk1993new}.

\begin{lemma}[{\cite[Lemma~2.13]{DRS14}}]\label{lemma:DGS-large-vector-bound}
	For any lattice $\cL\subset \real^n$, $s>0$ and any $t\geq 1$, 
	\[\Pr_{\vect{y}\sim \cD_{\cL,s}} \left[\|\vect{y}\|\geq t\sqrt{\frac{n}{2\pi}}\cdot s\right] \leq e^{-\frac{n}{2}(t-1)^2}.\]
\end{lemma}

The following lemma gives a bound on the smoothing parameter. 
\begin{lemma}[{\cite[Lemma~2.7]{ADRSD15}}]\label{lemma_smoothing}
	For any lattice $\cL \subset \real^n, \epsilon \in (0,1)$ and $k>1$, we have $k\eta_{\epsilon}(\cL) > \eta_{\epsilon^{k^2}}(\cL)$
\end{lemma}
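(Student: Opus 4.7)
The plan is to unfold both sides of the desired inequality to the definition of the smoothing parameter in terms of the dual Gaussian sum and compare the two sums term by term. Recall $\eta_\eps(\cL)$ is defined (via the property used in the paper) as the smallest $s>0$ such that $\rho_{1/s}(\cL^*\setminus\{\vect 0\})\le\eps$; this function is strictly decreasing in $s$, so it suffices to show that at the scaled width $s' := k\eta_\eps(\cL)$ we have $\rho_{1/s'}(\cL^*\setminus\{\vect 0\})<\eps^{k^2}$, which will immediately give $\eta_{\eps^{k^2}}(\cL)<k\eta_\eps(\cL)$.

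Set $s=\eta_\eps(\cL)$ and write $a_{\vect x}=\rho_{1/s}(\vect x)=\exp(-\pi s^2\|\vect x\|^2)$ for $\vect x\in\cL^*\setminus\{\vect 0\}$. By definition of $s$ we have $\sum_{\vect x}a_{\vect x}=\eps$, and in particular each individual $a_{\vect x}\le\eps\le 1$. The central observation is the pointwise identity
\[
\rho_{1/(ks)}(\vect x)=\exp\bigl(-\pi k^2 s^2\|\vect x\|^2\bigr)=a_{\vect x}^{k^2},
\]
so $\rho_{1/(ks)}(\cL^*\setminus\{\vect 0\})=\sum_{\vect x}a_{\vect x}^{k^2}$. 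Since $k>1$ gives $k^2-1>0$ and $a_{\vect x}\le\eps$, we can bound
\[
\sum_{\vect x}a_{\vect x}^{k^2}=\sum_{\vect x}a_{\vect x}\cdot a_{\vect x}^{k^2-1}\le \eps^{k^2-1}\sum_{\vect x}a_{\vect x}=\eps^{k^2}.
\]

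To upgrade this to the strict inequality demanded by the statement, I would observe that $\cL^*\setminus\{\vect 0\}$ is infinite, so one cannot have $a_{\vect x}=\eps$ for every non-zero dual vector (the sum would diverge); hence there exists at least one $\vect x_0$ with $a_{\vect x_0}<\eps$, and the estimate $a_{\vect x_0}^{k^2-1}<\eps^{k^2-1}$ is strict on that term, yielding $\rho_{1/(ks)}(\cL^*\setminus\{\vect 0\})<\eps^{k^2}$. By monotonicity of $s\mapsto\rho_{1/s}(\cL^*\setminus\{\vect 0\})$ this gives $\eta_{\eps^{k^2}}(\cL)<ks=k\eta_\eps(\cL)$, as required. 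The whole argument is essentially a one-line $\ell^{k^2}$-versus-$\ell^1$ bound applied to the coefficients $a_{\vect x}$; the only subtle point is ensuring strict (not just weak) inequality, which is why I would explicitly invoke the infinite cardinality of $\cL^*\setminus\{\vect 0\}$ rather than leaving the bound as a non-strict $\le$.
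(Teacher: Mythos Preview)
Your argument is correct and is precisely the standard proof of this fact (as given in \cite{ADRSD15}): rewrite $\rho_{1/(ks)}(\vect x)=\rho_{1/s}(\vect x)^{k^2}$, then use that each term $a_{\vect x}\le\sum a_{\vect y}=\eps$ to get $\sum a_{\vect x}^{k^2}\le\eps^{k^2-1}\sum a_{\vect x}=\eps^{k^2}$, with strictness coming from the fact that not all terms can equal $\eps$. The present paper does not give its own proof of this lemma---it is quoted from \cite{ADRSD15}---so there is nothing further to compare.

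One small remark: your strictness justification can be sharpened. Since $\cL^*\setminus\{\vect 0\}$ contains at least two points and every $a_{\vect x}>0$, in fact \emph{every} term satisfies $a_{\vect x}<\eps$ strictly (any single term is bounded by the total sum minus one other positive term). So the strict inequality holds termwise, not just for one $\vect x_0$.
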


Micciancio and Peikert~\cite{MP13} showed the following result about the resulting distribution from the sum of many Gaussian samples.

\begin{theorem}[{\cite[Theorem~3.3]{MP13}}]\label{theorem_SIS}
	Let $\cL$ be an $n$ dimensional lattice, $\vect{z}\in \intg^m$ a nonzero integer vector, $s_i\geq \sqrt{2}\|\vect{z}\|_\infty \cdot \eta_\epsilon(\cL)$, and $\cL+\vect{c_i}$ arbitrary cosets of $\cL$ for $i=1\cdots, m$. Let $\vect{y_i}$ be independent vectors with distributions $D_{\cL+\vect{c_i},s_i}$, respectively. Then the distribution of $\vect{y}=\sum\limits_iz_i\vect{y_i}$ is $\epsilon m$ close to $D_{Y,s}$, where $Y=gcd(\vect{z})\cL+\sum\limits_iz_i\vect{c_i}$, and $s=\sqrt{\sum\limits(z_is_i)^2}$.
\end{theorem}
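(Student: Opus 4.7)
The plan is to view $\vect{y}=\sum_i z_i\vect{y}_i$ as the image of the $\mZ$-linear map $\Phi:\cL^m\to\cL$, $\Phi(\vect{x}_1,\ldots,\vect{x}_m)=\sum_i z_i\vect{x}_i$, applied to a product sample with parameters $s_1,\ldots,s_m$ and shifts $\vect{c}_i$. Since $\Phi(\cL^m+(\vect{c}_1,\ldots,\vect{c}_m))=\gcd(\vect{z})\cL+\sum_i z_i\vect{c}_i=Y$, the support of $\vect{y}$ is $Y$, and for any $\vect{v}\in Y$ I would begin with
\begin{equation*}
\Pr[\vect{y}=\vect{v}]\;=\;\frac{1}{\prod_i\rho_{s_i}(\cL+\vect{c}_i)}\,\sum_{\substack{\vect{y}_i\in\cL+\vect{c}_i\\ \sum_i z_i\vect{y}_i=\vect{v}}}\prod_i\rho_{s_i}(\vect{y}_i).
\end{equation*}
Each $s_i\ge\eta_\eps(\cL)$, so Lemma~\ref{lemma_above} makes the denominator independent of $\vect{v}$ up to a multiplicative factor in $[(1-\eps)/(1+\eps),1]$.

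The next step is to ``complete the square'' to pull the $\vect{v}$-dependence out of the sum in the numerator. A Lagrange-multiplier calculation shows that the quadratic form $Q(\vect{y}):=\sum_i\|\vect{y}_i\|^2/s_i^2$ restricted to the affine subspace $\{\sum_i z_i\vect{y}_i=\vect{v}\}$ attains its minimum $\|\vect{v}\|^2/s^2$ at $\vect{y}_i=z_is_i^2\vect{v}/s^2$, where $s=\sqrt{\sum_i(z_is_i)^2}$. Setting $\vect{y}_i=z_is_i^2\vect{v}/s^2+\vect{\xi}_i$, the cross-term in $Q$ vanishes because $\sum_i z_i\vect{\xi}_i=0$, giving
\begin{equation*}
\prod_i\rho_{s_i}(\vect{y}_i)\;=\;\rho_s(\vect{v})\cdot\exp\!\Bigl(-\pi\sum_i\|\vect{\xi}_i\|^2/s_i^2\Bigr),
\end{equation*}
where $(\vect{\xi}_i)$ ranges over a coset $K+\vect{\xi}^*(\vect{v})$ of the kernel lattice $K:=\{(\vect{w}_i)\in\cL^m:\sum_i z_i\vect{w}_i=0\}$. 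Writing $G(K+\vect{\xi}^*(\vect{v}))$ for this coset sum of the anisotropic weight, the numerator becomes $\rho_s(\vect{v})\cdot G(K+\vect{\xi}^*(\vect{v}))$.

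The hard part, and the only place where the full strength $s_i\ge\sqrt{2}\|\vect{z}\|_\infty\eta_\eps(\cL)$ is needed, is showing that $G(K+\vect{\xi}^*(\vect{v}))$ is essentially independent of $\vect{v}$, i.e.\ that $K$ is smooth with respect to this anisotropic Gaussian. I would handle this by the rescaling $\vect{u}_i=\vect{\xi}_i/s_i$, which converts the anisotropic sum on $K$ into the isotropic $\rho_1$ on the transformed lattice $\widetilde{K}$. The pairwise syzygies of the form $(0,\ldots,z_j\vect{v},\ldots,-z_i\vect{v},\ldots,0)$ with $\vect{v}\in\cL$ (together with a short $\gcd(\vect{z})$-adjustment generating the ``diagonal'' direction) span $K$, and under the rescaling they map to vectors of norm at most $\sqrt{(z_j/s_i)^2+(z_i/s_j)^2}\,\|\vect{v}\|\le\|\vect{v}\|/\eta_\eps(\cL)$. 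A Poisson-summation-style estimate on $\widetilde{K}$ then yields $\eta_\eps(\widetilde{K})\le 1$, so Lemma~\ref{lemma_above} gives $G(K+\vect{\xi}^*(\vect{v}))/G(K)\in[(1-\eps)/(1+\eps),1]$ uniformly in $\vect{v}$.

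Combining the estimates on numerator and denominator gives $\Pr[\vect{y}=\vect{v}]=\rho_s(\vect{v})/Z$ up to a multiplicative factor in $[(1-\eps)/(1+\eps),1]$, for some $\vect{v}$-independent $Z$. The same short computation used in the proof of Corollary~\ref{cor:above_smoothing} then converts this into the desired bound $\sddist(\vect{y},D_{Y,s})\le 2\eps$.
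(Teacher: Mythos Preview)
The paper does not prove this statement: it is quoted as \cite[Theorem~3.3]{MP13} in the preliminaries and used as a black box, so there is no in-paper argument to compare your proposal against.

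For what it is worth, your outline is the standard route to this kind of result: write $\Pr[\vect{y}=\vect{v}]$ as a Gaussian sum over the fiber $\Phi^{-1}(\vect{v})$, complete the square to factor out $\rho_s(\vect{v})$, and argue that the residual sum over a coset of the kernel lattice $K=\ker\Phi\cap\cL^m$ is essentially independent of $\vect{v}$ because $K$ is smooth for the anisotropic weight. The one soft spot is the step ``a Poisson-summation-style estimate on $\widetilde{K}$ then yields $\eta_\eps(\widetilde{K})\le 1$'': exhibiting short syzygy generators of $\widetilde{K}$ does not by itself bound the smoothing parameter, which is a statement about $\widetilde{K}^*$. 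One clean way around this (and the one typically used) is induction on $m$, reducing to the two-summand case where the kernel is a single rescaled copy of $\cL$ and the hypothesis $s_i\ge\sqrt{2}\|\vect{z}\|_\infty\eta_\eps(\cL)$ is exactly the required smoothing condition. Your direct route can also be completed, but it needs an explicit bound on $\widetilde{K}^*$ or on a full-rank sublattice of $\widetilde{K}$ whose smoothing is computable, not merely a bound on the norms of generators of $\widetilde{K}$.
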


We need to recall the definition of the \emph{honest} Discrete Gaussian Sampling problem,
introduced in \cite{ADRSD15}.

\begin{definition}[{\cite[Definition~5.1]{ADRSD15}}]
	For $\varepsilon\geqslant 0$, $\sigma$ a function that maps lattices to non-negative real numbers,
	and $m\in\mathbb{N}$, the \emph{honest} Discrete Gaussian Sampling problem
	$\varepsilon$-$\operatorname{hDGS}^m_\sigma$
	is defined as follows: the input is a basis $B$ for a lattice $\cL\subset\mathbb{R}^n$ and a parameter $s>0$.
	The goal is for the output distribution to be $\varepsilon$-close to $D^{m'}_{\cL,s}$ for some
	independent random variable $m'\geqslant0$. If $s>\sigma(\cL)$ then $m'$ must be equal to $m$.
\end{definition}

\begin{theorem}[{\cite[Theorem~5.11]{ADRSD15}}]\label{theorem_dgsabovesmoothing}
	Let $\sigma$ be the function that maps a lattice $\cL$ to $\sqrt{2}\eta_{1/2}(\cL)$.
	Then, there is an algorithm that solves $\exp(-\Omega(\kappa))$-$\operatorname{hDGS}^{2^{n/2}}_{\sigma}$
	in time $2^{n/2+\operatorname{polylog}(\kappa)+o(n)}$ for an $\kappa\geqslant\Omega(n)$.
\end{theorem}

We will need the following reduction from $\alpha$-$\BDD$ to $\DGS$ that was shown in~\cite{DRS14}.

\begin{theorem}[{\cite[Theorem~3.1]{DRS14}}]\label{theorem_BDDtoDGS}
	For any $\epsilon \in(0,1/200)$, lattice $\cL\subset \real^n$, and  
	$\alpha \leq \frac{\sqrt{\ln(1/\epsilon)/\pi-o(1)}}{ 2\eta_\epsilon(\cL^*)\lambda_1{\cL}}.$
	There exists an algorithm that solves $\alpha$-$\BDDP$
	using $m \cdot poly(n)$ arithmetic operations where
	$m = O(\frac{n \log(1/\epsilon)}{\sqrt{\epsilon}})$. Moreover, the preprocessing advice  only consists
	of $m$ vectors sampled from $D_{\cL^*,\eta_\epsilon(\cL^*)}$.
\end{theorem}

\begin{remark}\label{remark:space_complexity_theorem_BDDtoDGS}
	We are going to use this reduction in the superpolynomial regime: typically $m$ will be exponential
	in $n$. This leaves unclear the space complexity of the reduction. The reduction works by evaluating
	a polynomial number of times functions of the form $\sum_{i=1}^mf_i(\vect{x})$ where each $f_i$ is a
	polynomial time computable function that depends on the $i^{th}$ DGS sample. Furthermore, all
	the complexities above are in terms of arithmetic operations, not bit complexity. If we assume
	that all the DGS samples have $\poly(n)$ bit-size then the reduction has time complexity $m\cdot \poly(n)$
	and space complexity $O(\poly(n)+\log m)$ \emph{excluding the storage space of the $m$ vectors provided by the DGS}.
	Finally, as noted in the proof of the theorem in \cite{DRS14}, only the preprocessing is
	probabilistic and with probability at least $1-2^{-\Omega(n)}$ over the choice 
	of the samples, the algorithm will deterministically solve all $\alpha$-$\BDD$ instances.
\end{remark}

We need the following relation between the first minimum of lattice and the smoothing parameter of dual lattice. We will use this to compute the decoding distance of the $\BDD$ oracle.

\begin{lemma}[Variant of {\cite[Lemma~6.1]{ADRSD15}}]\label{lemma_smoothinglambda}
	For any lattice $\cL \subset \mathbb{R}^n$ and $\epsilon \in (0,1)$,
	\begin{align}
		\sqrt{\frac{\ln(1/\epsilon)}{\pi}} < \lambda_1(\cL) \eta_\epsilon(\cL^*) < \sqrt{\frac{\beta(\cL)^2n}{2\pi e}}\cdot \epsilon^{-1/n}\cdot(1+o(1)), \label{ineq:smoothingtosvlarge}
	\end{align}
	and if $\epsilon \leq (e/\beta(\cL)^2+o(1))^{-\frac{n}{2}}$, we also have
	\begin{align}
		\sqrt{\frac{\ln(1/\epsilon)}{\pi}} < \lambda_1(\cL) \eta_\epsilon(\cL^*) < \sqrt{\frac{\ln(1/\epsilon)+n\ln\beta(\cL)+o(n)}{\pi}}. \label{ineq:smoothingtosvsmall}
	\end{align}
	as $n$ tends to infinity.
\end{lemma}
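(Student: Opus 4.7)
The lemma combines a lower bound on $\lambda_1(\cL)\eta_\epsilon(\cL^*)$ with two regime-dependent upper bounds, and my plan is to address each in turn.

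For the lower bound, I would apply the definition of the smoothing parameter directly. Setting $s = \eta_\epsilon(\cL^*)$, the definition gives $\rho_{1/s}(\cL) = 1 + \epsilon$, so
\[\sum_{\vect{v} \in \cL \setminus \{\vect{0}\}} \exp(-\pi s^2 \|\vect{v}\|^2) = \epsilon.\]
The contribution of any single shortest vector is at most $\epsilon$, hence $\exp(-\pi s^2 \lambda_1(\cL)^2) < \epsilon$, which on rearranging yields the claim. This will be essentially a one-line observation.

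For the upper bounds, the strategy is to find $s$ making the tail $\rho_{1/s}(\cL) - 1$ at most $\epsilon$, using a counting bound for the number of lattice points in a ball. The key input is an estimate of the form
\[N(\cL, r) := |\{\vect{v} \in \cL : \|\vect{v}\| \leq r\}| \leq \beta^n \cdot (2r/\lambda_1(\cL))^n \cdot (1+o(1))\]
valid for $r \geq \lambda_1(\cL)$. This combines the packing bound (balls of radius $\lambda_1(\cL)/2$ about distinct lattice vectors are disjoint) with the kissing-number interpretation of $\beta^n$, sharpened using the Kabatyanskii--Levenshtein bound relating kissing number and packing density. With this estimate, integration by parts rewrites the tail as $\int_{\lambda_1(\cL)}^\infty \exp(-\pi s^2 r^2)\, dN(\cL, r)$, which becomes a one-dimensional Gaussian-type integral.

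The two regimes then emerge according to where the integrand $r^{n-1}\exp(-\pi s^2 r^2)$ attains its maximum. When $\epsilon$ is relatively large, the interior saddle point $r^* = \sqrt{n/(2\pi s^2)}$ lies above $\lambda_1(\cL)$ and dominates the integral; a Laplace-method estimate gives the first upper bound $\sqrt{\beta^2 n/(2\pi e)}\cdot \epsilon^{-1/n}(1+o(1))$. When $\epsilon$ is small enough that the saddle point falls below $\lambda_1(\cL)$, namely when $\epsilon \leq (e/\beta^2 + o(1))^{-n/2}$, the integrand is monotonically decreasing on the integration region and is dominated by the boundary term $\beta^n \exp(-\pi s^2 \lambda_1(\cL)^2)$, yielding the tighter bound $\sqrt{(\ln(1/\epsilon) + n\ln\beta)/\pi}(1+o(1))$.

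The main technical obstacle is the counting bound with the correct $\beta$-dependence: a direct volume argument gives only $N(\cL, r) \leq (1 + 2r/\lambda_1(\cL))^n$ without any kissing-number factor, and sharpening this to involve $\beta$ requires a more careful packing argument exploiting the fact that lattices with many shortest vectors cannot be arbitrarily sparse. Once the $N(\cL, r)$ bound is pinned down, both upper bounds follow from routine Laplace-method asymptotics in the two regimes.
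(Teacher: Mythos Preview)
The paper does not contain a proof of this lemma: it is stated in the preliminaries with the citation \cite[Lemma~6.1]{ADRSD15} and is used as a black box in later arguments (Theorems~\ref{thm:BDD}, \ref{theorem_3334BDD}, \ref{theorem_41BDD} and the discussion in Section~\ref{section_relation_kissing}), but no proof or proof sketch appears anywhere in this paper. There is therefore nothing here to compare your proposal against.

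Your outline is a reasonable sketch of how such a result is typically proven (single-term lower bound from the definition of $\eta_\epsilon$; upper bound via a lattice-point counting estimate combined with a tail integral, splitting into two regimes depending on whether the Gaussian peak lies inside or outside the ball of radius $\lambda_1$). If you wish to verify it in detail you should consult the cited source \cite{ADRSD15}; one point to be careful about is the precise meaning of $\beta$. The lemma states that $\beta^n$ is the ``lattice kissing number'', whereas your counting bound $N(\cL,r)\le \beta^n(2r/\lambda_1)^n(1+o(1))$ effectively uses $\beta$ as a packing-density constant. These are related but not identical quantities, and getting the exact dependence on $\beta$ in both inequalities \eqref{ineq:smoothingtosvlarge} and \eqref{ineq:smoothingtosvsmall} right requires matching your counting input to the definition of $\beta$ actually used in \cite{ADRSD15}.
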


\begin{remark}
	As noted in \cite{ADRSD15} below Lemma~6.1, the inequality \eqref{ineq:smoothingtosvlarge}
	actually holds for all $\varepsilon\in(0,1)$ so we dropped the condition in the first case.
\end{remark}

\begin{remark}
	In Lemma~6.1 of \cite{ADRSD15}, $\beta$ comes from
	Lemma~4.2 of the same paper and only needs to satisfy the equation
	$|\cL\cap T_r|\leqslant\beta^{n+o(n)}r^n$ for all $r$ where
	$T_r=\{x\in\mathbb{R}^n:r\leqslant\|x\|\leqslant(1+\tfrac{1}{n})r\}$,
	assuming the lattice is normalized so that $\lambda_1(\cL)=1$.
	A trivial upper bound on $|\cL\cap T_r|$ is $N(\cL,r')$,
	the number of point in $\cL$ of radius at most $r'=(1+\tfrac{1}{n})r$.
	By our definitions \eqref{sym:gamma_number} and \eqref{sym:beta_number}, this is bounded by
	$\gamma(\cL)\big((1+\tfrac{1}{n})r\big)^{n}\leqslant\beta(\cL)^{n}er^n$ and therefore we can replace $\beta$ by our $\beta(\cL)$.
\end{remark}

The following theorem proved in \cite{CCL18}, is required to solve $\svp$ by an exponential number of calls to $\alpha$-$\BDD$ oracle.
\begin{theorem}[{\cite[Theorem~8]{CCL18}}]\label{theorem_enlargeBDD}
	Given a basis matrix $\basis \in \mathbb{R}^{n \times n}$ for lattice $\cL(\basis)\subset \mathbb{R}^n$, a target vector $\vect{t} \in \mathbb{R}^n$, an $\alpha$-$\BDD$ oracle $\BDD_\alpha$ with $\alpha < 0.5$, and an integer scalar $p>0$. Let $f^{\alpha}_p:\mathbb{Z}^n_p\rightarrow \mathbb{R}^n$ be 
	$f^\alpha_p(\vect{s})= -p \cdot \BDD_\alpha(\cL, (\basis \vect{s} - \vect{t})/p) + \basis\vect{s}$, then the list $m=\{f^\alpha_p(\vect{s})\mid \vect{s} \in \mathbb{Z}^n_p\}$ contains all lattice points within distance $p\alpha\lambda_1(\cL)$ to $\vect{t}$.
\end{theorem}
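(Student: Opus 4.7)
The plan is to show that every lattice vector $\vect{v}\in\cL$ with $\|\vect{v}-\vect{t}\|\le p\alpha\lambda_1(\cL)$ appears as $f^{\alpha}_p(\vect{s}^{*})$ for some $\vect{s}^{*}\in\mathbb{Z}_p^n$, so that the list $m$ indeed covers every close lattice point. The argument is essentially a coset decomposition of $\mathbb{Z}^n$ modulo $p\mathbb{Z}^n$ combined with a single well-defined $\BDD$ oracle call.

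First I would write $\vect{v}=\basis\vect{z}$ for the unique $\vect{z}\in\mathbb{Z}^n$ and apply coordinate-wise Euclidean division by $p$ to decompose $\vect{z}=p\vect{y}+\vect{s}^{*}$ with $\vect{s}^{*}\in\mathbb{Z}_p^n$ and $\vect{y}\in\mathbb{Z}^n$. A direct computation then shows
$$\frac{\basis\vect{s}^{*}-\vect{t}}{p}-(-\basis\vect{y})=\frac{\basis(\vect{s}^{*}+p\vect{y})-\vect{t}}{p}=\frac{\vect{v}-\vect{t}}{p},$$
so the lattice point $-\basis\vect{y}\in\cL$ lies within distance $\|\vect{v}-\vect{t}\|/p\le\alpha\lambda_1(\cL)$ of the query point $(\basis\vect{s}^{*}-\vect{t})/p$. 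This makes that query a valid $\alpha$-$\BDD$ instance.

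The key step is to argue that $\BDD_\alpha$ is forced to return exactly $-\basis\vect{y}$. Because $\alpha<1/2$, any two distinct lattice points within the closed ball of radius $\alpha\lambda_1(\cL)$ around a common target would differ by strictly less than $\lambda_1(\cL)$, contradicting the definition of the first minimum. Hence $-\basis\vect{y}$ is the unique closest lattice point, and $\BDD_\alpha(\cL,(\basis\vect{s}^{*}-\vect{t})/p)=-\basis\vect{y}$. Substituting into the definition of $f^{\alpha}_p$ gives
$$f^{\alpha}_p(\vect{s}^{*})=-p\cdot(-\basis\vect{y})+\basis\vect{s}^{*}=\basis(p\vect{y}+\vect{s}^{*})=\basis\vect{z}=\vect{v},$$
as desired.

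The only real subtlety is the uniqueness argument above: if $\alpha$ were allowed to reach $1/2$, $\BDD_\alpha$ would no longer be pinned down to $-\basis\vect{y}$ and the identity $f^{\alpha}_p(\vect{s}^{*})=\vect{v}$ could fail, which is why the hypothesis $\alpha<1/2$ is essential. The hypothesis $\mathsf{dist}(\cL,\vect{t})\le\alpha\lambda_1(\cL)$ is not used directly in the algebraic step, beyond ensuring that the set of lattice points we wish to enumerate is non-empty and that $\vect{t}$ itself has a nearby lattice point. Everything else is routine manipulation using the chosen coset representative of $\vect{z}$ modulo $p$, so no additional ingredient from the preliminaries is needed.
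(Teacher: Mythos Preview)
Your proof is correct, but note that the paper does not actually prove this statement: it is quoted verbatim as \cite[Theorem~8]{CCL18} in the preliminaries, with no proof given. So there is no ``paper's own proof'' to compare against.

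That said, your argument is exactly the natural one and is almost certainly what appears in \cite{CCL18}: reduce $\vect{z}$ modulo $p$ to pick the right $\vect{s}^*\in\mathbb{Z}_p^n$, check that the resulting $\BDD$ query lands within $\alpha\lambda_1(\cL)$ of the lattice point $-\basis\vect{y}$, and use $\alpha<1/2$ to force uniqueness of the oracle's answer. Your observation that the hypothesis $\mathsf{dist}(\cL,\vect{t})\le\alpha\lambda_1(\cL)$ is not needed for the stated conclusion is also correct; the conclusion only asserts containment of the close lattice points in the list, and the argument for each such $\vect{v}$ is self-contained. (The hypothesis is presumably there so that the list is guaranteed to be nonempty and so that the oracle call at $\vect{s}=\vect{0}$ is itself a valid $\BDD$ instance, but neither is required for the inclusion you prove.)
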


\begin{remark}\label{rm:unnecessary_condition}
	In the original Theorem~8 from \cite{CCL18}, there is an extra condition on the target vector $\vect{t}$, ie, $\vect{t}$ needs to satisfy $\mathsf{dist}(\cL,\vect{t})\leq \alpha\lambda_1(\cL)$. However, a thorough inspection of the proof reveals that this condition is unnecessary, and essentially the same proof follows without it.
\end{remark}

We will need the following theorems to sample the $\DGS$ vectors with a large width. 

\begin{theorem}[\cite{ADRSD15},Proposition 2.17]\label{intial_dgs}
	For any $\epsilon\leq 0.99$, there is an algorithm that takes as input a lattice $\cL \in \real^n$,
	$M\in \intg_{>0}$ (the desired number of output vectors), and $s > 2^{n \log \log n/\log n} \cdot \eta_\epsilon(\cL)$, and outputs $M$ independent
	samples from $D_{\cL,s}$ in time $M\cdot poly(n)$.
\end{theorem}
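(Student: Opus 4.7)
The plan is to combine a single basis-reduction preprocessing step with the Klein--GPV Gaussian sampler. Recall that Klein's algorithm (as analyzed by Gentry--Peikert--Vaikuntanathan) takes a basis $(\vec{b}_1,\dots,\vec{b}_n)$ of $\cL$ and a parameter $s \geq \sqrt{\ln(2n+4)/\pi}\cdot\max_i\|\bstar_i\|$, and in $\poly(n)$ time outputs a single vector whose distribution is within statistical distance $2^{-\Omega(n)}$ of $D_{\cL,s}$. So it suffices to reduce the input basis once so that its Gram--Schmidt lengths are bounded by $s/\sqrt{\ln(2n+4)/\pi}$, and then invoke Klein--GPV independently $M$ times.

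\textbf{Step 1: Basis reduction.} First I would apply Schnorr's block reduction (equivalently BKZ) with block size $\beta=\Theta(\log n/\log\log n)$ to the input basis in time $\poly(n)$. A standard analysis, using that the Hermite constant of a rank-$\beta$ lattice is at most $O(\beta)$, yields a basis whose Gram--Schmidt vectors satisfy
\[
\max_i \|\bstar_i\| \;\leq\; 2^{O(n\log\log n/\log n)} \cdot \lambda_1(\cL).
\]

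\textbf{Step 2: Passage to the smoothing parameter.} Next I would relate $\lambda_1(\cL)$ to $\eta_\epsilon(\cL)$. By Banaszczyk's transference bound, $\lambda_1(\cL)\cdot\lambda_n(\cL^*) \leq n$, and combining with the inequality $\eta_\epsilon(\cL) \geq \sqrt{\ln(1/\epsilon)/\pi}/\lambda_1(\cL^*)$ together with $\lambda_1(\cL^*) \leq \lambda_n(\cL^*)$ (a special case of the lower bound in Lemma~\ref{lemma_smoothinglambda}), one obtains $\lambda_1(\cL) \leq O(\sqrt{n})\cdot\eta_\epsilon(\cL)$ for any $\epsilon$ bounded away from $1$ (in particular for $\epsilon\leq 0.99$). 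Substituting into Step~1 and absorbing the $\poly(n)$ slack into the dominant exponent gives
\[
\max_i\|\bstar_i\| \;\leq\; 2^{n\log\log n/\log n}\cdot \eta_\epsilon(\cL) / \sqrt{\ln(2n+4)/\pi} \;<\; s/\sqrt{\ln(2n+4)/\pi},
\]
which is exactly the hypothesis of Klein--GPV on the reduced basis.

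\textbf{Step 3: Iterated sampling.} Finally, I would invoke the Klein--GPV sampler $M$ times with fresh independent randomness. Each call runs in $\poly(n)$ time and returns a vector whose marginal distribution is within $2^{-\Omega(n)}$ of $D_{\cL,s}$. By a union bound over the $M$ runs, the joint distribution of the output is within $M\cdot 2^{-\Omega(n)}$ of $M$ independent samples from $D_{\cL,s}$; this is absorbed into the negligible error implicit in the theorem. The total runtime is $\poly(n)$ for the one-time reduction plus $M\cdot\poly(n)$ for the samples, giving $M\cdot\poly(n)$ overall.

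The main obstacle is verifying that the constants from the block reduction and the $\lambda_1$-to-$\eta_\epsilon$ conversion actually combine to give the clean bound $2^{n\log\log n/\log n}$ in the exponent. Fortunately there is substantial slack, because any $\poly(n)$ multiplicative loss can be absorbed into the super-polynomial factor $2^{n\log\log n/\log n}$ by slightly enlarging the hidden constant in the BKZ block size; this flexibility is what makes the stated bound robust.
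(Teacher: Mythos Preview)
The paper does not prove this statement; it is quoted verbatim from \cite{ADRSD15} (their Proposition~2.17), so there is no ``paper's own proof'' to compare against. Your overall strategy --- run block reduction once and then call the Klein/GPV sampler $M$ times --- is indeed the standard argument behind that proposition.

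However, there is a genuine error in Step~1. The inequality
\[
\max_i\|\bstar_i\|\;\leq\;2^{O(n\log\log n/\log n)}\cdot\lambda_1(\cL)
\]
is false for any basis reduction algorithm. Take $\cL=\mZ\oplus N\mZ^{n-1}$: then $\lambda_1(\cL)=1$, but every basis satisfies $\prod_i\|\bstar_i\|=\det(\cL)=N^{n-1}$, so $\max_i\|\bstar_i\|\geq N^{(n-1)/n}$, which is unbounded as $N\to\infty$. No amount of reduction can make $\max_i\|\bstar_i\|$ comparable to $\lambda_1(\cL)$.

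The fix is to replace $\lambda_1(\cL)$ by $\lambda_n(\cL)$ in Step~1. For a BKZ-$\beta$ reduced basis one has $\|\bstar_i\|\leq\gamma_\beta^{(n-1)/(\beta-1)}\lambda_i(\cL)\leq\gamma_\beta^{(n-1)/(\beta-1)}\lambda_n(\cL)$, which with $\beta=\Theta(\log n/\log\log n)$ gives $\max_i\|\bstar_i\|\leq 2^{O(n\log\log n/\log n)}\lambda_n(\cL)$. Step~2 then simplifies: use Banaszczyk's transference $\lambda_n(\cL)\cdot\lambda_1(\cL^*)\leq n$ together with $\eta_\epsilon(\cL)\geq\sqrt{\ln(1/\epsilon)/\pi}/\lambda_1(\cL^*)$ to get $\lambda_n(\cL)\leq O(n)\cdot\eta_\epsilon(\cL)$ directly (the detour through $\lambda_1(\cL^*)\leq\lambda_n(\cL^*)$ is then unnecessary). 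With this correction your Steps~2 and~3 go through as written, and the $\poly(n)$ loss is absorbed into the super-polynomial factor exactly as you argue at the end.
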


\begin{lemma}[{\cite[Lemma~5.12]{ADRSD15}}]\label{lemma_mod2}
	There is a probabilistic polynomial-time algorithm that takes as input a lattice $\cL\subset \real^n$
	of rank $n$ and an integer $a$ with $n/2\leq a <n $ and returns a super lattice $\cL'\supset \cL$ of index $2^a$ with $\cL' \subseteq \cL/2$ such that for any $\eps \in (0,1)$, we have $\eta_{\eps'}(\cL')\leq \eta_\eps(\cL)/\sqrt{2}$ with probability at least $1/2$ where $\eps':= 2\eps^2 +2^{(n/2)+1-a}(1+\eps)$.
\end{lemma}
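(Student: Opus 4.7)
The plan is to construct $\cL'$ by a random subgroup selection in $\cL/2\cL$ and then bound the smoothing parameter via a Poisson-summation + Markov-averaging argument applied in the dual.

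First, I would describe the algorithm: sample a uniformly random subgroup $H$ of $\cL/2\cL \cong \mathbb{F}_2^n$ of order $2^a$ (e.g.\ by picking a random $a \times n$ full-rank $\mathbb{F}_2$-matrix via Gaussian elimination), and return a basis of $\cL' := \cL + \tfrac12 H$, where $\tfrac12 H = \{v/2 : v + 2\cL \in H\}$. Then $\cL \subsetneq \cL' \subseteq \tfrac12 \cL$, the index $[\cL':\cL] = 2^a$ by construction, and under the standard duality $\cL/2\cL \times \cL^*/2\cL^* \to \mathbb{F}_2$ the quotient $V := (\cL')^*/2\cL^*$ is a uniformly random $(n-a)$-dimensional subspace of $\cL^*/2\cL^*$. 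These algebraic facts are routine; what remains is the smoothing-parameter bound.

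Set $s := \eta_\eps(\cL)$, $T := \sqrt{2}/s$. The target inequality $\eta_{\eps'}(\cL') \leq s/\sqrt{2}$ is equivalent to $\rho_T((\cL')^* \setminus \{0\}) \leq \eps'$. Decompose $(\cL')^* \setminus \{0\}$ as the disjoint union of $2\cL^* \setminus \{0\}$ and the cosets $2\cL^* + v$ for $\bar v \in V \setminus \{0\}$, giving
\[
\rho_T\bigl((\cL')^* \setminus \{0\}\bigr) = \rho_T(2\cL^* \setminus \{0\}) + \sum_{\bar v \in V \setminus \{0\}} \rho_T(2\cL^* + v).
\]
For the first term, use $\rho_T(2\cL^*) = \rho_{T/2}(\cL^*)$ and Lemma~\ref{lemma_smoothing} with $k=\sqrt{2}$, which gives $\sqrt{2}\,\eta_\eps(\cL) > \eta_{\eps^2}(\cL)$, hence $\rho_{1/(\sqrt{2}s)}(\cL^* \setminus \{0\}) < \eps^2 \leq 2\eps^2$.

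For the random term, compute its expectation by linearity: for each nonzero $\bar v \in \cL^*/2\cL^*$, $\Pr[\bar v \in V] = (2^{n-a}-1)/(2^n-1) \leq 2^{-a}$, so
\[
\mathbb{E}_V\!\Biggl[\sum_{\bar v \in V \setminus \{0\}} \rho_T(2\cL^* + v)\Biggr] \leq 2^{-a}\bigl(\rho_T(\cL^*) - \rho_T(2\cL^*)\bigr) \leq 2^{-a}\rho_T(\cL^*).
\]
The main obstacle is bounding $\rho_T(\cL^*) = \rho_{\sqrt{2}/s}(\cL^*)$, since the width is \emph{above} $1/s$, so the smoothing property at parameter $s$ does not directly apply. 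I would handle this via Poisson summation: $\rho_T(\cL^*) = T^n \det(\cL)\, \rho_{1/T}(\cL) = (\sqrt{2}/s)^n \det(\cL)\, \rho_{s/\sqrt{2}}(\cL)$. Then, because $\rho_w(\cL)$ is increasing in $w$, $\rho_{s/\sqrt{2}}(\cL) \leq \rho_s(\cL) = (s^n/\det(\cL))(1+\eps)$ where the last equality is Poisson together with $\rho_{1/s}(\cL^*) \leq 1+\eps$ at the smoothing parameter. Combining, $\rho_T(\cL^*) \leq 2^{n/2}(1+\eps)$, so the expectation is at most $2^{n/2-a}(1+\eps)$.

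Finally, Markov's inequality gives that with probability at least $1/2$ the random sum does not exceed $2 \cdot 2^{n/2-a}(1+\eps) = 2^{n/2+1-a}(1+\eps)$. Adding the deterministic $\eps^2$ bound from the first term yields $\rho_T((\cL')^* \setminus \{0\}) \leq 2\eps^2 + 2^{n/2+1-a}(1+\eps) = \eps'$ with probability $\geq 1/2$, which is exactly $\eta_{\eps'}(\cL') \leq \eta_\eps(\cL)/\sqrt{2}$. The polynomial running time is clear from the description of the algorithm. The trickiest point in the whole argument is the $2^{n/2}$ blowup of $\rho_T(\cL^*)$ above smoothing — this is precisely what forces the $2^{n/2+1-a}$ factor in $\eps'$ and explains why the lemma is useful only for $a \geq n/2$.
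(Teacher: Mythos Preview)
The paper does not give a proof of this lemma at all: it is quoted verbatim from \cite[Lemma~5.12]{ADRSD15} and used as a black box in the proof of Lemma~\ref{lem:dgs_at_smoothing}. There is therefore nothing in this paper to compare your argument against.

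That said, your proof is correct and is essentially the argument in the original source. The random-subgroup construction, the identification of $(\cL')^*/2\cL^*$ with the annihilator $H^\perp$ under the nondegenerate $\mathbb{F}_2$-pairing, the split into the deterministic piece $\rho_{T/2}(\cL^*\setminus\{0\})<\eps^2$ via Lemma~\ref{lemma_smoothing}, and the Poisson-summation bound $\rho_T(\cL^*)\le 2^{n/2}(1+\eps)$ followed by Markov on the random coset sum, are exactly the steps in \cite{ADRSD15}. Your observation that the $2^{n/2}$ blowup is what forces $a\ge n/2$ is also the point of the original lemma. One cosmetic remark: you actually obtain $\eps^2$ for the deterministic term and then relax it to $2\eps^2$ only to match the stated $\eps'$; this is fine, the extra factor of $2$ in the statement is slack.
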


\subsection{Quantum Computation} In this paper we use the Dirac ket-bra notation. A qubit is a unit vector in $\mathbb{C}^2$ with two (ordered) basis vectors $\{\ket{0},\ket{1}\}$. 
The following gates form a universal set of gates
\begin{align*}
	H&= \frac{1}{\sqrt{2}}\begin{bmatrix}1 &1\\1&-1\end{bmatrix},\
	S=\begin{bmatrix}1 &0\\0& i\end{bmatrix},\    T=e^{i\pi/8}\begin{bmatrix}e^{-i\pi/8} &0\\0& e^{i\pi/8}\end{bmatrix},\\
	CNOT&=\ket{0}\bra{0}\otimes I+ \ket{1}\bra{1}\otimes X.
\end{align*}
The Toffoli gate, a three-qubit gate, is defined by
\[
\text{Toffoli}\ket{a}\ket{b}\ket{c}=
\begin{cases}
	\ket{a}\ket{b}\ket{1\oplus c}, & \mbox{if $a=b=1$;}\\
	\ket{a}\ket{b}\ket{c}, & \mbox{otherwise},
\end{cases}
\]
for $a,b,c\in\{0,1\}$. Toffoli gate can be efficiently decomposed into $CNOT, H, S,$ and $T$ gates~\cite{NC00} and hence it is considered as an elementary quantum gate in this paper. It is easy to see that a NAND gate can be implemented by a Toffoli gate: $\text{Toffoli}\ket{a}\ket{b}\ket{1}=\ket{a}\ket{b}\ket{\text{NAND}(a,b)}$,
where $\text{NAND}(a,b)=0$, if $(a,b)=(1,1)$, and $\text{NAND}(a,b)=1$, otherwise.
In particular, Toffoli gate together with ancilla preparation are universal for classical computation, that is for any classical function, we can implement it as (controlled) quantum one, although at a non-negligible cost due to the reversibility of quantum circuits.
\begin{theorem}[\cite{Bennett89,LS90}]\label{th:irreversible_to_reversible}
	Given any $\epsilon>0$ and any classical computation with running time $T$ and space complexity $S$,
	there exists an equivalent reversible classical computation with running time $O(T^{1+\epsilon}/S^\epsilon)$
	and space complexity $O(S(1+\ln(T/S)))$.
\end{theorem}

\begin{corollary}\label{cor:irreversible_to_quantum}
	Given any $\epsilon>0$ and any classical computation with running time $T$ and space complexity $S$,
	there exists an equivalent quantum circuit of size $O(T^{1+\epsilon}/S^\epsilon)$
	using $O(S(1+\ln(T/S)))$ qubits.
\end{corollary}

\topic{Search problem} One of the most well-known quantum algorithms is Grover's unstructured search algorithm~\cite{DBLP:conf/stoc/Grover96}.
Suppose we have a set of objects named $\{0,1,\dots, N-1\}$, of which some are \emph{targets}.
We say that an oracle $\mathcal{O}$ \emph{identifies the targets} if, in the classical (resp. quantum)
setting, $\mathcal{O}(i)=1$ (resp. $\mathcal{O}\ket{i} = -\ket{i}$) when $i$ is a target
and $\mathcal{O}(i)=0$ (resp. $\mathcal{O}\ket{i}= \ket{i}$) otherwise.
Given such an oracle $\mathcal{O}$,
the goal is to find a target $j \in \{0,1,\dots, N-1\}$ by making queries to the oracle $\mathcal{O}$.

In the search problem, one will try to minimize the number of queries to the oracle. In the classical setting, one need $\mathcal{O}(N)$ queries to solve this problem. Grover, on the other hand, provided a quantum algorithm, that solves the search problem with only $\mathcal{O}(\sqrt{N})$ queries~\cite{Grover96}.  When the number of targets is unknown, Brassard \emph{et al.} provided a modified Grover algorithm that solves the search problem with $\mathcal{O}(\sqrt{N})$ queries~\cite{BBHT96}, which is of the same order as the query complexity of the Grover search.  Moreover, D{\"{u}}rr and H{\o}yer showed given an unsorted table of $N$ values,
there exists a quantum algorithm that finds the index of minimum with only $\mathcal{O}(\sqrt{N})$ queries~\cite{DH96}, with constant probability of error.
\begin{theorem}[\cite{DH96}, Theorem~1]\label{theorem_Qmin}
	Let $T[1, 2, . . . , N]$ be an unsorted table of $N$ items, each holding a value from
	an ordered set. Suppose that we have a quantum oracle $\mathcal{O}_T$ such that $\mathcal{O}_T\ket{i}\ket{0} = \ket{i}\ket{T[i]}$. Then there exists a quantum algorithm that finds the index $y$ such that $T[y]$ is the minimum with probability at least $1/$2 and $\mathcal{O}(\sqrt{N})$ queries to $\mathcal{O}_T$.
\end{theorem}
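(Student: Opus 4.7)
The plan is to design a quantum algorithm that maintains a current candidate minimum and uses Grover-style search to drive it down, and then to bound the expected query cost by a symmetry argument. Following the idea of D\"urr and H\o yer: pick an initial index $y$ uniformly at random from $\{1,\ldots,N\}$, query $\mathcal{O}_T$ to obtain $T[y]$, and iterate the following step until no progress is made or a query budget is exceeded. Use the BBHT variant of Grover search~\cite{BBHT96} on the marking oracle that flags $i$ with $T[i]<T[y]$ to find some such $i$; if found, replace $y$ by $i$ and repeat. The marking oracle costs $O(1)$ queries to $\mathcal{O}_T$ (compute $T[i]$ into an ancilla, compare classically against the stored value $T[y]$, uncompute). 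Halt after $c\sqrt{N}$ total queries for a sufficiently large constant $c$ and output the current $y$.

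The analysis rests on two facts. First, when the current candidate has rank $i$ (so there are $i-1$ strictly smaller values in $T$), BBHT returns a smaller index in expected $O(\sqrt{N/(i-1)})$ queries to the marking oracle, hence to $\mathcal{O}_T$. Second, letting $X_i\in\{0,1\}$ indicate that the rank-$i$ element is ever visited as a current candidate, one shows $\E[X_i]=1/i$ by a symmetry argument over the uniformly random start. Combining these two bounds yields
\[
  \E[\text{total queries}] \;\leq\; O(1)+\sum_{i=2}^{N}\frac{1}{i}\cdot O\!\left(\sqrt{\tfrac{N}{i-1}}\right) \;=\; O(\sqrt{N})\sum_{i=2}^{N}\frac{1}{i\sqrt{i-1}} \;=\; O(\sqrt{N}),
\]
since $\sum_{i\geq 2} i^{-3/2}$ converges. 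A Markov bound then shows that truncating after $c\sqrt{N}$ queries for sufficiently large $c$ leaves at most a $1/2$ chance that the true minimum has not been reached, yielding the claimed $1/2$ success probability.

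The main obstacle is the combinatorial identity $\E[X_i]=1/i$: making it rigorous requires handling two intertwined sources of randomness, namely the uniform choice of starting index and the internal randomness of the BBHT subroutines. The standard remedy is to analyze an idealized version in which each successful BBHT call returns a uniformly random smaller-valued index, to prove $\E[X_i]=1/i$ in that model by a direct exchangeability argument over the permutation induced by the random start, and then to argue that the real algorithm's query count is stochastically dominated by this idealization. Once this identity is in place, the two bullets above close the proof; verifying that the marking oracle can be implemented reversibly with $O(1)$ queries and polynomially many ancillas is routine.
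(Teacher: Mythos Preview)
The paper does not prove this statement at all: it is quoted verbatim from D\"urr and H{\o}yer~\cite{DH96} as a black-box preliminary, with no accompanying argument. Your sketch is essentially the original D\"urr--H{\o}yer proof, and the overall structure (random start, BBHT to find a smaller element, the rank-visit probability $1/i$, and the convergent sum $\sum_i i^{-3/2}$) is correct.

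One point deserves correction. You frame the passage from the idealized model (each BBHT call returns a uniformly random smaller element) to the real algorithm as a stochastic-domination argument. This is not the right fix, and in fact it can fail: if BBHT were allowed to return an \emph{arbitrary} smaller element, an adversary could always step down by exactly one rank, forcing the algorithm to visit every rank from the start down to $1$ and incurring expected cost $\Theta(\sqrt{N}\cdot\E[\sqrt{r}])=\Theta(N)$ rather than $O(\sqrt{N})$. The actual reason the analysis goes through is that BBHT, like Grover, preserves the uniform superposition over marked states and hence, conditioned on success, outputs a uniformly random marked index. So the ``idealized'' model \emph{is} the real model, and the identity $\E[X_i]=1/i$ holds directly. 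Replace the stochastic-domination sentence with this observation and the proof is complete.
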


\subsection{Probability} 
We need the following lemma on the distribution of vector inner product which directly follows from the Leftover Hash Lemma~\cite{impagliazzo1989pseudo}.
\begin{lemma}\label{lemma:ip}
	Let $\mathbb{G}$ be a finite abelian group, and let $f$ be a positive integer. Let $\cY \subseteq \{0,1\}^f$. Define the inner product $\langle\cdot,\cdot \rangle:\mathbb{G}^f\times \cY \to \mathbb{G}$ by $\langle x,y \rangle=\sum_i x_i y_i$ for all $x\in \mathbb{G}^f, y \in \cY$. Let $X, Y$ be independent and uniformly random variables on $\mathbb{G}^f, \cY$, respectively.
	Then
	$$
	\sddist((\inProd{X}{Y},X),(U_{\mathbb{G}},X))
	\leq \frac{1}{2} \cdot \sqrt{\frac{|\mathbb{G}|}{|\cY|}} \;,
	$$
	where $U_{\mathbb{G}}$ is uniform in $\mathbb{G}$ and independent of $X$. 
\end{lemma}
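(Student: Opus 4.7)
The statement is precisely the Leftover Hash Lemma instantiated with the hash family $h_x : \mathcal{Y} \to \mathbb{G}$ defined by $h_x(y) = \langle x, y \rangle$, indexed by $x \in \mathbb{G}^f$. The plan is the classical three-step template: (i) establish pairwise independence (2-universality) of this family, (ii) bound the collision probability of $(X, \langle X, Y \rangle)$, and (iii) convert the collision bound into a statistical distance bound via Cauchy--Schwarz.

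For step (i), the goal is to show that for any fixed $y_1 \neq y_2$ in $\mathcal{Y}$, $\Pr_X[\langle X, y_1 \rangle = \langle X, y_2 \rangle] = 1/|\mathbb{G}|$. Since $y_1, y_2 \in \{0,1\}^f$ are distinct, the difference $z := y_1 - y_2 \in \{-1, 0, 1\}^f$ is nonzero, so there exists a coordinate $i$ with $z_i \in \{-1, +1\}$. Writing $\langle X, z \rangle = z_i X_i + \sum_{j \neq i} z_j X_j$ and conditioning on all $X_j$ for $j \neq i$, the remaining term $z_i X_i$ is (a signed version of) a uniform element of the abelian group $\mathbb{G}$, so $\langle X, z \rangle$ is uniform in $\mathbb{G}$. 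In particular it equals $0$ with probability exactly $1/|\mathbb{G}|$.

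For step (ii), I will compute the collision probability of the joint variable $Z = (X, \langle X, Y \rangle)$ on $\mathbb{G}^f \times \mathbb{G}$ by taking two independent copies $(X, Y)$ and $(X', Y')$. The event $Z = Z'$ forces $X = X'$ (contributing a factor $1/|\mathbb{G}|^f$) and then $\langle X, Y \rangle = \langle X, Y' \rangle$. Splitting the latter event according to whether $Y = Y'$ (contributing $1/|\mathcal{Y}|$, after which the hash equality is automatic) or $Y \neq Y'$ (contributing at most $1/|\mathbb{G}|$ by the 2-universality shown in step (i), uniformly over the choice of $X$), I get
\begin{equation*}
\mathrm{CP}(Z) \;\leq\; \frac{1}{|\mathbb{G}|^f} \left( \frac{1}{|\mathcal{Y}|} + \frac{1}{|\mathbb{G}|} \right).
\end{equation*}

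For step (iii), I will use the standard inequality that for any distribution $p$ on a set $S$ of size $N$, $d_{\text{SD}}(p, U_S) \leq \tfrac{1}{2}\sqrt{N \cdot \mathrm{CP}(p) - 1}$, which is a direct consequence of Cauchy--Schwarz applied to $\sum_s |p(s) - 1/N|$. Here $N = |\mathbb{G}|^f \cdot |\mathbb{G}|$, and the bound on $\mathrm{CP}(Z)$ from step (ii) gives $N \cdot \mathrm{CP}(Z) - 1 \leq |\mathbb{G}|/|\mathcal{Y}|$, whence the target bound $\tfrac{1}{2}\sqrt{|\mathbb{G}|/|\mathcal{Y}|}$ follows immediately. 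No step is really an obstacle---the only point requiring mild care is step (i), where one must use that the difference of two $\{0,1\}$-vectors has a coordinate equal to $\pm 1$ (not merely nonzero in some larger ring), so that multiplication by it induces a bijection of $\mathbb{G}$ and preserves uniformity.
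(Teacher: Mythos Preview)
Your proof is correct and is exactly the standard Leftover Hash Lemma argument that the paper invokes; the paper does not give its own proof but simply states that the lemma ``directly follows from the Leftover Hash Lemma~\cite{impagliazzo1989pseudo}''. Note that since $X$ is uniform on $\mathbb{G}^f$, the pair $(U_{\mathbb{G}},X)$ is uniform on $\mathbb{G}\times\mathbb{G}^f$, so your step~(iii) comparison to the uniform distribution on $\mathbb{G}^{f+1}$ is indeed the desired bound.
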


We will also need the Chernoff-Hoeffding bound~\cite{hoeffding1963probability}.

\begin{lemma}\label{lem:Chernoff}
	Let $X_1,\hdots,X_{M}$ be the independent and identically distributed random boolean variables of expectation $p$. Then for $\varepsilon>0$,
	\[
	\Pr\left[\frac{1}{M}\sum\limits_{i=1}^M X_i \leq p(1 -\delta)     \right]
	\leq \left(\frac{e^{-\delta}}{(1-\delta)^{1-\delta}}\right)^{pM}.
	\]
\end{lemma}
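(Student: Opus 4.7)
The plan is to prove this via the standard Chernoff moment-generating-function argument, specialized to the one-sided lower tail. Let $S = \sum_{i=1}^{M} X_i$, so that $\E[S] = pM$ and the goal is to bound $\Pr[S \le (1-\delta)pM]$.

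First I would apply Markov's inequality to the nonnegative random variable $e^{-tS}$ for an arbitrary parameter $t > 0$: since $S \le (1-\delta)pM$ is equivalent to $e^{-tS} \ge e^{-t(1-\delta)pM}$, we get
\[
\Pr[S \le (1-\delta)pM] \;\le\; e^{\,t(1-\delta)pM}\,\E[e^{-tS}].
\]
Next I would use the independence of the $X_i$'s to factor $\E[e^{-tS}] = \prod_{i=1}^M \E[e^{-tX_i}]$, each factor equal to $(1-p) + p\,e^{-t} = 1 - p(1-e^{-t})$. Applying the elementary inequality $1+x \le e^x$ with $x = -p(1-e^{-t})$ gives the clean bound $\E[e^{-tS}] \le \exp(-pM(1-e^{-t}))$, and combining with the Markov step yields
\[
\Pr[S \le (1-\delta)pM] \;\le\; \exp\!\bigl(pM\bigl[t(1-\delta) - (1-e^{-t})\bigr]\bigr).
\]

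The only real step to carry out is optimizing over $t$. Differentiating the bracketed exponent in $t$ gives $(1-\delta) - e^{-t}$, so the minimum is achieved at $e^{-t} = 1-\delta$, i.e., $t = -\ln(1-\delta)$, which is strictly positive for $\delta \in (0,1)$ so that the use of Markov above is legitimate. Substituting back, the exponent simplifies to
\[
pM\bigl[-(1-\delta)\ln(1-\delta) - \delta\bigr] \;=\; pM \cdot \ln\!\left(\frac{e^{-\delta}}{(1-\delta)^{1-\delta}}\right),
\]
and exponentiating gives exactly the claimed bound.

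The main thing to be careful about is the validity range: the argument above implicitly assumes $\delta \in (0,1)$ so that $t>0$ and the logarithms make sense; the cases $\delta \le 0$ and $\delta \ge 1$ are either vacuous or trivial ($\delta \le 0$ gives a right-hand side $\ge 1$, and $\delta \ge 1$ makes the event $S \le (1-\delta)pM \le 0$ occur only when all $X_i=0$, which is comfortably dominated by the right-hand side). I do not anticipate any significant obstacle; the proof is a textbook Chernoff computation and the only nontrivial step is the algebraic simplification of the optimized exponent into the stated closed form.
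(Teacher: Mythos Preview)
Your argument is the standard moment-generating-function proof of the Chernoff lower-tail bound and is correct for $\delta\in(0,1)$; the paper itself does not prove this lemma at all but simply quotes it with a reference to Hoeffding, so there is nothing to compare against. One minor quibble: your remark that for $\delta\le 0$ the right-hand side is $\ge 1$ is not actually true (the exponent $-\delta-(1-\delta)\ln(1-\delta)$ is maximized at $\delta=0$ and negative for $\delta<0$), but this is irrelevant since the lemma is only asserted for positive $\delta$.
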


\section{Algorithms with a time-memory tradeoff for lattice problems}
\label{tradeoff}

In this section, we present a new algorithm for Discrete Gaussian sampling above the smoothing parameter.  

\subsection{Algorithm for Discrete Gaussian Sampling}
We now present the main result of this section.

\begin{theorem}
\label{thm:DGS}
	Let $n \in \mathbb{N}, q \ge 2, d \in [1,n]$ be positive integers, and let $\eps > 0$. Let $C$ be any positive integer. Let $\cL$ be a lattice of rank $n$,
	and let  $s\geq 2\sqrt{d} q \eta_\eps (\cL)$. There is an algorithm that, given $N =  160d^2 \cdot C \cdot q^{n/d} $ independent samples from $D_{\cL,s}$, outputs a list of vectors that is $(4\eps^{2d} N + 11C q^{-5n/2})$-close to $C q^{n/d}$ independent vectors from $D_{\cL,\frac{\sqrt{8d+1}}{q}s}$. The algorithm runs in time $C \cdot (10e\cdot d)^{8d}\cdot  q^{8n+n/d+o(n)}$ and requires memory $\poly(d)\cdot q^{n/d}$. 
\end{theorem}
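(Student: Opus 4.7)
The plan is to generalize the $2^{n+o(n)}$-time DGS algorithm of \cite{ADRSD15}, which combines pairs of samples modulo $2\cL$, to combinations of $8d{+}1$ samples modulo $q\cL$. First I would split the $N$ input samples from $D_{\cL,s}$ into an ``output list'' $L_{1}$ of size $Cq^{n/d}$ and a ``pool'' $L_{2}$ of the remaining $\approx 160d^{2}Cq^{n/d}$ vectors. The algorithm then processes $L_{1}$ one vector at a time: for each $\vect{v}\in L_{1}$ it exhaustively searches over all subsets of $8d$ still-unused vectors in $L_{2}$ for one, $\vect{x}_{1},\ldots,\vect{x}_{8d}$, whose sum lies in $q\cL+\vect{v}$, outputs $(\vect{x}_{1}+\cdots+\vect{x}_{8d}-\vect{v})/q\in\cL$, and marks the $\vect{x}_{i}$ as used. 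This matches the scheme sketched in Section~\ref{sec:overview}.

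The first correctness statement is that a valid subset exists at every iteration. The hypothesis $s\geq 2\sqrt{d}\,q\eta_{\eps}(\cL)\geq\eta_{\eps}(q\cL)$ combined with Corollary~\ref{cor:above_smoothing} gives that the coset of each input sample modulo $q\cL$ is $2\eps$-close to uniform in $\cL/q\cL$. Lemma~\ref{lemma:ip}, instantiated with $\mathbb{G}=\cL/q\cL$ and $\cY\subseteq\{0,1\}^{M}$ the strings of Hamming weight $8d$, then shows that the coset of a uniformly random $8d$-subset-sum of a pool of $M$ near-uniform samples is $\tfrac{1}{2}\sqrt{q^{n}/\binom{M}{8d}}$-close to uniform. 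Since at most $8dCq^{n/d}$ vectors are ever removed while $|L_{2}|$ starts at $\approx 160d^{2}Cq^{n/d}$, we always have $M\geq 150d^{2}Cq^{n/d}$ and $\binom{M}{8d}\geq(18dCq^{n/d})^{8d}$, making this distance at most $(18dC)^{-4d}q^{-7n/2}$; a union bound over the $Cq^{n/d}$ iterations shows every search succeeds except with probability $O(Cq^{-5n/2})$, matching the $11Cq^{-5n/2}$ term of the theorem.

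The second correctness statement is about the output distribution. Conditioned on a successful combination, I apply Theorem~\ref{theorem_SIS} to the lattice $q\cL$ with $\vect{z}=(1,\ldots,1,-1)$ of length $8d{+}1$, for which $\gcd(\vect{z})=1$ and $\|\vect{z}\|_{\infty}=1$. The hypothesis $s\geq 2\sqrt{d}\,q\eta_{\eps}(\cL)$ together with Lemma~\ref{lemma_smoothing} upgrades the required smoothing condition $s\geq\sqrt{2}\,\eta_{\eps}(q\cL)$ to $s\geq\sqrt{2}\,\eta_{\eps^{2d}}(q\cL)$, so Theorem~\ref{theorem_SIS} yields that $\sum\vect{x}_{i}-\vect{v}$ is $2\eps^{2d}$-close to $D_{q\cL,\,s\sqrt{8d+1}}$; dividing by $q$ produces $D_{\cL,\,s\sqrt{8d+1}/q}$. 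Accumulating these per-sample errors over the $N$ input vectors gives the $4\eps^{2d}N$ term. Independence across outputs follows because distinct iterations consume disjoint subsets of input samples.

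The main obstacle is the \emph{sequential dependence} introduced by removing used vectors from $L_{2}$: after several iterations the residual pool is no longer a list of i.i.d.\ samples, so Lemma~\ref{lemma:ip}'s uniformity hypothesis does not follow directly. I would handle this by a concentration-then-worst-case argument: Lemma~\ref{lemma:ip} is purely combinatorial and continues to hold whenever the multiset of cosets appearing in the current $L_{2}$ is close to balanced, so it suffices to check that with overwhelming probability the initial $L_{2}$ has each coset close to its expected count $|L_{2}|/q^{n}$. Since $|L_{2}|/q^{n}\gg\poly(d)$, a Chernoff bound (Lemma~\ref{lem:Chernoff}) establishes this with error $q^{-\omega(1)}$, and removing only a $1/(20d)$-fraction of $L_{2}$ across all iterations perturbs the per-coset counts by at most a $1/\poly(d)$ factor. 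For complexity, each iteration enumerates at most $(e|L_{2}|/(8d))^{8d}\leq(20edC)^{8d}q^{8n}$ subsets and runs a $\poly(n)$-time coset check on each, giving total running time $C(10ed)^{8d}\,q^{8n+n/d+o(n)}$; working memory is $\poly(d)\cdot q^{n/d}$ for the list of used indices and bookkeeping, excluding the input and output lists themselves.
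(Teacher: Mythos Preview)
Your overall algorithm and the use of Corollary~\ref{cor:above_smoothing}, Lemma~\ref{lemma:ip}, and Theorem~\ref{theorem_SIS} match the paper's approach closely. The genuine gap is in your handling of the sequential dependence. You write that ``with overwhelming probability the initial $L_{2}$ has each coset close to its expected count $|L_{2}|/q^{n}$'' and that ``$|L_{2}|/q^{n}\gg\poly(d)$''. This is false: $|L_{2}|\approx 160d^{2}Cq^{n/d}$ while there are $q^{n}$ cosets, so $|L_{2}|/q^{n}\approx 160d^{2}Cq^{-n(1-1/d)}$ is exponentially small for any $d\geq 2$. Almost all cosets contain no vector of $L_{2}$ at all, so there is no per-coset concentration to appeal to, and the ``close to balanced'' premise you feed back into Lemma~\ref{lemma:ip} simply does not hold.

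The paper avoids this by working at the level of $8d$-combinations rather than single vectors. It applies Lemma~\ref{lemma:ip} once and then uses Markov's inequality to conclude that, with high probability over the \emph{initial} coset vector of $L_{2}$, every coset of $\cL/q\cL$ is hit by at least $0.9\,q^{-n}\binom{N'}{8d}$ of the $8d$-subsets (this count \emph{is} exponentially large). Removing $8id$ vectors kills at most $\binom{N'}{8d}-\binom{N'-8id}{8d}$ combinations, and a short computation shows this can wipe out at most $\tfrac{2}{3}q^{n}$ cosets. Since each $\vect{v}\in L_{1}$ is a fresh uniform coset, every iteration succeeds with probability at least $1/3$; the paper therefore takes $|L_{1}|=N/2=80d^{2}q^{n/d}$ (not $Cq^{n/d}$ as you do) and uses Chernoff over these many attempts to guarantee $q^{n/d}$ successes. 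Your smaller $L_{1}$ together with the requirement that \emph{every} search succeed would need a much stronger per-iteration guarantee than is available.
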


\begin{algorithm}
\caption{TradeOffSieve($L$)\label{alg:TradeOffSieve}}
\begin{algorithmic}[1]
    \REQUIRE{$L=\set{\vect{x}_1,\ldots,\vect{x}_N}$ is a list of $N$ vectors in $\cL$}
    \STATE $L_1\gets\{\vect{x}_1,\hdots,\vect{x}_{\frac{N}{2}}\}$
    \STATE $L_2\gets\{\vect{x}_{\frac{N}{2}+1},\hdots,\vect{x}_N\}$
    \STATE $Q\gets 0$
    \WHILE{$Q<q^{n/d}$ and $L_1$ is not empty}
        \STATE $\vect{v}\gets$ be the first vector in $L_1$.
        \STATE\label{tradeoff_svp:step:3} Find $8d$  vectors (by trying all $8d$-tuples)
            $\vect{x}_{i_1}, \ldots, \vect{x}_{i_{8d}}$ from $L_2$ s.t. $\vect{c}_{i_1}+ \cdots+ \vect{c}_{i_{8d}}-\vect{v} \in q \cL$.
        \IF{such vectors exist}
            \STATE\label{tradeoff_svp:step:8} Output the vector $\frac{\vect{x}_{i_1}+ \cdots+ \vect{x}_{i_{8d}}-\vect{v}}{q} \in \cL$
            \STATE $Q\gets Q+1$.
            \STATE Remove vectors $\vect{x}_{i_1}, \cdots, \vect{x}_{i_{8d}}$ from $L_2$
        \ENDIF
        \STATE Remove vector $\vec{v}$ from $L_1$
    \ENDWHILE
\end{algorithmic}
\end{algorithm}

\begin{proof}
We prove the result for $C = 1$, and the general result follows by repeating the algorithm. 
Let $\{\vect{x}_1,\hdots,\vect{x}_N\}$ be the $N$ input vectors and let $\{\vect{c}_1,\hdots,\vect{c}_N\}$ be the corresponding cosets in  $\cL /q\cL$.
We will analyze Algorithm~\ref{alg:TradeOffSieve}; note that it produces samples in a streaming
fashion. The time complexity of the algorithm is 
\[ \frac{N}{2}\cdot {N/2 \choose 8d} \le \frac{N}{2}\left(\frac{eN}{16d}\right)^{8d} \le (10e\cdot d)^{8d}\cdot q^{8n+n/d+o(n)}\;, \]
and memory requirement of the algorithm is immediate. We now show correctness:
we will make repeated use of the data processing inequality (see Section~\ref{sec:data_process_ineq})
and accumulate the error terms until the end of the proof.

Let $\eps'=\eps^{2d}$ so that $s\geq \sqrt{2}\eta_{\eps'}(q\cL)$ by Lemma \ref{lemma_smoothing}.
First, we can assume that the vectors $\vect{x}_i$ for $i \in [N]$ are obtained by first sampling
$\vect{c}_i \in {\cL}/{q\cL}$ such that $\Pr[\vect{c}_i = \vect{c}] = \Pr[D_{\cL,s} \in q\cL+\vect{c}]$
and then sampling the vector $\vect{x}_i$ according to $D_{q\cL+\vect{c}_i, s}$. Indeed,
let $X\sim\DGauss{\cL}{s}$ and $C$ be a random coset in $\cL/q\cL$ sampled such that $\Pr[C = q\cL + \vect{c}] = \frac{\rho_s(q \cL + \vect{c})}{\rho_s(\cL)}$,
then for any $\vect{x}\in\cL$,
\[
    \Pr[X=\vect{x}]
        =\frac{\rho_s(\vect{x})}{\rho_s(\cL)}
        =\frac{\rho_s(q \cL + \vect{c})}{\rho_s(\cL)} \frac{\rho_s(\vect{x})}{\rho_s(q \cL + \vect{c})}
        =\Pr[C=\vect{c}]\Pr_{Y\sim\DGauss{q\cL+\vect{c}}{s}}[Y=\vect{x}]
\]
where $\vect{c}=q\cL+\vect{x}$.
Moreover, by Corollary~\ref{cor:above_smoothing}, this distribution is $2\eps' N$-close
to sampling $\vect{c}_i$ for $i \in [N]$, independently and uniformly from $\cL /q \cL$, and then sampling the vectors $\vect{x}_i$ according to $D_{q\cL+\vect{c}_i, s}$. 
We now assume that the input is sampled from this distribution.

Second, we can assume that the algorithm initially gets only the corresponding cosets as input,
and the vectors $\vect{x}_{i_j} \in q\cL + \vect{c}_{i_j}$ for $j \in [8d]$, and $\vect{v} \in q\cL + \vect{c}$
are sampled from $D_{q\cL+\vect{c}_{i_j}, s}$ and $D_{q\cL+\vect{c},s}$ only before such a tuple is needed in Step~\ref{tradeoff_svp:step:8} of the algorithm.
Indeed, notice that the test at line~\ref{tradeoff_svp:step:3} does not actually depend on the particular
value of the $\vect{x}_{i_j}$ and $\vect{v}$ but only on their cosets.
Since any input vector is used only once in Step~\ref{tradeoff_svp:step:8}, these samples are independent of all prior steps.
This implies, by Theorem~\ref{theorem_SIS}, that the vector obtained in Step~\ref{tradeoff_svp:step:8} of the algorithm is $2\eps'$-close to being distributed as $D_{\cL, s \frac{\sqrt{8d+1}}{q}}$.

It remains to show that our algorithm finds $q^{n/d}$ vectors (with high probability). Let $N'=\frac{N}{2}$ be an integer, $X$ be a random variable uniform over $(\cL/q\cL)^{N'}$, and let $Y$ be a random variable independent of $X$ and uniform over vectors in $\{0,1\}^{N'}$ with Hamming weight $8d$. The number of such vectors is 
\begin{equation}\label{eq:bound_number_vectors}
{N' \choose 8d} \ge \left(\frac{N'}{8d}\right)^{8d} \geq q^{8n}\;.
\end{equation}
Let $U$ be a uniformly random coset of $\cL/q\cL$. By  Lemma~\ref{lemma:ip} and \eqref{eq:bound_number_vectors},
we have 
\[
        \sddist((\inProd{X}{Y},X),(U,X))
            \leq \frac{1}{2} \cdot \sqrt{\frac{q^n}{ q^{8n}}}
            =\frac{1}{2}q^{-7n/2}\;.
\] 
By Markov inequality we have
\begin{align*}
    \Pr_{x\gets X}\left[\sddist(\inProd{x}{Y},U)\geqslant \frac{q^{-n}}{10}\right]
        &\leqslant\frac{10}{q^{-n}}\Exp_{x\gets X}\left[\sddist(\inProd{x}{Y},U)\right]\\
        &=\frac{20}{q^{-n}}\sddist((\inProd{X}{Y},X),(U,X))\\
        &\leqslant \frac{20}{q^{-n}}\cdot\frac{1}{2}q^{-7n/2}=10q^{-5n/2}.
\end{align*}
Hence, with probability at least $1-10q^{-5n/2}$ over the choice of $x \leftarrow X$,
we have that $\sddist(\inProd{x}{Y},U)\leqslant \frac{q^{-n}}{10}$ and thus for any $\vect{v}\in \cL/q\cL$,
\begin{equation}
  q^{-n} + \frac{q^{-n}}{10} >\Pr[\langle x, Y \rangle = \vect{v}\bmod q\cL] > q^{-n} - \frac{q^{-n}}{10}.
  \label{uniform}
\end{equation}
It follows that, by introducing a statistical distance of at most $10q^{-5n/2}$ on the input, we
can assume that the input vectors in list $L_2$ satisfy \eqref{uniform}.
Notice that after the algorithm found $i$ vectors for any $i < q^{n/d} $, it has removed $8id$ vectors from $L_2$.  We will show that for each vector from $L_1$ (which is uniformly sampled from $\cL/q\cL$) with constant probability we will find $8d$-vectors in Step ~\ref{tradeoff_svp:step:3}.

After $i<q^{n/d}$ output vectors have been found, there are $M=N'-8id$ vectors remaining in the list $L_2$. There are $\binom{M}{8d}$ different $8d$-combinations possible with vectors remaining in $L_2$. 
\begin{align*}
    \binom{N'}{8d}/\binom{M}{8d}
    &= \frac{N'\cdots (N'-8d+1)}{M\cdots (M-8d+1)}\\
    &< \left( \frac{N'-8d}{N'-8d(i+1)}\right)^{8d}\\
    &< \left(1+\frac{8dq^{n/d}}{N'-8dq^{n/d}-8d}\right)^{8d}\\
    &= \left(1+\frac{1}{10d-1-\frac{1}{q^{n/d}}}\right)^{8d}&&\text{since $N'=80d^2q^{n/d}$ for $C=1$}\\
    &\leq   \left(1+\frac{1}{10d-3/2}\right)^{8d}< \frac{5}{2}.
    \addtocounter{equation}{1}\tag{\theequation}\label{eq:ratio_vectors}
\end{align*}
At the beginning of the algorithm, there are ${N'\choose 8d}$ combinations, and hence by \eqref{uniform},
each of the $q^n$ cosets appears
at least $0.9q^{-n}{N'\choose 8d}$ times. After $i<q^{n/d}$ output vectors have been found, there are only $\binom{M}{8d}$ combinations left,
and 
${N'\choose 8d}-{M\choose 8d}$ possible combinations have been removed. We say that a coset $\vect{c}$ disappears if there is no set of $8d$ vectors in $L_2$ that add to $\vect{c}$. In order for a coset to disappear, all of the at least $0.9q^{-n}{N'\choose 8d}$
combinations from the initial list must be removed. Hence, the number of cosets that disappear is at most
$\tfrac{{N'\choose 8d}-{M\choose 8d}}{0.9q^{-n}{N'\choose 8d}}
<\tfrac{3/5}{0.9}q^n=\tfrac{2}{3}q^n$ distinct cosets by \eqref{eq:ratio_vectors}.
Hence with probability at least $1/3$, we find $8d$  vectors $\vect{x}_{i_1}, \ldots, \vect{x}_{i_{8d}}$ from $L_2$ such that $\vect{x}_{i_1}+ \cdots+ \vect{x}_{i_{8d}}-\vect{v} \in q \cL$.
By Chernoff-Hoeffding bound with probability greater than $1-e^{-d^2q^{n/d}}$, the algorithm finds at least $q^{n/d}$ vectors.
In total, the statistical distance from the desired distribution is
\[2\epsilon'N+2\epsilon'q^{n/d}+10\cdot q^{-5n/2}+e^{-d^2q^{n/d}} \leq 4\epsilon'N+11\cdot q^{-5n/2}.\]

\end{proof}

\begin{corollary}\label{cor:DGS}
    Let $n\in \mathbb{N}$, $q \in [4,\sqrt{n}]$ be an integer, and let $\eps = q^{-32n/q^2}$.
    Let $\cL$ be a lattice of rank $n$, and let  $s \geq \eta_{\epsilon}(\cL)$.
    There is an algorithm that outputs a list of vectors that is $q^{-\Omega(n)}$-close to
    $q^{16n/{q^2}}$ independent vectors from $D_{\cL,s}$.
    The algorithm runs in time $q^{13n+o(n)}$ and requires memory
    $\poly(n)\cdot q^{16n/{q^2}}$. 
\end{corollary}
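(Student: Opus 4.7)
The plan is to combine Theorem~\ref{intial_dgs} with repeated applications of Theorem~\ref{thm:DGS}. Since Theorem~\ref{intial_dgs} only produces samples at a Gaussian width of at least $2^{n\log\log n/\log n}\eta_{\eps}(\cL)$, substantially above the target $s$, I would first generate a large pool at a high initial width and then iteratively shrink the width via Theorem~\ref{thm:DGS} until reaching $s$.

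Concretely, set $d := q^2/16$, so that each call to Theorem~\ref{thm:DGS} outputs $q^{n/d} = q^{16n/q^2}$ samples and shrinks the width by the factor $q/\sqrt{8d+1} \approx \sqrt{2}$. Choose $T$ minimal such that $(\sqrt{2})^T \cdot s \geq 2^{n\log\log n/\log n}\eta_{\eps}(\cL)$; since $s \geq \eta_{\eps}(\cL)$, this gives $T = \Theta(n\log\log n/\log n) = o(n)$. Set $s_0 := (\sqrt{8d+1}/q)^{-T}\cdot s$ and invoke Theorem~\ref{intial_dgs} to generate $N_0 = (160 d^2)^T \cdot q^{16n/q^2}$ samples at width $s_0$. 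Then apply Theorem~\ref{thm:DGS} exactly $T$ times in sequence, with parameter $d$ and constant $C_i := (160 d^2)^{T-i}$ in the $i$-th call, so that the $i$-th output feeds the $(i+1)$-th input and the final call produces $C_T\cdot q^{n/d} = q^{16n/q^2}$ samples at width $s$.

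For the time complexity, each invocation of Theorem~\ref{thm:DGS} costs $C_i\cdot(10ed)^{8d}\cdot q^{8n+n/d+o(n)}$. Plugging in $d=q^2/16$, we have $(10ed)^{8d} = q^{O(q^2)} \leq q^{O(n)}$ for $q\leq\sqrt{n}$, $q^{n/d} = q^{16n/q^2} \leq q^n$, and $C_i \leq C_0 \leq q^{4T} = q^{o(n)}$; so each iteration costs at most $q^{O(n)}$, and summed over $T = o(n)$ iterations the total time stays within $q^{13n+o(n)}$. The memory is dominated by the largest intermediate list, of size $N_0 = (160d^2)^T \cdot q^{16n/q^2} = \poly(n)\cdot q^{16n/q^2}$ after absorbing the $q^{o(n)}$ overhead. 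For the error, each call of Theorem~\ref{thm:DGS} contributes at most $4\eps^{2d}N + 11Cq^{-5n/2}$. With $\eps = q^{-32n/q^2}$ and $2d = q^2/8$ we get $\eps^{2d} = q^{-4n}$, hence $\eps^{2d}N \leq q^{-3n+o(n)}$ and the other term is smaller; summing over $T = o(n)$ iterations yields total statistical distance $q^{-\Omega(n)}$.

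The main technical subtlety will be to verify that the intermediate Gaussian width stays strictly above the threshold $2\sqrt{d}q\,\eta_\eps(\cL) = (q^2/2)\eta_\eps(\cL)$ required by Theorem~\ref{thm:DGS} at every step of the recursion, and that the final iteration exactly hits any prescribed $s \geq \eta_\eps(\cL)$; this may require either adjusting $d$ in the final call (which only affects constants in the exponents) or picking a slightly finer internal smoothing precision and invoking Lemma~\ref{lemma_smoothing} to control the resulting smoothing parameter. Tracking the $q^{o(n)}$ bookkeeping factors so that they can be absorbed into $\poly(n)$ for the memory bound and into $q^{o(n)}$ for the time bound is the other point that needs care.
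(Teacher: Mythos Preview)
Your overall plan matches the paper's, but there are two genuine gaps, and the first is not the ``technical subtlety'' you describe---it is the heart of the argument.

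\textbf{The iteration cannot reach $s$.} Theorem~\ref{thm:DGS} requires the input width to satisfy $s_{\mathrm{in}}\ge 2\sqrt{d}\,q\,\eta_\eps(\cL)$. With $d\approx q^2/16$ this threshold is $(q^2/2)\eta_\eps(\cL)$, while the input to your final step has width $\approx\sqrt{2}\,s$; for $s=\eta_\eps(\cL)$ and $q\ge4$ the hypothesis fails badly. Neither of your proposed fixes works: taking a smaller $d$ in the last step would need $2\sqrt{d}q\le\sqrt{2}$, impossible for integer $d\ge1$ and $q\ge4$; and Lemma~\ref{lemma_smoothing} only bounds $\eta_{\eps^{k^2}}$ \emph{above} by $k\eta_\eps$, so it cannot be used to shrink the smoothing parameter by the required factor $q^2/2$ while keeping $\eps^{2d}N$ negligible. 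The paper instead iterates only down to width $ps$ with $p=\lceil 2\sqrt{d}q\rceil$, then bridges the gap $ps\to s$ by rejection: a sample from $D_{\cL,ps}$ lands in $p\cL$ with probability $\ge p^{-n}$, and conditioned on that event it is (after dividing by $p$) a sample from $D_{\cL,s}$. This costs an extra factor $\approx 2p^n=O(q^{2n})$ in time, which is precisely what pushes the exponent from roughly $11n$ to $13n$.

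\textbf{The memory bound does not follow.} You claim the largest list has size $N_0=(160d^2)^T q^{16n/q^2}=\poly(n)\cdot q^{16n/q^2}$. But $T=\Theta(n\log\log n/\log n)$ and $160d^2=\poly(n)$, so $(160d^2)^T=2^{\Theta(n\log\log n)}$ when $q$ is near $\sqrt{n}$ (and $2^{\Theta(n\log\log n/\log n)}$ even for constant $q$); this is $q^{o(n)}$ but never $\poly(n)$, so it cannot be ``absorbed''. The paper avoids storing $N_0$ vectors altogether by running the $k$ levels as a pipeline: whenever a level accumulates $160d^2 q^{n/d}$ vectors, apply Theorem~\ref{thm:DGS} and push $q^{n/d}$ vectors to the next level. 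Thus at most $k\cdot160d^2 q^{n/d}=\poly(n)\cdot q^{16n/q^2}$ vectors are ever stored simultaneously.
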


\begin{proof}
Choose $d$ so that $16d-16< q^2\leqslant 16d$, which is possible when $q\geqslant 4$, and let $\alpha=q/\sqrt{8d+1}$ --- this is the ratio by which we decrease the Gaussian width in Theorem~\ref{thm:DGS} --- and note that $\alpha \geq 1.2$. 

Let  $p = \lceil 2\sqrt{d}q \rceil < q^2$ and $k$ be the smallest integer such that $\alpha^k \cdot p \ge 2^{n \log \log n/\log n}$. Thus $k = O({n \log \log n/\log n})$. Let $g= \alpha^k p s \ge 2^{n \log \log n/\log n}\cdot \eta_\epsilon (\cL)$.   By Theorem \ref{intial_dgs}, in time $N_0\cdot \poly(n)$, we get  $N_0=(160d^2)^k q^{n/d}$
samples from $D_{\cL,g}$.

We now iterate $k$ times the algorithm from Theorem~\ref{thm:DGS}.
Initially we have $N_0$ vectors. At the beginning of the $i$-th iteration for $i \le k-1$,
we have  $N_i:=N_0 \cdot (160d^2)^{-i}$ vectors that are $\Delta_i$-close to being independently distributed
from $D_{\cL,\alpha^{-i}g}$, where
$
    \alpha^{-i}g
        \geqslant \alpha p\cdot \eta_{\eps}(\cL)
$. 
Hence, we can apply Theorem~\ref{thm:DGS} and get $N_{i+1}=N_i/160d^2$ vectors
that are $\Delta_{i+1}$-close to being independently distributed from $D_{\cL,\alpha^{-(i+1)}g}$,
where $\Delta_{i+1}\leqslant\Delta_i+4\eps^{2d}N_i + 11(160d^2)^{k-i} q^{-5n/2}$. At each
iteration we had $N_i \geq 160d^2q^{n/d}$ vectors, a necessary condition to
apply Theorem~\ref{thm:DGS}. Therefore after $k$ iterations,
we have at least $N_k= N_0/(160d^2)^k = q^{n/d}$ samples
that are $\Delta_k$-close to being independently distributed from $D_{\cL,\alpha^{-k}g} $, where
\begin{align*}
    \Delta_k
        &\leqslant 11 q^{-5n/2} \sum_{i=1}^{k} (160d^2)^{k-i}+ \sum\limits_{i=0}^{k-1}10d\eps^{2d}N_i\\
        &\le 11 (160d^2)^k q^{-5n/2}+10dq^{-4n} q^{n/d}\sum_{i=0}^{k-1} (160d^2)^{k-i} && \text{since }16d\geqslant q^2\\
        &\le  \left(11q^{-5n/2}+10d q^{-4n + n/d}\right)(160d^2)^{k+1}=q^{-5n/2+o(n)}
        &&\text{since }(160d^2)^{k+1}=q^{o(n)}.
\end{align*}
Any vector distributed as $D_{\cL,ps}$ is in $p\cL$ with probability at least $p^{-n}$. We repeat the algorithm $2p^n=O(q^{2n})$ times to obtain $p^n \cdot 2 \cdot q^{n/d}$ vectors that are $2 p^nq^{-5n/2+o(n)} = q^{-n/2 + o(n)}$ close to $2p^n \cdot q^{n/d}$ independent samples from $D_{\cL,ps}$.
Of these samples obtained, we only keep vectors
that fall in $p\cL$ and divide them by $p$. Let $M=p^n\cdot 2\cdot q^{n/d}$.
By Chernoff-Hoeffding (Lemma~\ref{lem:Chernoff}) with $P=p^{-n}$, and $\delta=\tfrac{1}{2}$, the probability
to obtain less than $(1-\delta)PM=q^{n/d}$ samples is at most $\left(\frac{e^{-\delta}}{(1-\delta)^{1-\delta}}\right)^{PM}\leqslant e^{-\tfrac{1}{10}q^{n/d}}$.
Furthermore, $d\leqslant\tfrac{q^2+16}{16}$ and $q\mapsto\frac{\ln q}{16+q^2}$ is decreasing for
$q\geqslant4$, hence for $q\leqslant\sqrt{n}$,
\[
    q^{n/d}
        \geqslant e^{16n\frac{\ln q}{16+q^2}}
        \geqslant e^{16n\frac{\ln\sqrt{n}}{16+n}}
        \geqslant e^{16\ln\sqrt{n}-o(1)}
        =\Omega(n^8).
\]
Hence with probability greater than $1-e^{-\tfrac{1}{10}q^{n/d}}=1-q^{-\Omega(n^8)}$,
we get $q^{n/d}$ vectors from the distribution $D_{\cL,s}$. The statistical distance from the desired distribution is 
$q^{-\Omega(n^8)} +q^{-n/2 +o(n)}\leq q^{-n/2 +o(n)}$.
We repeat this for $\frac{q^{16n/q^2}}{q^{n/d}}$ times, to get $q^{16n/q^2}$ vectors. The total statistical distance from the desired distribution is $\frac{q^{16n/q^2}}{q^{n/d}} \cdot q^{-n/2 +o(n)} \leq q^{-\Omega(n)}$.
The total running time is bounded by
\[
    q^{2n}\left(\frac{q^{16n/q^2}}{q^{n/d}}\right) \left(\poly(n)\cdot N_0+\sum_{i=0}^{k-1}(10ed)^{8d} \cdot (160d^2)^{k-i}q^{8n+n/d+o(n)}\right)
    \leqslant q^{13n+o(n)}.
\]
The memory usage is slightly more involved: we can think of the $k$ iterations
as a pipeline with $k$ intermediate lists and we observe that as soon as a list
(at any level) has more than $160d^2q^{16n/q^2}$ elements, we can apply
Theorem~\ref{thm:DGS} to produce $q^{16n/q^2}$ vectors at the next level.
Hence, we can ensure
that at any time, each level contains at most $160d^2q^{16n/q^2}$ vectors, so in total
we only need to store at most $k\cdot 160d^2q^{16n/q^2}=\poly(n)q^{16n/q^2}$ vectors,
to which we add the memory usage of the algorithm of Theorem~\ref{thm:DGS} which
is bounded by $\poly(n)\cdot q^{n/d}\leqslant \poly(n)\cdot q^{16n/q^2}$.
Finally, we run the filter ($p\cL$) on the fly at the end of the $k$ iterations
to avoid storing useless samples.
\qed
\end{proof}

This tradeoff works for any $q \geq 4$, and the running time can be bounded by $c_1^{n+o(n)} \cdot q^{c_2 n}$ for some constants $c_1$ and $c_2$ that we have not tried to optimize.

\subsection{Algorithms for $\BDD$ and $\svp$}
\begin{theorem}\label{thm:BDD}
	Let $n \in \mathbb{N}, q \in [4,\sqrt{n}]$ be a positive integer. Let $\cL$ be a lattice of rank $n$. There is a randomized algorithm that solves $0.1/q$-$\BDD$ in time $q^{13n+o(n)}$ and requires memory $\poly(n)\cdot q^{16n/q^2}$. 
\end{theorem}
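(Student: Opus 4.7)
The plan is to chain the Discrete Gaussian sampler of Corollary~\ref{cor:DGS} with the $\BDD$-to-$\DGS$ reduction of Theorem~\ref{theorem_BDDtoDGS}, and then verify that the decoding radius $\phi(\cL)$ guaranteed by the reduction exceeds $(0.1/q)\lambda_1(\cL)$. I would fix $\epsilon = q^{-32n/q^2}$ so that $\sqrt{\epsilon} = q^{-16n/q^2}$ and invoke Corollary~\ref{cor:DGS} to produce samples from $D_{\cL,s}$ at some $s\geq\eta_\epsilon(\cL)$. A single call outputs $q^{16n/q^2}$ samples in time $q^{13n+o(n)}$ and memory $\poly(n)\cdot q^{16n/q^2}$; Theorem~\ref{theorem_BDDtoDGS} needs $m = O(n\log(1/\epsilon)/\sqrt{\epsilon}) = \poly(n)\cdot q^{16n/q^2}$ samples (at statistical distance $0.5$, easily dominated by the $q^{-\Omega(n)}$ distance provided by Corollary~\ref{cor:DGS}), so $\poly(n)$ repetitions accumulate enough samples without changing the asymptotic bounds. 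Feeding these samples into Theorem~\ref{theorem_BDDtoDGS} yields a randomized $\CVP^\phi$ solver with decoding radius $\phi(\cL) = \sqrt{\ln(1/\epsilon)/\pi - o(1)}/(2\eta_\epsilon(\cL^*))$, in additional time $m\cdot\poly(n)$, again absorbed into $q^{13n+o(n)}$.

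The only genuine computation is then to show $\phi(\cL)\geq(0.1/q)\lambda_1(\cL)$. Using Lemma~\ref{lemma_smoothinglambda}, I would split into two regimes. When $\epsilon\leq (e/\beta^2+o(1))^{-n/2}$ (which covers the moderate range of $q$), inequality~(\ref{ineq:smoothingtosvsmall}) gives $\lambda_1(\cL)\eta_\epsilon(\cL^*) < \sqrt{(\ln(1/\epsilon)+n\ln\beta+o(n))/\pi}$, whence
\[
    \frac{\phi(\cL)}{\lambda_1(\cL)}
    \geq \frac{1-o(1)}{2\sqrt{1+\frac{n\ln\beta}{\ln(1/\epsilon)}}}
    = \frac{1-o(1)}{2\sqrt{1+\frac{q^2\ln\beta}{32\ln q}}},
\]
which, using $\beta\leq 2^{0.402}$, stays well above $0.1/q$ for every $q\in[4,\sqrt n]$ in this regime. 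In the complementary regime (large $q$, where the above condition on $\epsilon$ fails), the general bound~(\ref{ineq:smoothingtosvlarge}) yields $\lambda_1(\cL)\eta_\epsilon(\cL^*) < \sqrt{\beta^2 n/(2\pi e)}\cdot q^{32/q^2}(1+o(1))$, and a direct calculation gives $\phi(\cL)/\lambda_1(\cL) = \Omega(\sqrt{\ln q}/q) \gg 0.1/q$ for large enough $n$.

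The main obstacle is the case split in the radius computation and checking that the constant $0.1$ is uniformly achievable across the whole range $q\in[4,\sqrt n]$; everything else is a direct composition of the previously established tools. The correctness probabilities and statistical distance errors coming out of Corollary~\ref{cor:DGS} are super-polynomially small in $n$ and are easily absorbed into the success probability of the $\BDD$ reduction of Theorem~\ref{theorem_BDDtoDGS}.
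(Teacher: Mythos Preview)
Your proposal is correct and follows exactly the same skeleton as the paper: invoke Corollary~\ref{cor:DGS} with $\epsilon=q^{-32n/q^2}$, repeat $\poly(n)$ times to produce the $m=O(n\log(1/\epsilon)/\sqrt{\epsilon})$ samples demanded by Theorem~\ref{theorem_BDDtoDGS}, and then bound the decoding radius via Lemma~\ref{lemma_smoothinglambda}.

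The only difference is in the radius verification. You propose a case split between the two inequalities of Lemma~\ref{lemma_smoothinglambda}, using~\eqref{ineq:smoothingtosvsmall} for moderate $q$ and falling back to~\eqref{ineq:smoothingtosvlarge} when the hypothesis $\epsilon\leq(e/\beta^2+o(1))^{-n/2}$ fails. The paper skips this entirely and applies the unconditional bound~\eqref{ineq:smoothingtosvlarge} over the whole range $q\in[4,\sqrt{n}]$, obtaining in one line
\[
\alpha\;\geq\;\sqrt{\frac{\log(1/\epsilon)}{2n(\beta^2/e)\epsilon^{-2/n}}}\cdot(1-o(1))
\;=\;\frac{1}{q}\sqrt{\frac{32\,e\log q}{2\beta^2\,q^{32/q^2}}}\;\geq\;\frac{1}{10q}.
\]
Your two-regime argument works (in the small-$\epsilon$ regime your bound is in fact much stronger, and in the complementary regime your $\Omega(\sqrt{\ln q}/q)$ is exactly what~\eqref{ineq:smoothingtosvlarge} gives), but it is not needed: the worst case for~\eqref{ineq:smoothingtosvlarge} alone is at $q=4$, where the expression under the square root is already comfortably above $1$. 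So your ``main obstacle'' evaporates if you drop~\eqref{ineq:smoothingtosvsmall} and just check the single inequality at $q=4$.
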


\begin{proof}
    Let $\epsilon = q^{\frac{-32n}{q^2}}$ and $s\geqslant \eta_\eps (\cL^*)$.
    From Corollary \ref{cor:DGS}, there exists an algorithm that outputs $q^{16n/q^2}$ vectors whose distribution
    is statistically close to $D_{\cL^*,s}$ in  time $q^{13n + o(n)}$ and space $\poly(n)\cdot q^{16n/q^2}$.
    By repeating this algorithm $\poly(n)$ times, we can therefore build a
    $\DGS^{\poly(n)q^{16n}/q^2}_{\eta_\varepsilon}$ oracle such that
    each call takes time $q^{13n + o(n)}$ and space $\poly(n)\cdot q^{16n/q^2}$.

    By Theorem~\ref{theorem_BDDtoDGS_new} we can construct a $\alpha-$BDD
    such that each call takes time $m\cdot\poly(n)$ and space $\poly(n)$, where
    $\alpha=\phi(\cL)/\lambda_1(\cL)=\frac{\sqrt{\ln(1/\epsilon)/\pi-o(1)}}{ 2\eta_\epsilon(\cL^*)\lambda_1(\cL)}$
    and $m=O(\frac{n\log(1/\eps)}{\sqrt{\eps}})=O(\frac{n^2}{q^2}q^{16n/q^2}\log q)\leqslant\poly(n)q^{16n/q^2}$.
    The preprocessing consists of $\poly(n)$ calls to the $\DGS^m_{\eta_\varepsilon}$ sampler described above
    and requires space $m\cdot\poly(n)$. Hence the total complexity is $\poly(n)q^{13n + o(n)}=q^{13n + o(n)}$
    in time and $m\cdot\poly(n)=\poly(n)q^{16n/q^2}$ in space.
    By using inequality~\eqref{ineq:smoothingtosvlarge},
    in Lemma~\ref{lemma_smoothinglambda}, we have that
    \[
        \lambda_1(\cL) \eta_\epsilon(\cL^*) < \sqrt{\frac{\beta(\cL)^2n}{2\pi e}}\cdot \epsilon^{-1/n}(1+o(1)).
    \]
    Hence we can guarantee that
    \[
        \alpha(\cL)
            \geq \sqrt{\frac{\ln(1/\epsilon)}{2n (\beta^2/e) \eps^{-2/n}}}\cdot (1-o(1))
            \geq \frac{1}{q}\sqrt{\frac{32\cdot e\cdot  \ln q}{2\beta^2 q^{64/q^2}}}\cdot(1-o(1))
            \geq (10q)^{-1}.
    \]
\end{proof}

\begin{theorem}\label{theorem_SVP_TM_tradeoff}
Let $n \in \mathbb{N}, q \in [4,\sqrt{n}]$ be a positive integer. Let $\cL$ be a lattice of rank $n$. There is a randomized algorithm that solves $\svp$ in time $ q^{13n+o(n)}$ and in space $\poly(n)\cdot q^{\frac{16n}{q^2}}$.
\end{theorem}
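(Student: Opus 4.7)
The plan is to obtain Theorem~\ref{theorem_SVP_TM_tradeoff} by combining the BDD algorithm of Theorem~\ref{thm:BDD} with the BDD-enumeration reduction of Theorem~\ref{theorem_enlargeBDD}, using the trick $\vect{t}=\vec{0}$ observed in the proof overview: an $\alpha$-BDD oracle called on a grid of shifts of the zero target enumerates all lattice points within distance $p\alpha\lambda_1(\cL)$ of the origin, which contains a shortest nonzero vector as soon as $p\alpha\geq 1$.

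Concretely, I would first run the preprocessing stage of Theorem~\ref{thm:BDD}: invoke Corollary~\ref{cor:DGS} (repeated $\poly(n)$ times) to produce the list of $m=\poly(n)\cdot q^{16n/q^{2}}$ samples from $D_{\cL,\eta_{\eps}(\cL)}$ with $\eps=q^{-32n/q^{2}}$, in total time $q^{13n+o(n)}$ and space $\poly(n)\cdot q^{16n/q^{2}}$. These samples depend only on $\cL$ and not on any target vector, so they constitute a preprocessing table that answers each subsequent $(0.1/q)$-BDD query in time $m\cdot\poly(n)=\poly(n)\cdot q^{16n/q^{2}}$ via the reduction of Theorem~\ref{theorem_BDDtoDGS}.

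Next, I would set $\alpha=0.1/q$, $p=\lceil 10q\rceil$ (so that $p\alpha\geq 1$), and apply Theorem~\ref{theorem_enlargeBDD} with target $\vect{t}=\vec{0}$; the BDD promise $\mathsf{dist}(\cL,\vec{0})=0\leq\alpha\lambda_1(\cL)$ is trivial. The resulting list $\{f_{p}^{\alpha}(\vect{s}):\vect{s}\in\mathbb{Z}_{p}^{n}\}$ contains every lattice point within distance $p\alpha\lambda_1(\cL)\geq\lambda_1(\cL)$ of the origin, and in particular a shortest nonzero vector. Iterating over the $p^{n}$ choices of $\vect{s}$, I would issue one BDD query for each, maintain the current shortest nonzero candidate, and output it at the end. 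The memory footprint is just the DGS table plus the running best vector, i.e.\ $\poly(n)\cdot q^{16n/q^{2}}$.

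The main pitfall to watch for is the time accounting. Naively multiplying the $q^{13n+o(n)}$ bound from Theorem~\ref{thm:BDD} by the $p^{n}$ BDD queries would blow the budget. The point is that the $q^{13n+o(n)}$ cost is paid once for preprocessing; after that, the $p^{n}=(10q)^{n}$ queries cost only $(10q)^{n}\cdot\poly(n)\cdot q^{16n/q^{2}}$ in total. Since $q\in[4,\sqrt{n}]$ gives $\log_{q}10=O(1)$ and $16/q^{2}\leq 1$, this query phase is at most $q^{O(n)}$, easily absorbed into the $q^{13n+o(n)}$ preprocessing bound. Hence the overall running time is $q^{13n+o(n)}$ and the space is $\poly(n)\cdot q^{16n/q^{2}}$, as claimed.
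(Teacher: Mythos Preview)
Your proposal is correct and follows essentially the same route as the paper: build the $(0.1/q)$-$\BDD$ oracle of Theorem~\ref{thm:BDD} once as preprocessing, then invoke Theorem~\ref{theorem_enlargeBDD} with $p=10q$ and $\vect{t}=\vec{0}$, paying $(10q)^n$ cheap oracle calls whose total cost is absorbed into the $q^{13n+o(n)}$ preprocessing bound. The paper's proof is the same argument compressed into three lines, and your explicit separation of the one-time preprocessing from the per-query cost is exactly the point.
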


\begin{proof}
By Theorem \ref{thm:BDD}, we can construct a $\frac{0.1}{q}$-$\BDD$ oracle in time $ q^{13n+o(n)}$ and in space $\poly(n)\cdot q^{\frac{16n}{q^2}}$. 
Each execution of the BBD oracle now takes $m=\mathcal{O}(\frac{n^2}{q^2}q^{16n/q^2})$ time. By Theorem \ref{theorem_enlargeBDD}, with $(10q)^n$ queries to $\frac{0.1}{q}$-$\BDD$ oracle, we can find the shortest vector. The total time complexity is $q^{13n+o(n)}+ \frac{n^2}{q^2}q^{16n/q^2}\cdot (10q)^n= q^{13n+o(n)}$.
\qed
\end{proof}

\begin{remark}
    If we take $q=\sqrt{n}$, Theorem \ref{theorem_SVP_TM_tradeoff} gives a $\svp$ algorithm that takes $n^{\mathcal{O}(n)}$ time and $\poly(n)$ space. The constant in the exponent of time complexity is worse than the best enumeration algorithms. When $q$ is a large enough constant, for any constant $\epsilon > 0$, there exists a constant $C = C(\epsilon) > 2$, such that there is a $2^{Cn}$ time and $2^{\epsilon n}$ space algorithm for $\DGS$, and $\svp$. In particular, the time complexity of the algorithm in this regime is worse than the best sieving algorithms. 

\end{remark}

\begin{remark}
    In \cite{ACKLS21}, authors gave $2^{\eps n}$ time reductions between the lattice problems in different $\ell_p$ norms. Their reduction increases the approximation factor by a constant that depends on $\epsilon$. Combining their results with the above theorem gives an algorithm with a time-memory tradeoff for the constant-factor approximation of all lattice problems.
\end{remark}

\section{Quantum speedup for BDDP using QRAM}\label{section:BDDP-in-QRAM}

The goal of this section is to obtain a quantum speedup of Theorem~\ref{theorem_BDDtoDGS}. We improve the time complexity of the algorithm by almost the square root factor, but we also require the advice string to be stored in QRAM. Before presenting our contribution, we will give an overview of the known algorithms for $\BDDP$. 

Most of the known $\BDDP$ algorithms (including the one in \cite{DRS14}) are based on the algorithm for decision-$\CVPP$ by Aharonov and Regev \cite{AR05}. We first revisit the algorithm for decision-$\CVPP$.  In this work, the authors introduced the \emph{periodic Gaussian function} $f:\cR^{n}\rightarrow\cR^{+}$,
\[
f(\vect{t}):=\frac{\rho(\vect{t}+\cL)}{\rho(\cL)}\]
towards giving an algorithm for $\mathcal{O}(\sqrt{n/\log n})$ approximation of decision-$\CVPP$. They observed that by the Poisson summation formula, we get the identity
\begin{equation}\label{eq:f}f(\vect{t})= \displaystyle \mathop{\mathbb{E}}_{\vect{w}\sim {\mathcal D}_{{\mathcal L}^*}}[\cos (2\pi\langle \vect{w},\vect{t}\rangle)].\end{equation}

 They also showed that when distance of the target vector $\vect{t}$ from lattice $\cL$ is at least $\sqrt{n}$, then $f(\vect{t})$ is negligible, and when distance between $\vect{t}$ and lattice $\cL$ is at most $\sqrt{\log n}$, then $f(\vect{t})$ is non-negligible. This function $f$ evaluated on any vector $\vect{t}$ is an infinite sum, and is not easy to evaluate efficiently. Their algorithm crucially relied on the observation in Equation~\ref{eq:f} that shows that the function $f$ can be estimated by using a polynomial size advice string with at most $1/\poly(n)$ error. They gave the estimator 
\begin{equation}\label{eq:defn-f_W}
f_{W}(\vect{t})=\frac{1}{N}\sum\limits_{i=1}^N \cos(2\pi\langle \vect{w}_i,\vect{t}\rangle)\end{equation}  where $W=(\vect{w}_1,\cdots, \vect{w}_N)\in \cL^*$ are i.i.d. samples from $\cD_{\cL^*}$; and showed that $f_W\approx f$ with at most $1/\poly(n)$ error when $N$ is a large enough number bounded by a polynomial in $n$.  

Later, Liu, Lyubashevsky, and Micciancio \cite{LLM06} gave an algorithm for the approximation of search $\BDDP$. The idea is to iteratively update the target vector $\vect{t}$ such that its distance from the closest lattice vector decreases and eventually, it is easy to efficiently find the closest lattice vector. They are able to solve the $\alpha$-$\BDDP$ for $\alpha\leq\mathcal{O}\left(\sqrt{\frac{\log n}{n}}\right)$. Dadush, Regev and Stephens-Davidowitz, gave an improvement by a careful analysis of the function $f(\vect{t})$. They proposed that by iteratively updating $\vect{t}$ by an approximation of
\[\vect{t}+ \frac{\nabla f(\vect{t})}{2\pi f(\vect{t})},\]
we can go near the closest lattice vector. Their algorithm solves $\alpha$-$\BDDP$, for $\alpha = \frac{\sqrt{\ln(1/\epsilon)/\pi-o(1)}}{ 2\eta_\epsilon(\cL^*) \lambda_1(\cL)}$. The advice string consists of $N$ vectors from $\cD_{\cL^*,\eta_\eps(\cL^*)}$ and the algorithm performs $\mathcal{O}(N+ \poly(n))$ arithmetic operations where $N=\mathcal{O}\left(\frac{n\log(1/\eps)}{\sqrt{\eps}}\right)$. In this section, we will show that if the advice string is stored in QRAM then we can acheive the same approxmation of $\BDDP$ by using only $\mathcal{O}(\sqrt{N}+\poly(n))$  arithmetic operations.

We will start with listing some of the lemmas and  theorems from \cite{DRS14} that we will directly use in our proof. After that, we will show the quantum improvement in the estimation of function $f_W$ and $\nabla f_W$. In the last part of this section, we will present the main result of this section.

\subsection{Results from \cite{DRS14}}

\begin{lemma}\cite{DRS14}[Lemma~2.14]\label{lemma:DRS2.14}
Let $\cL \subset \mathbb{R}^n$
be a lattice of rank $n$. Then, for all $\vect{t} \in \mathbb{R}^n$
, $f(\vect{t}) \geq \rho(\vect{t})$.
\end{lemma}

\begin{lemma}\cite{DRS14}[Lemma~4.7]\label{lemma:DRS4.7}
Let $\cL \subset \mathbb{R}^n$ be a lattice with $\rho(\cL) = 1 + \epsilon$ for $\epsilon \in (0, 1/400)$. Let $W = (\vect{w}_1,\dots, \vect{w}_N)$ be sampled independently from $\cD_{\cL^*}$ with $N \geq \Omega(n/\sqrt{\epsilon})$. Then,
\[
\Pr[\exists \vect{t}, \|\vect{t}\|\leq \epsilon^{1/8}/(1000n): \|\nabla f_W(\vect{t})/(2\pi f_W(\vect{t}))+\vect{t}\|> \epsilon^{0.25}\|\vect{t}\|] \leq 2^{-\Omega(n)}.
\]
\end{lemma}

\begin{lemma}\cite{DRS14}[Lemma~4.10]\label{lemma:DRS4.10}
Let $\cL \subset \mathbb{R}^n$ be a lattice with $\rho(\cL) = 1+ \epsilon$ with $\epsilon \in (0, 1/400)$. Let $s_\epsilon =\left(\frac{1}{\pi}\ln\frac{2(1+\epsilon)}{\epsilon}\right)^{0.5}$.
Let $W = (\vect{w}_1,\dots, \vect{w}_N)$ be sampled independently from $\cD_{\cL^*}$. Then, for $\epsilon^2 \leq s \leq 10$, if $N \geq \Omega(n \ln(1/\epsilon)/s^2)$,
\[
\Pr[\exists \vect{t} \in \mathbb{R}^n,\epsilon^{1/8}/(1000n)\leq \|\vect{t}\|\leq s_\epsilon:\|\nabla f_W(\vect{t})-\nabla f(\vect{t})\|> s\|\vect{t}\|]\leq 2^{-\Omega(N\cdot s^2)}.
\]
\end{lemma}

\begin{lemma}\cite{DRS14}[Lemma~4.12]\label{lemma:DRS4.12}
Let $\cL \subset \mathbb{R}^n$ be a lattice with $\rho(\cL) = 1+ \epsilon$ with $\epsilon \in (0, 1/400)$. Let $s_\epsilon =(\frac{1}{\pi}\ln\frac{2(1+\epsilon)}{\epsilon})^{0.5}$.
Let $W = (\vect{w}_1,\dots,\vect{w}_N)$ be sampled independently from $\cD_{\cL^*}$. Then, for $\epsilon^2 \leq s \leq 10$, if $N \geq \Omega(n \ln(1/\epsilon)/s^2)$, then
\[
\Pr\left[\exists \vect{t} \in \mathbb{R}^n, \|\vect{t}\|\leq s_\epsilon:\| f_W(\vect{t})- f(\vect{t})\|> s\right]\leq 2^{-\Omega(N\cdot s^2)}.
\]
\end{lemma}

\begin{lemma}\cite{DRS14}[Part 1 of Lemma~4.5]\label{lemma:DRS4.5}
Let $\cL \subset \mathbb{R}^n$ be a lattice with $\rho(\cL) = 1+ \epsilon$ for some $\epsilon>0$, and let $W = (\vect{w}_1,\dots, \vect{w}_N)$ be sampled independently from $\cD_{\cL^*}$. Then, for $s \geq 0$, $N \min(s,s^2) \geq \Omega(n)$, and $\Delta_\epsilon=\frac{4\pi\epsilon}{1+\epsilon}(\ln\frac{2+2\epsilon}{\epsilon}+1)$, we have
\[Pr[\| H f_W(\vect{0}) + 2\pi \mat{I}_n\| > \Delta_\epsilon+s] \leq 2^{-\Omega(N\min{s,s^2})},\]

where $Hf_W$ denote the Hessian matrix of $f_W$.
\end{lemma}

\begin{lemma}\cite{DRS14}[Part 1 of Lemma~4.7]\label{cor:DRS4.7}
Let $\cL \subset \mathbb{R}^n$ be a lattice with $\rho(\cL) = 1+ \epsilon$ for some $\epsilon>0$, and let $W = (\vect{w}_1,\dots, \vect{w}_N)$ be sampled independently from $\cD_{\cL^*}$. Then we have
\[
Pr\left[\exists \vect{t} \in \mathbb{R}^n, \|\vect{t}\|\leq \frac{\epsilon^{1/8}}{1000n} : \|\nabla f_W(\vect{t})\| > (2\pi + 4\epsilon^{0.25})\|\vect{t}\|\right] \leq 2^{-\Omega(n)},
\]
and
\[
Pr\left[\exists \vect{t} \in \mathbb{R}^n, \|\vect{t}\|\leq \frac{\epsilon^{1/8}}{1000n} : | f_W(\vect{t})| < 1-\frac{\epsilon^{0.25}}{100}\right] \leq 2^{-\Omega(n)}.
\]
\end{lemma}

The following theorem tells us that if we can compute $f(\vect{t})$ and its gradient, and suppose that $t$ is sufficiently close to the lattice, then we can simply apply the gradient ascent algorithm to find a vector $\vect{t}'=\frac{\nabla f(\vect{t})}{2\pi f(\vect{t})}+ \vect{t}$ such that the new vector $\vect{t}'$ is even closer to the closest lattice vector of $\vect{t}$.

\begin{theorem}\cite{DRS14}[Corollary~4.3]\label{thm:DRS}
Let $\epsilon \in (0, 1/400)$ and $\cL \subset \mathbb{R}^n$ a lattice with $\rho(\cL) = 1 +\epsilon$. Let $s_\epsilon = \left(\frac{1}{\pi}\ln \frac{2(1+\epsilon)}{\epsilon}\right)^{1/2}$. Then for all $\vect{t} \in \mathbb{R}^n$ satisfying $\|\vect{t}\|\leq s_\epsilon/2$,
\[
\left\|\frac{\nabla f(\vect{t})}{2\pi f(\vect{t})}+ \vect{t}\right\| \leq 12 (\epsilon/2)^{1-2\delta{(\vect{t})}}\|\vect{t}\|,
\]
where $\delta(t)=\max(1/8, \|\vect{t}\|/s_\epsilon)$. In particular, for $\delta(\vect{t}) \leq 1/2-2/(\pi s_\epsilon^2)$,
\[
\left\|\frac{\nabla f(\vect{t})}{2\pi f(\vect{t})}+ \vect{t}\right\| \leq \|\vect{t}\|/4.
\]
\end{theorem}

\subsection{Estimation of $f_W$ and $\nabla f_W$ in QRAM}

\begin{theorem}[Amplitude estimation \cite{BHMT02}, follows from Theorem 12]\label{thm:amplitude_estimation}
Let $\delta>0$. Given natural number $M$ and access to an $(n + 1)$-qubit unitary $U$ satisfying
\[
U\ket{0^n}\ket{0}= \sqrt{a}\ket{\phi_1}\ket{1} +\sqrt{1-a}\ket{\phi_0}\ket{0},
\]
where $\ket{\phi_1}$ and $\ket{\phi_0}$ are arbitrary $n$-qubit states and $0 < a < 1$,
there exists a quantum algorithm that uses $\mathcal{O}(M\cdot \log(1/\delta))$ applications of $U$ and $U^\dagger$ and $\tilde{\mathcal{O}}(M\cdot \log(1/\delta))$ time, outputs an estimate $\tilde{a}$ that with probability $\geq 1-\delta$ satisfies~\footnote{The original Theorem~12 in~\cite{BHMT02} considered an estimator that output $\tilde{a}$ with additive error at most $1/M$ with probability $\geq2/3$. One can repeat this process for $\mathcal{O}(\log(1/\delta))$ times and take the median number to increase the success probability to $\geq 1-\delta$.}
\[
|a-\tilde{a}| \leq  \frac{1}{M}.
\]
\end{theorem}

\begin{theorem}\label{thm:Quantum_estimation_f_W}
Let $N$ be a positive integer, lattice $\cL\subset \real^n$ and $W=\{\vect{w}_1,\dots,\vect{w}_N\}$ be the set of vectors from $\cL^*$. Let $O_W: \ket{j}\ket{0}\rightarrow \ket{j}\ket{\vect{w}_j}$. For any $\epsilon,\delta>0$, there exists a quantum algorithm that given target $\vect{t}\in \mathbb{R}^n$ outputs $\tilde{f}_W(\vect{t})$ which satisfy $|\tilde{f}_{W}(\vect{t})-f_W(\vect{t})|\leq \epsilon$ with probability $1-\delta$. The algorithm makes $\mathcal{O}(\epsilon^{-1}\cdot\log\frac{1}{\delta})$ queries to $O_W$ and requires $\epsilon^{-1}\cdot\log\frac{1}{\delta}\cdot \poly(\log n)$ elementary quantum gates. 
\end{theorem}
\begin{proof}
We define the positive controlled rotation oracle as, for any $a\in [-1,1]$
\[
O_{CR^+}: \ket{a} \ket{0} \rightarrow \begin{dcases}
    \ket{a}(\sqrt{a}\ket{1}+\sqrt{1-a}\ket{0}),& \text{if } a\geq 0\\
    \ket{a}\ket{0},              & \text{otherwise,}
\end{dcases}
\]
which can be implemented up to negligible error by $\poly(\log n)$ quantum elementary gates. Also, we define the cosine inner product oracle as for any $\vect{t},\vect{w}\in \mathbb{R}^n$
\[
O_{\cos}: \ket{\vect{w}}\ket{\vect{t}}\ket{0}\rightarrow \ket{\vect{w}}\ket{\vect{t}}\ket{\cos(2\pi \langle \vect{w},\vect{t}\rangle)},
\]
which can also be implemented by $\poly(\log n)$ quantum elementary gates.
Prepare the state $\frac{1}{\sqrt{N}}\sum\limits_{j=1}^N\ket{j}\ket{\vect{0}}\ket{\vect{t}}\ket{0}\ket{0}$, and then we apply $O_W$ on the first and second registers, apply $O_{\cos}$ on the second, third, fourth registers, and apply $O_{CR^+}$ on the fourth and fifth registers, we have
\begin{align*} &\frac{1}{\sqrt{N}}\sum\limits_{\substack{j\in[N] \text{ and }\\\cos(2\pi\langle \vect{w}_j,\vect{t}\rangle)>0}} \ket{j}\ket{\vect{w}_j}\ket{\vect{t}}\ket{\cos(2\pi\langle \vect{w}_j,\vect{t}\rangle)}\big(\sqrt{\cos(2\pi\langle \vect{w}_j,\vect{t}\rangle}\ket{1}+\sqrt{1-\cos(2\pi\langle \vect{w}_j,\vect{t}\rangle}\ket{0}\big)\\
& +\frac{1}{\sqrt{N}}\sum\limits_{\substack{j\in[N] \text{ and }\\\cos(2\pi\langle \vect{w}_j,\vect{t}\rangle)\leq0}} \ket{j}\ket{\vect{w}_j}\ket{\vect{t}}\ket{\cos(2\pi\langle \vect{w}_j,\vect{t}\rangle}\ket{0}.
\end{align*}
By rearranging the equation, we have
\begin{align*}
&\frac{1}{\sqrt{N}}\sum\limits_{\substack{j\in[N]\text{ and }\\\cos(2\pi \langle \vect{w}_j,\vect{t}\rangle)>0}} \sqrt{\cos(2\pi\langle \vect{w}_j,\vect{t}\rangle}\ket{j}\ket{\vect{w}_j}\ket{\vect{t}}\ket{\cos(2\pi\langle \vect{w}_j,\vect{t}\rangle)}\ket{1}\\
+&\frac{1}{\sqrt{N}}\big(\sum\limits_{\substack{j\in[N]\text{ and }\\\cos(2\pi \langle \vect{w}_j,\vect{t}\rangle)>0}}\sqrt{1-\cos(2\pi\langle \vect{w}_j,\vect{t}\rangle)}\ket{j}\ket{\vect{w}_j}\ket{\vect{t}}\ket{\cos(2\pi\langle \vect{w}_j,\vect{t}\rangle)} +\sum\limits_{\substack{j\in[N]\text{ and }\\\cos(2\pi \langle \vect{w}_j,\vect{t}\rangle\leq0}} \ket{j}\ket{\vect{w}_j}\ket{\vect{t}}\ket{\cos(2\pi\langle \vect{w}_j,\vect{t}\rangle\big)}\big)\ket{0}\\
=&\sqrt{a^+}\ket{\phi_1}\ket{1}+\sqrt{1-a^+}\ket{\phi_0}\ket{0},
\end{align*}
where $a^+=\sum\limits_{\substack{j\in[N] \text{ and } \\\cos(2\pi \langle \vect{w}_j,\vect{t}\rangle)>0}} \frac{\cos\left(2\pi\langle \vect{w}_j,\vect{t}\rangle\right)}{N}$. By applying Theorem~\ref{thm:amplitude_estimation}, we can estimate $a^+$ with additive error $\epsilon/2$ by using  $\mathcal{O}(\epsilon^{-1})$ applications of $O_W$, $O^\dagger_W$, and $\epsilon^{-1}\cdot \poly(\log n)$ elementary quantum gates. Following the same strategy, we can also estimate $a^-=\sum\limits_{\substack{j\in[N] \text{ and } \\\cos(2\pi \langle w_j,t)<0}} \frac{\cos(2\pi\langle w_j,t\rangle)}{N}$ with same additive error and by using same amount of queries and quantum elementary gates. Therefore, we can estimate $a^++a^-=\sum\limits_{j\in [N]} \frac{\cos(2\pi\langle w_j,t\rangle)}{N}$ with additive error $\epsilon$. By repeating the procedure $\Theta(\log \frac{1}{\delta})$ times and take the median among them, we finish the proof.
\end{proof}

\begin{theorem}\label{Cor:Quantum_estimation_nabla_f_W}
Let $N$ be a positive integer, lattice $\cL \subset \real^n$, and $W=\{\vect{w}_1,\dots,\vect{w}_N\}$ be the set of vectors in $\cL^*$. Let $O_W: \ket{j}\ket{0}\rightarrow \ket{j}\ket{\vect{w}_j}$. For any $\epsilon,\delta>0$, there exist a quantum algorithm that given target $\vect{t}\in \mathbb{R}^n$ and $w_{max}\geq \max\limits_{j\in [N]} \|\vect{w}_j\|$, output $\nabla_i \tilde{f}_W(\vect{t})$\footnote{ We are abusing the notation here, $\nabla_i \tilde{f}_W(\vect{t})$ just represent an estimated value of $\nabla_i {f}_W(\vect{t})$.} for any $i\in [n]$, which satisfy  $|\nabla_{i}\tilde{f}_W(\vect{t})-\nabla_if_W(\vect{t})|\leq  2\pi\epsilon\|\vect{t}\|\cdot w^2_{max}$, with probability $1-\delta$. The algorithm makes $O(\epsilon^{-1}\cdot \log \frac{1}{\delta})$ queries to $O_W$ and requires $\epsilon^{-1}\cdot \log \frac{1}{\delta}\cdot \poly(\log n)$ elementary quantum gates. 
\end{theorem}
\begin{proof}
From Equation~\ref{eq:defn-f_W}, we get $\nabla_i f_W(\vect{t})=\frac{-1}{N}\sum\limits_{j=1}^N \sin (2\pi\langle \vect{w}_j,\vect{t} \rangle)\cdot (w_j)_i$ for $i\in [n]$.
We observe that for all $i\in[n]$ and $j\in [N]$,
\[
|\sin(2\pi\langle \vect{w}_j,\vect{t}\rangle) \cdot(w_j)_i| \leq |2\pi\langle \vect{w}_j,\vect{t}\rangle \cdot w_{max}| \leq 2\pi \|\vect{t}\|\cdot w^2_{max},
\]
and hence $ \left|\frac{\sin(2\pi\langle \vect{w}_j,\vect{t}\rangle) \cdot(w_j)_i}{2\pi \|\vect{t}\|\cdot w^2_{max}}\right|\leq 1$ for all $i\in[n]$ and $j\in [N]$. For simplicity, we define $g_i(\vect{w},\vect{t})=\frac{\sin(2\pi\langle \vect{w},\vect{t}\rangle) \cdot(w)_i}{2\pi \|\vect{t}\|\cdot w^2_{max}}$.
Then we define the sine inner product oracle as for any $\vect{t},\vect{w}\in \mathbb{R}^n$ and $i\in [n]$
\[
O_{\sin}: \ket{\vect{w}}\ket{\vect{t}}\ket{i}\ket{0}\rightarrow \ket{\vect{w}}\ket{\vect{t}}\ket{i}\ket{g_i(\vect{w},\vect{t})},
\]
which can be implemented by $\poly(\log n)$ quantum elementary gates. Also, we define the positive controlled rotation oracle as, for any $a\in\mathbb{R}$
\[
O_{CR^+}: \ket{a} \ket{0} \rightarrow \begin{dcases}
    \ket{a}(\sqrt{a}\ket{1}+\sqrt{1-a}\ket{0}),& \text{if } a\geq 0\\
    \ket{a}\ket{0},              & \text{otherwise,}
\end{dcases}
\]
which can be implemented by $\poly(\log n)$ quantum elementary gates.

Prepare the state $\frac{1}{\sqrt{N}}\sum\limits_{j=1}^N\ket{j}\ket{\vect{0}}\ket{\vect{t}}\ket{i}\ket{0}\ket{0}$, and then we apply $O_W$ on the first and second registers, apply $O_{\sin}$ on the second, third, fourth, fifth registers, and apply $O_{CR^+}$ on the fifth and sixth registers, we have
\begin{align*}
 &\frac{1}{\sqrt{N}}\sum\limits_{\substack{j\in[N] \text{ and }\\g_i (w_j,t)>0}} \ket{j}\ket{\vect{w}_j}\ket{\vect{t}}\ket{i}\ket{g_i(\vect{w}_j,\vect{t})}\big(\sqrt{g_i(\vect{w}_j,\vect{t})}\ket{1}+\sqrt{1-g_i(\vect{w}_j,\vect{t})}\ket{0}\big)\\
 &+\frac{1}{\sqrt{N}}\sum\limits_{\substack{j\in[N] \text{ and }\\g_i(\vect{w}_j,\vect{t})\leq0}} \ket{j}\ket{\vect{w}_j}\ket{\vect{t}}\ket{i}\ket{g_i(\vect{w}_j,\vect{t})}\ket{0}\\
 &=\frac{1}{\sqrt{N}}\sum\limits_{\substack{j\in[N]\text{ and }\\g_i(\vect{w}_j,\vect{t})>0}} \sqrt{g_i(\vect{w}_j,\vect{t})}\ket{j}\ket{\vect{w}_j}\ket{\vect{t}}\ket{i}\ket{g_i(\vect{w}_j,\vect{t})}\ket{1}\\
&+\frac{1}{\sqrt{N}}\big(\sum\limits_{\substack{j\in[N]\text{ and }\\g_i(\vect{w}_j,\vect{t})>0}}\sqrt{1-g_i(\vect{w}_j,\vect{t})}\ket{j}\ket{\vect{w}_j}\ket{\vect{t}}\ket{i}\ket{g_i(\vect{w}_j,\vect{t})}+\sum\limits_{\substack{j\in[N]\text{ and }\\g_i(\vect{w}_j,\vect{t})\leq0}} \ket{j}\ket{\vect{w}_j}\ket{\vect{t}}\ket{i}\ket{g_i(\vect{w}_j,\vect{t})}\big)\ket{0}\\
&=\sqrt{a^+}\ket{\phi_1}\ket{1}+\sqrt{1-a^+}\ket{\phi_0}\ket{0},
\end{align*}
where $a^+=\sum\limits_{\substack{j\in[N] \text{ and } \\g_i(\vect{w}_j,\vect{t})>0}} \frac{g_i(\vect{w}_j,\vect{t})}{N}$. By applying Theorem~\ref{thm:amplitude_estimation}, we can estimate $a^+$ with additive error $\epsilon/2$ by using  $\mathcal{O}(\epsilon^{-1})$ applications of $O_W$, $O^\dagger_W$, and $\epsilon^{-1}\cdot \poly(\log n)$ elementary quantum gates. Follow the same strategy, we can also estimate $a^-=\sum\limits_{\substack{j\in[N] \text{ and } \\g_i(\vect{w}_j,\vect{t})<0}} \frac{g_i(\vect{w}_j,\vect{t})}{N}$ with same additive error and by using same amount of queries and quantum elementary gates. Therefore, we can estimate $a^++a^-=\sum\limits_{j\in [N]} \frac{g_i(\vect{w}_j,\vect{t})}{N}=\frac{-\nabla_i f_W(\vect{t})}{2\pi \|\vect{t}\|\cdot w^2_{max}}$ with additive error $\epsilon$. By repeating the procedure $\Theta(\log \frac{1}{\delta})$ times and taking the median among them, we finish the proof.
\end{proof}

We get the following corollary. 
\begin{corollary}\label{cor:quantum_estimate_f_w}
Let $\eps\in (0,/100)$, and $N=\lceil(n^8/\sqrt{\eps})\rceil$ be a positive integer, lattice $\cL\subset \real^n$ and $W=\{\vect{w}_1,\dots,\vect{w}_N\}$ be the set of vectors from $\cL^*$ such that $\forall i\in [N], \|\vect{w}_i\|\leq 2n+1$. Let $O_W: \ket{j}\ket{0}\rightarrow \ket{j}\ket{\vect{w}_j}$. There exists a quantum algorithm that given target $\vect{t}\in \mathbb{R}^n$ outputs $\tilde{f}_W(\vect{t})$ and $\nabla\tilde{f}_W(\vect{t})$ which satisfy \[|\tilde{f}_{W}(\vect{t})-f_W(\vect{t})|\leq \frac{\eps^{1/4}}{100} \text{ and } \|\nabla\tilde{f}_{W}(\vect{t})-\nabla f_W(\vect{t})\|\leq \frac{\eps^{1/4}}{100}\|\vect{t}\|\] with probability $1-2^{-\Omega(n)}$. The algorithm make $\mathcal{O}(\epsilon^{-1/4}\cdot n^4)$ queries to $O_W$ and require $\epsilon^{-1/4}\cdot n^4\cdot \poly(\log n)$ elementary quantum gates. 
\end{corollary}

\subsection{BDDP in QRAM}
From previous subsection, given $\vect{t}$, we can estimate $f_W(\vect{t})$ with small additive error. In this subsection, we will replace $f_W(\vect{t})$ with the approximation function $\tilde{f}_W(\vect{t})$, and show that doing gradient ascent on the approximation function $\tilde{f}_W(\vect{t})$ still helps us to find the closest vector, and hence we can use $\tilde{f}_W$ to solve $\BDDP$.

\begin{theorem}\label{thm:gradient_difference}
Let $\cL \subset \mathbb{R}^n$ be a lattice with $\rho(\cL) = 1+ \epsilon$ with $\epsilon \in (0, 1/400)$ and $N$ be a positive integer. Let $s_\epsilon =\left(\frac{1}{\pi}\ln\frac{2(1+\epsilon)}{\epsilon}\right)^{0.5}$, $\delta_{\max}=0.5-\frac{2}{\pi s_\epsilon^2}$, and $W = (\vect{w}_1,\dots, \vect{w}_N)$ be a set of vectors from $\cL^*$. Suppose that for some $\gamma>0$, $\vect{t}\in \mathbb{R}^n$, we can compute $\tilde{f}_{W}(\vect{t})$ and $\nabla \tilde{f}_{W}(\vect{t})$, and it holds that
\begin{enumerate}
    \item $\|\vect{t}\|\leq \min\{\delta_{\max} s_\epsilon, \sqrt{\ln(1/(4\gamma))/\pi}\},$
    \item $\|\nabla f_W(\vect{t})-\nabla f(\vect{t}) \|\leq \frac{\pi}{2}\gamma\|\vect{t}\|,$
    \item $\|\nabla f_W(\vect{t})-\nabla \tilde{f}_W(\vect{t}) \|\leq \frac{\pi}{2}\gamma\|\vect{t}\|,$
    \item $|f_W(\vect{t})-f(\vect{t})|\leq \gamma,$
    \item $|f_W(\vect{t})-\tilde{f}_W(\vect{t})|\leq \gamma.$
\end{enumerate}
Then ,
\[
\left\|\frac{\nabla \tilde{f}_W(\vect{t})}{2\pi \tilde{f}_W(\vect{t})}-\frac{\nabla f(\vect{t})}{2\pi f(\vect{t})}\right\|\leq \frac{6\gamma}{\rho(\vect{t})}\left\|\vect{t}\right\|.
\]
\end{theorem}
\begin{proof}
It's easy to see that above conditions imply 
\begin{equation}\label{eq:nabla_bound}
\|\nabla \tilde{f}_W(\vect{t})-\nabla {f}(\vect{t}) \|\leq {\pi}\gamma\|\vect{t}\|,\end{equation} and \begin{equation}\label{eq:difference_combine}
|\tilde{f}_W(\vect{t})-f(\vect{t})|\leq 2\gamma.\end{equation}
From Lemma~\ref{lemma:DRS2.14} and the condition on length of $\vect{t}$, we get $f(\vect{t})\geq \rho(\vect{t})\geq 4\gamma$. 
From triangle inequality, we get 
\begin{dmath}\label{eq:gradient}
\left\|\frac{\nabla\tilde{f}_W(\vect{t})}{2\pi \tilde{f}_W(\vect{t})}- \frac{\nabla f(\vect{t})}{2\pi f(\vect{t})} \right\|= \left\|\frac{\nabla\tilde{f}_W(\vect{t})-\nabla f(\vect{t})}{2\pi f(\vect{t})}\frac{f(\vect{t})}{\tilde{f}_W(\vect{t})}+\frac{\nabla f(\vect{t})}{2\pi f(\vect{t})}\left(\frac{f(\vect{t})}{\tilde{f}_W(\vect{t}-1)} \right) \right\|
\leq \left\|\frac{\nabla\tilde{f}_W(\vect{t})-\nabla f(\vect{t}) }{2\pi f(\vect{t})}  \right\|\frac{f(\vect{t})}{\tilde{f}_W(\vect{t})}+\left\| \frac{\nabla f(\vect{t})}{2\pi f(\vect{t})}\right\| \left| \frac{f(\vect{t})}{\tilde{f}_{W}(\vect{t})}-1 \right| \end{dmath}

For the first term in \ref{eq:gradient}, by using Equation~(\ref{eq:nabla_bound}) and Equation~(\ref{eq:difference_combine}), we get  
\begin{equation}\label{eq:first_part}
\left\|\frac{\nabla\tilde{f}_W(\vect{t})-\nabla f(\vect{t}) }{2\pi f(\vect{t})}  \right\|\frac{f(\vect{t})}{\tilde{f}_W(\vect{t})} \leq \frac{\pi\gamma\|\vect{t}\| }{2\pi f(\vect{t})}\frac{f(\vect{t})}{f(\vect{t})-2\gamma}= \frac{\gamma\|\vect{t}\|}{2(f(\vect{t})-2\gamma)}.
\end{equation}

For the second term in \ref{eq:gradient}, by using Theorem~\ref{thm:DRS} and Equation~(\ref{eq:difference_combine}), we get 
\begin{equation}\label{eq:second_part}
\left\| \frac{\nabla f(\vect{t})}{2\pi f(\vect{t})}\right\| \left| \frac{f(\vect{t})}{\tilde{f}_{W}(\vect{t})}-1 \right| \leq \frac{5}{4}\|\vect{t}\| \left( \frac{f(\vect{t})}{f(\vect{t})-2\gamma} -1 \right)= \frac{10\gamma}{4(f(\vect{t})-2\gamma)}\|\vect{t}\|.
\end{equation}

From Equation~(\ref{eq:gradient}), Equation~(\ref{eq:first_part}) and Equation~(\ref{eq:second_part}), we get 

\[    \left\|\frac{\nabla\tilde{f}_W(\vect{t})}{2\pi \tilde{f}_W(\vect{t})}- \frac{\nabla f(\vect{t})}{2\pi f(\vect{t})} \right\| \leq \frac{3\gamma \cdot \|\vect{t}\|}{f(\vect{t})-2\gamma} \leq \frac{6\gamma}{\rho(\vect{t})}\|\vect{t}\|.
\]
\end{proof}

\begin{lemma}\label{lemma:small_Works}
Let $\cL \subset \mathbb{R}^n$ be a lattice with $\rho(\cL) = 1 + \epsilon$ for $\epsilon \in (0, 1/400)$. Let $W = (\vect{w}_1,\dots, \vect{w}_N)$ be sampled independently from $D_{\cL^*}$ with $N \geq n^8/\sqrt{\epsilon}$. Suppose that for some $\vect{t}\in \mathbb{R}^n$ such that $\|\vect{t}\|\leq \epsilon^{1/8}/(1000n)$, we can compute $\tilde{f}_W(\vect{t})$ and $\nabla \tilde{f}_W(\vect{t})$ which satisfy
\begin{enumerate}
    \item $|\tilde{f}_W(\vect{t})-f_W(\vect{t})|\leq \frac{\epsilon^{1/4}}{100}$,
    \item $\|\nabla \tilde{f}_W(\vect{t})-\nabla f_W(\vect{t})\|\leq \frac{\epsilon^{1/4}}{100}\|\vect{t}\|$.
\end{enumerate} Then with probability $1- 2^{-\Omega(n)}$, 
$\left\|\frac{\nabla \tilde{f}_W(\vect{t})}{2\pi \tilde{f}_W(\vect{t})}+\vect{t}\right\|\leq 3\epsilon^{1/4}\|\vect{t}\|$ holds.
\end{lemma}
\begin{proof}

By Corollary~\ref{cor:DRS4.7}, with at least $1-2^{-\Omega(n)}$ probability, $\|\nabla f_W (\vect{t})\|\leq (2\pi+4\epsilon^{1/4})\|\vect{t}\|$ and $|f_W(\vect{t})|>1-\frac{\epsilon^{1/4}}{100}$. By triangle inequality and both the assumptions, with probability at least $1-2^{-\Omega(n)}$, we have  \begin{equation}
    \|\nabla \tilde{f}_W(\vect{t})\|\leq (2\pi+5\epsilon^{1/4})\|\vect{t}\| \text{ and } |\tilde{f}_W(\vect{t})|>1-\frac{\epsilon^{1/4}}{50}.\end{equation}
    
    It implies that,
\begin{align*}
    \left\|\frac{\nabla \tilde{f}_W(\vect{t})}{2\pi \tilde{f}_W(\vect{t})}-\frac{\nabla f_W(\vect{t})}{2\pi f_W(\vect{t})}\right\|&\leq \left\|\frac{\nabla \tilde{f}_W(\vect{t})-\nabla f_W(\vect{t})}{2\pi \tilde{f}_W(\vect{t})}\right\|+\frac{\|\nabla f_W(\vect{t})\|}{2\pi}\left|\frac{1}{\tilde{f}_W(\vect{t})}-\frac{1}{f_W(\vect{t})}\right|\\
    & \leq \frac{\epsilon^{1/4}\|\vect{t}\|}{2\pi(1-\epsilon^{1/4}/50)}+\frac{(2\pi+5\epsilon^{1/4})\|\vect{t}\|}{2\pi}\cdot \frac{\epsilon^{1/4}/100}{(1-\epsilon^{1/4}/100)(1-\epsilon^{1/4}/50)}\\
    & \leq 2\epsilon^{1/4}\|\vect{t}\|,
\end{align*}
By Lemma~\ref{lemma:DRS4.7}, we know that, with at least $1-2^{-\Omega(n)}$ probability we have $\left\|\frac{\nabla {f}_W(\vect{t})}{2\pi {f}_W(\vect{t})}+\vect{t}\right\|\leq \epsilon^{1/4}\|\vect{t}\|$. Hence, by triangle inequality and union bound, we have $\left\|\frac{\nabla \tilde{f}_W(\vect{t})}{2\pi \tilde{f}_W(\vect{t})}+\vect{t}\right\|\leq 3\epsilon^{1/4}\|\vect{t}\|$ with probability greater than $1-2^{-\Omega(n)}$.

\end{proof}

\begin{theorem}\label{th:theorem_BDDtoDGS_quantum}
Let $\cL \subset \mathbb{R}^n$ be a lattice with $\rho(\cL) = 1+ \epsilon$ with $\epsilon \in (0, 1/400)$. Let $s_\epsilon =(\frac{1}{\pi}\ln\frac{2(1+\epsilon)}{\epsilon})^{0.5}$, $\delta_{\max}=0.5-\frac{2}{\pi s_\epsilon^2}$, $\delta(\vect{t})=\max\{1/8, \|\vect{t}\|/s_\epsilon\}$ and $W = (\vect{w}_1,\dots, \vect{w}_N)$ be sampled independently from  $\cD_{\cL^*}$ where $N \geq n^8 \ln(1/\epsilon)/\sqrt{\epsilon}$. Suppose that for some $\vect{t}\in \mathbb{R}^n$ such that $\|\vect{t}\|\leq \delta_{\max} s_\epsilon$, we can compute $\tilde{f}_W(\vect{t})$ and $\nabla \tilde{f}_W(\vect{t})$ which satisfy 
\begin{enumerate}
    \item $|\tilde{f}_W(\vect{t})-f_W(\vect{t})|\leq \frac{\epsilon^{1/4}}{100}$,
    \item $\|\nabla \tilde{f}_W(\vect{t})-\nabla f_W(\vect{t})\|\leq \frac{\|\vect{t}\|\cdot\epsilon^{1/4}}{100}$.
\end{enumerate}
Then with probability at least $1-2^{-\Omega(n)}$, $\left\|\frac{\nabla \tilde{f}_W(\vect{t})}{2\pi \tilde{f}_W(\vect{t})}+\vect{t}\right\| \leq \epsilon^{(1-2\delta(\vect{t}))/4}\|\vect{t}\|$.

\end{theorem}

\begin{proof}
Lemma~\ref{lemma:small_Works} shows that the theorem is satisfied for all $\vect{t}$ with $\|\vect{t}\| \leq \epsilon^{1/8}/(1000n)$. So we consider the case when $\epsilon^{1/8}/(1000n) < \|\vect{t}\| \leq \delta_{\max}s_\epsilon$. 
By Lemma~\ref{lemma:DRS2.14}, for such $\vect{t}$,
\[
f(\vect{t}) \geq \rho(\vect{t}) \geq e^{-\pi\delta^2_{\max} s^2_\epsilon} > \epsilon^{\delta^2_{\max}}/2 \geq \epsilon^{1/4}/2.
\]
Also we know that 
\begin{itemize}
    \item By Lemma~\ref{lemma:DRS4.10}, $\|\nabla {f}_W(\vect{t})-\nabla f(\vect{t})\|\leq \epsilon^{1/4}\|\vect{t}\|/100$ holds with probability $\geq 1-2^{-\Omega(\epsilon^{1/2}N/100^2)}=1-2^{-\Omega(n)}.$
    \item By Lemma~\ref{lemma:DRS4.12}, $\|{f}(\vect{t})- f(\vect{t})\|\leq \epsilon^{1/4}/100$ holds with probability $\geq 1-2^{-\Omega(\epsilon^{1/2}N/100^2)}=1-2^{-\Omega(n)}.$
\end{itemize}

    Therefore, by using Theorem~\ref{thm:gradient_difference} with $\gamma =\frac{\epsilon^{1/4}}{50\pi}$, we have that with probability $\geq 1-2^{-\Omega(n)}$,
    \begin{align*}
        \left\|\frac{\nabla \tilde{f}_W(\vect{t})}{2\pi \tilde{f}_W(\vect{t})}+\vect{t}\right\| &\leq \frac{6\gamma}{\rho(\vect{t})}\|\vect{t}\|+ \left\|\frac{\nabla f(\vect{t})}{2\pi f(\vect{t})}+\vect{t}\right\| \\
        & \leq \frac{\epsilon^{1/4}}{25}\cdot e^{\pi\|\vect{t}\|^2}\|\vect{t}\| + 12(\epsilon/2)^{1-2\delta(\vect{t})} \|\vect{t}\|  & & \text{(by Theorem~\ref{thm:DRS})}\\
        & \leq \frac{3\eps^{1/4}}{25}\left(\frac{2(1+\eps)}{\eps}\right)^{\delta(\vect{t})^2} \|\vect{t}\|  + 12(\epsilon/2)^{1-2\delta(\vect{t})} \|\vect{t}\| \\
        & \leq \frac{\epsilon^{0.25-\delta(\vect{t})^2}}{6}\cdot \|\vect{t}\| + \frac{4}{5}(\epsilon)^{(1-2\delta(\vect{t}))/4} \|\vect{t}\| \\
        & \leq \epsilon^{(1-2\delta(\vect{t}))/4}\|\vect{t}\|.
    \end{align*}
\end{proof}

\begin{theorem}\label{theorem_BDDtoDGS_quantum}
For any integer $n>0$, $\epsilon\in (e^{-n^2},1/400)$ and lattice $\cL\subset \cR^n$, let $\phi(\cL)=\frac{\sqrt{\ln(1/\eps)/\pi -o(1) }}{2\eta_\eps(\cL^*)}$ and $N=\frac{n^8 \ln(1/\eps)}{\sqrt{\eps}}$. Given $N$ vectors sampled from $\cD_{\cL^*,\eta_\eps(\cL^*)}$ stored in QRAM, there exists an algorithm that solves $\frac{\phi(\cL)}{\lambda_1(\cL)}$-$\BDD$ with probability greater than $1-2^{-\Omega(n)}$ using $\sqrt{N}\cdot \poly(n)$ arithmetic operations, require $\mathcal{O}(\poly(n)+\log N)$ classical space, $\poly(n)$ qubits and QRAM of size $N\cdot\poly(n)$.

\end{theorem}
\begin{proof}
Let $s_\epsilon=(\frac{1}{\pi}\log(\frac{2(1+\epsilon)}{\epsilon}))^{1/2}$ and $\delta_{max}=\frac{1}{2}-\frac{2}{\pi s_\epsilon^2}$. Let $W=(\vect{w}_1,\cdots,\vect{w}_N)$ be the given set of vectors sampled from $\cD_{\cL^*,\eta_{\epsilon}(\cL^*)}$.

The algorithm takes target $\vect{t}\in \cR^n$ and set of vectors $W$. It then iteratively updates $\vect{t}\leftarrow \vect{t}+\frac{\nabla\tilde{f}_W(\vect{t})}{2\pi\tilde{f}_W(\vect{t})}$ for $1+\lceil 8\log(\sqrt{n}s_\epsilon)/\log(1/\epsilon)\rceil$ times. It then scans the first $\sqrt{N}$ vectors from $W$ and take the first $n$ linearly independent vectors of length bounded by $\sqrt{n}\cdot \eta_\epsilon(\cL^*)$ as set $V^*=(\vect{v}_1^*,\cdots,\vect{v}_n^*) \subset W$ ,if no such set exists then abort. Compute $V=(\vect{v}_1,\cdots,\vect{v}_n)$ such that $\vect{v}_i\cdot \vect{v}_j^*=\delta_{i,j}$ and return $\sum c_i\vect{v}_i$ where $c_i=\lfloor \vect{v}_i^*\cdot \vect{t}\rceil$. 

By scaling the lattice appropriately, we can assume that $\rho(\cL)=1+\eps$, so that $\eta_{\eps}(\cL^*)=1$.  Let $\vect{t}'=\vect{t}-\vect{w}$ such that $\|\vect{t}'\|\leq \delta_{max}s_\eps$ for some $\vect{w}\in \cL$. There exists such a vector $\vect{t}'$ because of the promise of the $\phi(\cL)/\lambda_1(\cL)$-$\BDD$ and the fact that $\phi(\cL)\leq {\delta_{max}s_\eps}$. From Lemma~\ref{lemma:DGS-large-vector-bound}, we get that with probability $1-N\cdot e^{-2n^2}\geq 1-2^{\Omega(n)}$, $\forall i\in [N], \|\vect{w}_i\|\leq 2n+1$. Hence by using Corollary~\ref{cor:quantum_estimate_f_w}, we can estimate $\tilde{f}_W(\vect{t})$ and $\nabla\tilde{f}_W(\vect{t})$ which satisfy \[|\tilde{f}_{W}(\vect{t})-f_W(\vect{t})|\leq \frac{\eps^{1/4}}{100} \text{ and } \|\nabla\tilde{f}_{W}(\vect{t})-\nabla f_W(\vect{t})\|\leq \frac{\eps^{1/4}}{100}\|\vect{t}\|\] with probability $1-2^{-\Omega(n)}$. The algorithm make $\mathcal{O}(\sqrt{N})$ arithmetic operations. Corollary~\ref{cor:quantum_estimate_f_w} also says that the distribution of $\tilde{f}_W(\vect{t})$ and $\nabla\tilde{f}_W(\vect{t})$ is periodic over the lattice. Therefore, by using Theorem~\ref{th:theorem_BDDtoDGS_quantum} we can say that by every update of $\vect{t}$ by $\vect{t}\leftarrow \vect{t}+\frac{\nabla\tilde{f}_W(\vect{t})}{2\pi\tilde{f}_W(\vect{t})}$, the output vector is of form $\vect{w}+\vect{t}^*$ where $\|\vect{t}^*\|$ shrinks by factor of at least $\eps^{(1-2(1/4))/4}=\eps^{1/8}$ with probability $1-2^{-\Omega(n)}$. Hence by  $1+\lceil 8\log(\sqrt{n}s_\epsilon)/\log(1/\epsilon)\rceil$ updates, we get  $\vect{t}=\vect{w}+\vect{t}^*$ such that $\|\vect{t}^*\|\leq 1/(2\sqrt{n})$. For the correctness of the proof, it is sufficient to show that $\vect{w}=\sum_{i}c_i\vect{v}_i$. Note that $\lfloor \langle \vect{v}_i^*,\vect{t} \rangle \rceil=\lfloor \langle \vect{v}_i^*,\vect{t}^* \rangle \rceil+\langle \vect{v}_i^*,\vect{w}\rangle$. By Cauchy-Schwarz, we get $\lfloor \langle \vect{v}_i^*,\vect{t}^* \rangle \rceil=0$. Hence, we get the vector $\vect{w}$ as the output with probability greater than $1-2^{-\Omega(n)}$.

Now, we will show that there exist a set $V^*\subset\{\vect{w}_1,\cdots,\vect{{w}_{n^2}}\}$ containing $n$ linearly independent vectors of length at most $\sqrt{n}$ and algorithm will abort with probability at most $2^{-\Omega(n^2)}$ over the given set $W$. Let $W'=(\vect{w}_1,\cdots,\vect{w}_{n^2})$. By Lemma~\ref{lemma:DGS-large-vector-bound}, with at least $1-n^2\cdot e^{-n}$ probability all the vectors in set $W'$ has length at most $\sqrt{n}\cdot \eta_\eps(\cL^*)$. From the Lemma~\ref{lemma:DRS4.5}, we know that 
\[\|Hf_{W'}(\vect{0})+2\pi\mat{I}_n\| \leq \frac{4\pi\eps}{1+\eps}\left(\log\left(\frac{2(1+\eps)}{\eps}+1\right)+1 \right) <2\pi \]
with probability at least $1-2^{-\Omega(n^2)}$. Hence, we have $Hf_{W'}(\vect{0})$ is invertible and $Hf_{W'}(\vect{0})=\frac{-4\pi^2}{m}\sum_{i=1}^m \vect{w}_i\vect{w}_i^{T}$ (from Equation~\ref{eq:defn-f_W}). It implies that $W'$ spans $\real^n$ and completes the proof. 

\end{proof}

\section{Improved algorithms for BDD}\label{section-:algorithm-BDD}

We obtain a $\BDD$ oracle with decoding distance $\alpha$ by using the same reduction as above but
making each call cheaper. This is achieved by building a sampler that directly samples
at the smoothing parameter, hence avoiding the $\sqrt{2}$ factor, allowing us to take
a bigger $\varepsilon$.
In~\cite{ADRSD15}, it was shown how to construct a dense lattice
$\cL'$ whose smoothing parameter $\eta(\cL')$ is $\sqrt{2}$ times smaller than the original
lattice, and that contains all lattice points of the original lattice.
Suppose that we first use such a dense lattice to construct a corresponding
discrete Gaussian sampler with standard deviation equal to $s=\sqrt{2}\eta(\cL')$.
We then do the rejection sampling on condition that the output is in the original lattice
$\cL$. We thus have constructed a discrete Gaussian sampler of $\cL$ whose standard deviation
is $\sqrt{2}\eta(\cL')=\eta(\cL)$.  Nevertheless, $|\cL'/\cL|$ will be at least $2^{0.5n}$,
which implies that this procedure needs at least $2^{0.5n}$ input vectors to produce an
output vector.

The complexity of our BDD algorithms heavily depends on the quantity $\beta(\cL)$ which is related
to the kissing number of the lattice (see Section~\ref{section_preliminaries}). For this reason,
we first provide complexity bounds that depend on $\beta(\cL)$ and then obtain complexity bounds
in the worst case ($\beta(\cL)\leqslant 2^{0.402}$) as corollaries. We first show how to efficiently
sample a discrete Gaussian at the smoothing parameter.

\begin{lemma}\label{lem:dgs_at_smoothing}
    There is a probabilistic algorithm that, given a lattice $\cL \subset \real^n$, $m \in \intg_{+}$
    and $s\geq \eta_{1/3}(\cL)$ as input, outputs $m$ samples from a distribution
    $(m \cdot 2^{-\Omega(n^2)})$-close to $D_{\cL,s}$ in expected time $m\cdot 2^{n/2+o(n)}$ and space $(m+2^{n/2})\cdot 2^{o(n)}$.
    Furthermore, all samples have $\poly(n)$ bit-size.
\end{lemma}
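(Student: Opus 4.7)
\medskip
The plan is to realize the strategy outlined in the paragraph preceding the lemma: use Lemma~\ref{lemma_mod2} to pass to a dense superlattice $\cL'\supset\cL$ whose smoothing parameter is smaller by a factor of $\sqrt{2}$, invoke the sampler of Theorem~\ref{theorem_dgsabovesmoothing} on $\cL'$ with width $s$, and then apply rejection sampling, keeping only those samples that happen to lie in $\cL$.

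\medskip
More concretely, I would first apply Lemma~\ref{lemma_mod2} with an integer $a=\lceil n/2\rceil+c$ for a small constant $c$ and with $\eps=1/3$. The resulting bound $\eps'=\tfrac{2}{9}+\tfrac{4}{3}\cdot 2^{n/2+1-a}$ can be driven below $1/2$ by choosing $c$ large enough (e.g., $c=4$ gives $\eps'=7/18$), so with probability at least $1/2$ the output $\cL'\supset\cL$ of index $2^a$ satisfies $\eta_{1/2}(\cL')\leq\eta_{\eps'}(\cL')\leq\eta_{1/3}(\cL)/\sqrt{2}\leq s/\sqrt{2}$, i.e.\ $s\geq\sqrt{2}\,\eta_{1/2}(\cL')$. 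I would then feed $\cL'$ and $s$ to the sampler of Theorem~\ref{theorem_dgsabovesmoothing} with $\kappa=n^{2}$; when $\cL'$ is good this yields $2^{n/2}$ samples in time $2^{n/2+o(n)}$ and space $O(2^{n/2})$, each $\exp(-\Omega(n^{2}))$-close to $D_{\cL',s}$. If strictly fewer than $2^{n/2}$ vectors come out, $\cL'$ is declared bad and I redraw via Lemma~\ref{lemma_mod2}; $n$ such trials reduce the overall failure probability to $2^{-n}$.

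\medskip
Given a batch of samples from $D_{\cL',s}$, I would accept exactly those vectors that lie in $\cL$; the standard rejection identity then shows that an accepted vector $\vect{y}\in\cL$ has probability $\rho_s(\vect{y})/\rho_s(\cL)=D_{\cL,s}(\vect{y})$, i.e.\ the accepted distribution is exactly $D_{\cL,s}$. To bound the acceptance probability, decompose $\cL'=\bigcup_{\vect{c}\in\cL'/\cL}(\cL+\vect{c})$; since $s\geq\eta_{1/3}(\cL)$, Lemma~\ref{lemma_above} gives $\rho_s(\cL+\vect{c})\leq\rho_s(\cL)$ for every coset, so $\rho_s(\cL')\leq 2^{a}\rho_s(\cL)$ and each sample from $D_{\cL',s}$ lands in $\cL$ with probability at least $2^{-a}=2^{-n/2-O(1)}$. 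Hence $O(m\cdot 2^{n/2})$ draws, or equivalently $O(m)$ invocations of Theorem~\ref{theorem_dgsabovesmoothing}, suffice in expectation (and, by a Chernoff bound, with overwhelming probability). Processing each batch on the fly and retaining only the $m$ accepted vectors gives space $(m+2^{n/2})\cdot 2^{o(n)}$; the expected running time is $m\cdot 2^{n/2+o(n)}$; and adding the per-sample error from Theorem~\ref{theorem_dgsabovesmoothing} produces total statistical distance $m\cdot 2^{-\Omega(n^{2})}$.

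\medskip
The main obstacle is that Lemma~\ref{lemma_mod2} succeeds only with probability $1/2$, with no obvious direct way to certify that the smoothing bound holds for a given $\cL'$. I would bypass this by exploiting the graceful-degradation clause of Theorem~\ref{theorem_dgsabovesmoothing}: even when $\cL'$ is bad the returned vectors are still close to $D_{\cL',s}$, so rejection-sampling correctness is never compromised; only the batch size betrays a bad $\cL'$, and we simply restart whenever the returned batch is short. A secondary bookkeeping point is verifying that all error contributions (the smoothing error from Theorem~\ref{theorem_dgsabovesmoothing}, the Chernoff tail for the number of accepted samples, and the failure probability for the choice of $\cL'$) collapse into the single $m\cdot 2^{-\Omega(n^{2})}$ bound claimed in the statement.
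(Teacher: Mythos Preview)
Your proposal is correct and follows essentially the same approach as the paper: construct a dense superlattice $\cL'$ via Lemma~\ref{lemma_mod2} with $a=n/2+4$, sample $2^{n/2}$ vectors from $D_{\cL',s}$ via Theorem~\ref{theorem_dgsabovesmoothing}, and reject those not in $\cL$, using the bound $\rho_s(\cL)/\rho_s(\cL')\geq 2^{-a}$ for the acceptance probability. The only cosmetic difference is that the paper simply redraws $\cL'$ in every iteration of the outer loop and shows that each iteration succeeds (produces at least one $\cL$-vector) with probability at least $\tfrac{1}{2}(1-e^{-1/16})$, whereas you propose to detect a bad $\cL'$ via a short batch and restart; both variants lean on the same graceful-degradation clause of Theorem~\ref{theorem_dgsabovesmoothing} that you correctly identified, so neither correctness nor the claimed expected running time is affected.
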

\begin{proof}
Let $a=\frac{n}{2}+4$. We repeat the following until we output $m$ vectors. We use the algorithm
in Lemma \ref{lemma_mod2} to obtain a lattice $\cL' \supset \cL$ of index $2^a$.
We then run the algorithm from Theorem~\ref{theorem_dgsabovesmoothing} with input $(\cL', s)$
to obtain a list of vectors from $\cL'$. We output the vectors in this list that belong to $\cL$.
The correctness of the algorithm, assuming it outputs anything, is clear as long as the samples
obtained from Theorem~\ref{theorem_dgsabovesmoothing} are (sufficiently) independent, which we will
prove below.

By Theorem~\ref{theorem_dgsabovesmoothing}, we  obtain, in time and space $2^{(n/2)+o(n)}$,
$M\leqslant 2^{n/2}$ vectors that are $2^{-\Omega(n^2)}$-close to $M$ vectors independently sampled from
$D_{\cL',s}$. The theorem guarantees that $M=2^{n/2}$ if $s\geqslant \sqrt{2}\eta_{1/2}(\cL')$.
Also, by Lemma \ref{lemma_mod2}, with probability at least $1/2$, we have
$s\geq \eta_{1/3}(\cL)\geq \sqrt{2}\eta_{1/2}(\cL')$. Note that when $s<\sqrt{2}\eta_{1/2}(\cL')$,
the samples obtained from Theorem~\ref{theorem_dgsabovesmoothing} are still $2^{-\Omega(n^2)}$-close
to $M$ vectors independently sampled from $D_{\cL',s}$ but $M$ could be much lower than $2^{n/2}$ or even $0$.
On the other hand, if $s\geqslant\sqrt{2}\eta_{1/2}(\cL')$ then $M=2^{n/2}$.

Assume that $s\geqslant\sqrt{2}\eta_{1/2}(\cL')$, which happens with probability at least $1/2$.
From these $M=2^{n/2}$ vectors, we will reject the vectors which are not in lattice $\cL$.
It is easy to see that the probability that a vector sampled from the distribution $D_{\cL',s}$
is in $\cL$ is at least 
$\rho_s(\cL)/\rho_s(\cL') \ge \frac{1}{2^a}$
using Lemma~\ref{lemma_above}. Thus, the probability that we obtain at least one vector from
$\cL$ (which is distributed as $D_{\cL, s}$) is at least
\begin{align*}
    \frac{1}{2} \left(1 -(1-1/2^{a})^{2^{n/2}}\right) = \frac{1}{2} \left(1 -(1-1/2^{n/2 + 4})^{2^{n/2}}\right)
    &\ge \frac{1}{2} \cdot \left(1 - e^{-2^{n/2}/2^{n/2+4}}\right)
    = \frac{1}{2} (1 - e^{-1/16}).
\end{align*}
It implies that after rejection of vectors, with constant probability we will get at least one vector from $D_{\cL,s}$.  
Thus, the expected number of times we need to repeat the algorithm is $O(m)$ until we obtain vectors
$\vect{y_1},\hdots,\vect{y_m}$ whose distribution is statistically close to being independently distributed from $D_{\cL,s}$. 
The time and space complexity is clear from the algorithm.

We can ensure that all samples have $\poly(n)$ bit-size by first generating more samples (say twice the amount)
and throwing away all samples of norm larger than $\exp(\Omega(n^2))$. Since the vectors are sampled from a Gaussian with width
at most\footnote{Here we are using our assumption that the basis vectors have size at most $2^{o(n)}$.}
$\exp(O(n))$, the error induced by throwing away the tail of the distribution is smaller than
$2^{-\Omega(n^2)}$.
\end{proof}

\subsection{Reduction from $\BDD$ to $\DGS$}
In \cite{DRS14}, authors gave an algorithm for $\BDDP$ which requires a sampler from the discrete Gaussian distribution
\emph{exactly at the smoothing parameter} $\eta_\varepsilon(\cL)$ which is generally not known. As $\BDDP$ allows preprocessing on the input lattice for arbitrary time, they are able to assume that preprocessing advice strings contains vectors sampled exactly at smoothing parameter. In this section, we present a modification of their algorithm, that gives a reduction from $\BDD$ to $\DGS$. In this paper, our goal is to solve $\BDD$ so we will be very precise with the runtime and not assume any preprocessing.

\begin{restatable}{theorem}{theoremBDDtoDGSnew}\label{theorem_BDDtoDGS_new}%
    For any $\alpha>0$,  any $e^{-n^2}\leqslant\epsilon\leqslant \min(e^{-n^\alpha},1/200)$.
    There exists an algorithm that, on input $\cL$ and with constant probability, constructs a
    classical and a quantum (with QRAM) $\frac{\phi(\cL)}{\lambda_1(\cL)}$-$\BDD$ oracle
    for $\cL$ by doing $\poly(n)$ calls to a $0.5$-$\operatorname{hDGS}^m_{\eta_\varepsilon}$ sampler
    on the lattice $\cL^*$ and requires storage space $m\cdot\poly(n)$,
    where $m = O(\frac{n \ln(1/\epsilon)}{\sqrt{\epsilon}})$ and $\phi(\cL) \equiv \frac{\sqrt{\ln(1/\epsilon)/\pi-o(1)}}{ 2\eta_\epsilon(\cL^*)}$.
    Each call to the classical oracle takes time $m\cdot\poly(n)$ and
    space $O(\poly(n)+\ln m)$ excluding the storage space of the preprocessing.
    Every call to the quantum oracle with QRAM takes time $\sqrt{m}\cdot\poly(n)$, classical space $O(\poly(n)+\ln m)$, $\poly(n)$ qubits and requires a QRAM of size
    $m\cdot\poly(n)$
    that contains the preprocessed data\footnote{Here, we are assuming that the Gaussian samples appear in the streaming fashion.}.
\end{restatable}
The proof follows from Theorems~\ref{theorem_BDDtoDGS} and \ref{theorem_BDDtoDGS_quantum}. For completeness a detailed proof is given in  Section~\ref{sec:new_bdd_red}.

\subsection{BDD when $\varepsilon$ is small}

In order to go further, we will make heavy use of Theorem~\ref{theorem_BDDtoDGS_new}
and Lemma~\ref{lemma_smoothinglambda} to relate the smoothing parameter to other parameters of
the lattice. A small difficulty when applying Lemma~\ref{lemma_smoothinglambda} is the case distinction
on $\varepsilon$. We will start by using inequality~\eqref{ineq:smoothingtosvsmall} which will require
to take very small values of $\varepsilon$ when sampling the discrete Gaussian.

\begin{lemma}\label{lemma_alphaBDD}
    For any sufficiently large $n$, any lattice $\cL\subset \mathbb{R}^n$
    and any $A$ such that $\tfrac{1}{2\ln2}-b+o(1)\leqslant A\leqslant 1$,
    there exists a randomized algorithm that creates a classical and a quantum (with QRAM) $\alpha$-$\BDD$ oracle in time
    $2^{(A+1)n/2+o(n)}$ and space $2^{0.5n+o(n)}$, where $\alpha=\frac{1}{2}\sqrt{\tfrac{A}{A+b}}$
    and $b=\log_2\beta(\cL)$. Every call to the classical oracle takes time $2^{A n/2+o(n)}$ and space $\poly(n)$,
    excluding the space of the preprocessed data.
    Every call to the quantum oracle with QRAM takes time $2^{An/4+o(n)}$, classical space $\poly(n)$, $\poly(n)$ qubits and requires a QRAM of size $2^{An/2+o(n)}$
    that contains the preprocessed data.
\end{lemma}

\begin{proof}
    Let $\varepsilon = 2^{-An}$, $A\leqslant 1$ to be fixed later. We know that $\eta_{\varepsilon}(\cL^*)>\eta_{1/3}(\cL^*)$
    for any sufficiently large $n$ ($n>\tfrac{1}{A}\log_23$) by the monotonicity of the smoothing parameter function.
    Hence the $\DGS^m_{\eta_{1/3}}$ sampler from Lemma~\ref{lem:dgs_at_smoothing} can be used as
    a $\DGS^m_{\eta_{\varepsilon}}$ sampler, for any $m\in\nat$.

    By Theorem~\ref{theorem_BDDtoDGS_new} we can construct a $\alpha-$BDD
    such that each call takes time $m\cdot\poly(n)=2^{An/2+o(n)}$ and space $\poly(n)$, where
    $\alpha=\phi(\cL)/\lambda_1(\cL)=\frac{\sqrt{\ln(1/\epsilon)/\pi-o(1)}}{ 2\eta_\epsilon(\cL^*)\lambda_1(\cL)}$
    and $m=O(\frac{n\log(1/\eps)}{\sqrt{\eps}})=2^{An/2+o(n)}$.
    The preprocessing consists of $\poly(n)$ calls to the $\DGS^m_{\eta_\varepsilon}$ sampler described above
    and requires space $m\cdot\poly(n)$. Hence the total complexity is $\poly(n)\cdot m\cdot 2^{n/2+o(n)}=2^{(A+1)n/2+o(n)}$
    in time and $2^{An/2+o(n)}\leqslant 2^{n/2+o(n)}$ in space.
    By using Lemma~\ref{lemma_smoothinglambda}, inequality~\eqref{ineq:smoothingtosvsmall},
    only valid when $\varepsilon \leq (e/\beta(\cL)^2+o(1))^{-\frac{n}{2}}$, we have that
    \[
        \lambda_1(\cL) \eta_\varepsilon(\cL^*) < \sqrt{\frac{\ln(1/\varepsilon)+n\ln\beta(\cL)+o(n)}{\pi}}.
    \]
    Hence we can guarantee that
    \[
        \alpha
            =\frac{\sqrt{\ln(1/\varepsilon)/\pi-o(1)}}{ 2\eta_\varepsilon(\cL^*)\lambda_1}
            >\frac{\sqrt{\ln(1/\varepsilon)/\pi-o(1)}}{2\sqrt{\frac{\ln(1/\varepsilon)+n\ln\beta(\cL)+o(n)}{\pi}}}
            =\frac{1}{2}\sqrt{\frac{\ln(1/\varepsilon)+o(1)}{\ln(1/\varepsilon)+n\ln\beta(\cL)}}
            =\frac{1}{2}\sqrt{\tfrac{A}{A+b}}+o(1)
    \]
    where $b=\log_2\beta(\cL)$.
    Furthermore, as noted above,
    this inequality is only valid when $\varepsilon \leq (e/\beta^2+o(1))^{-\frac{n}{2}}$, that is $A\geqslant\tfrac{1}{2\ln2}-b+o(1)$.
    Finally, note that since $b\leqslant0.402$, we must have $A\geqslant0.32$ and the inequality holds
    as soon as $n\geqslant5\geqslant\tfrac{1}{A}\log_23$. Finally note that Theorem~\ref{theorem_BDDtoDGS_new} requires
    $\varepsilon\leqslant1/200$ which holds as soon as $n\geqslant 17\geqslant \tfrac{1}{A}\ln200$.

    The quantum algorithm is exactly the same but using the quantum oracle of
    Theorem~\ref{theorem_BDDtoDGS_new} which then has running time
    $\sqrt{m}\cdot\poly(n)=2^{An/4+o(n)}$ and requires a QRAM of size $m\cdot\poly(n)=2^{An/2+o(n)}$.
\end{proof}

We can reformulate the previous lemma by expressing the complexity in terms of $\alpha$ instead of
some arbitrary constant $A$.

\begin{corollary}\label{coro_alphaBDD}
    For any $n\geqslant5$, any lattice $\cL\subset \mathbb{R}^n$
    and any $\alpha$ such that $\tfrac{1}{2}\sqrt{1-2b\ln2}+o(1)\leqslant\alpha<\tfrac{1}{2}\sqrt{\tfrac{1}{1+b}}$,
    there exists a randomized algorithm that creates a classical and a quantum (with QRAM) $\alpha$-$\BDD$ oracle in time
    $2^{(A+1)n/2+o(n)}$ and space $2^{0.5n+o(n)}$.
    Every call to the classical oracle takes time
    $2^{An/2+o(n)}$ and space $\poly(n)$, excluding the space of the preprocessed data;
    and every call to the quantum oracle with QRAM takes time $2^{An/4+o(n)}$, classical space $\poly(n)$, $\poly(n)$ qubits and requires a QRAM of size $2^{An/2+o(n)}$
    that contains the preprocessed data, where $A=\tfrac{4b\alpha^2}{1-4\alpha^2}$ and $b=\log_2\beta(\cL)$.
    
\end{corollary}
\begin{proof}
    Apply Lemma~\ref{lemma_alphaBDD} for some $A$ to be fixed later. Observe that
    $\alpha=\frac{1}{2}\sqrt{\tfrac{A}{A+b}}$ so $A=\tfrac{4b\alpha^2}{1-4\alpha^2}$.
    Now the constraints $\tfrac{1}{2\ln2}-b+o(1)\leqslant A\leqslant 1$ become
    \begin{align*}
        \tfrac{1}{2\ln2}-b+o(1)\leqslant \tfrac{4b\alpha^2}{1-4\alpha^2}
        &\Leftrightarrow (\tfrac{1}{2\ln2}-b+o(1))(1-4\alpha^2)\leqslant 4b\alpha^2\\
        &\Leftrightarrow \tfrac{1}{4\ln2}-\tfrac{b}{2}+o(1)\leqslant \tfrac{\alpha^2}{\ln2}\\
        &\Leftrightarrow \tfrac{1}{2}\sqrt{1-2b\ln2}+o(1)\leqslant \alpha
    \end{align*}
    and
    \[
        \tfrac{4b\alpha^2}{1-4\alpha^2}\leqslant 1
        \Leftrightarrow 4(1+b)\alpha^2\leqslant 1
        \Leftrightarrow \alpha\leqslant \tfrac{1}{2}\sqrt{\tfrac{1}{1+b}}.
    \]
\end{proof}

\subsection{BDD when $\varepsilon$ is large}

The inequality~\eqref{ineq:smoothingtosvsmall} in Lemma~\ref{lemma_smoothinglambda} tells us
that if we take an extremely small $\epsilon$ to compute the $\BDD$ oracle,
we can find a $\BDD$ oracle with $\alpha(\cL)$ almost $1/2$.
However the time complexity for each call of the oracle will be very costly.
On the other hand, if we use the inequality~\eqref{ineq:smoothingtosvlarge} in Lemma~\ref{lemma_smoothinglambda}
with a larger $\epsilon$, each call of the oracle will take much less time,
but the constraint on the decoding coefficient $\alpha$ will be different. It is therefore
important to study this second regime as well. Note that inequality \eqref{ineq:smoothingtosvlarge}
actually applies to all $\varepsilon\in(0,1)$ but is mostly useful when $\varepsilon$ is large.

\begin{lemma}\label{lemma_alphaBDD_alt}
    For any sufficiently large $n$, any lattice $\cL\subset \mathbb{R}^n$
    and any $\tfrac{1}{n}\log_23\leqslant A\leqslant 1$,
    there exists a randomized algorithm that creates a classical and a quantum (with QRAM) $\alpha$-$\BDD$ oracle in time
    $2^{(A+1)n/2+o(n)}$ and space $2^{0.5n+o(n)}$, where $\alpha=\frac{2^{-A}\sqrt{A}\sqrt{2e\ln2}}{2\beta(\cL)}-o(1)$.
    Every call to this oracle takes time $2^{An/2+o(n)}$ and space $\poly(n)$,
    excluding the space of the preprocessed data.
    Every call to the quantum oracle with QRAM takes time $2^{An/4+o(n)}$, classical space $\poly(n)$, $\poly(n)$ qubits and requires a QRAM of size $2^{An/2+o(n)}$
    that contains the preprocessed data.

\end{lemma}

\begin{proof}
    Let $\varepsilon = 2^{-An}$, $A\leqslant 1$ to be fixed later. We know that $\eta_{\varepsilon}(\cL^*)>\eta_{1/3}(\cL^*)$
    for any sufficiently large $n$ ($n>\tfrac{1}{A}\log_23$) by the monotonicity of the smoothing parameter function.
    Hence the $\DGS^m_{\eta_{1/3}}$ sampler from Lemma~\ref{lem:dgs_at_smoothing} can be used as
    a $\DGS^m_{\eta_{\varepsilon}}$ sampler, for any $m\in\nat$.

    By Theorem~\ref{theorem_BDDtoDGS_new} we can construct an $\alpha-$BDD
    such that each call takes time $m\cdot\poly(n)=2^{An/2+o(n)}$ and space $\poly(n)$, where
    $\alpha=\phi(\cL)/\lambda_1(\cL)=\frac{\sqrt{\ln(1/\epsilon)/\pi-o(1)}}{ 2\eta_\epsilon(\cL^*)\lambda_1(\cL)}$
    and $m=O(\frac{n\log(1/\eps)}{\sqrt{\eps}})=2^{An/2+o(n)}$.
    The preprocessing consists of $\poly(n)$ calls to the $\DGS^m_{\eta_\varepsilon}$ sampler described above
    and requires space $m\cdot\poly(n)$. Hence the total complexity is $\poly(n)\cdot m\cdot 2^{n/2+o(n)}=2^{(A+1)n/2+o(n)}$
    in time and $2^{An/2+o(n)}\leqslant 2^{n/2+o(n)}$ in space.
    By using inequality~\eqref{ineq:smoothingtosvlarge},
    in Lemma~\ref{lemma_smoothinglambda}, we have that
    \[
        \lambda_1(\cL) \eta_\epsilon(\cL^*) < \sqrt{\frac{\beta(\cL)^2n}{2\pi e}}\cdot \epsilon^{-1/n}(1+o(1)).
    \]
    Hence we can guarantee that
    \[
        \alpha
            =\frac{\sqrt{\ln(1/\epsilon)/\pi-o(1)}}{2\eta_\epsilon(\cL^*)\lambda_1(\cL)}
            >\frac{1}{2}\sqrt{\frac{2e\ln\frac{1}{\epsilon}-o(1)}{n}}\cdot\beta(\cL)^{-1}\epsilon^{\frac{1}{n}}\cdot(1-o(1))
            =\frac{2^{-A}\sqrt{A}\cdot\sqrt{2e\ln2}}{2\beta(\cL)}-o(1).
    \]
    The quantum algorithm is exactly the same but using the quantum oracle of
    Theorem~\ref{theorem_BDDtoDGS_new} which then has running time
    $\sqrt{m}\cdot\poly(n)=2^{An/4+o(n)}$ and requires a QRAM of size $m\cdot\poly(n)=2^{An/2+o(n)}$.
\end{proof}

\begin{corollary}\label{coro_alphaBDDalt}
    For any $n\geqslant2$, any integer $m>0$, any lattice $\cL\subset \mathbb{R}^n$
    and any $\tfrac{\sqrt{e\ln3}}{\sqrt{2n}\beta(\cL)}\leqslant\alpha\leqslant\tfrac{1}{2\beta(\cL)}$, where $b=\log_2\beta(\cL)$,
    there exists a randomized algorithm that creates a classical and a quantum (with QRAM) $(\alpha+o(1))$-$\BDD$ oracle in time $2^{(A+1)n/2+o(n)}$ and space $2^{n/2+o(n)}$.
    Every call to the classical oracle takes time
    $2^{An/2+o(n)}$ and space $\poly(n)$, excluding the space of the preprocessed data;
    and every call to the quantum oracle with QRAM takes time $2^{An/4+o(n)}$, classical space $\poly(n)$, $\poly(n)$ qubits and requires a QRAM of size $2^{An/2+o(n)}$
    that contains the preprocessed data, where
    \[
        A=-\tfrac{1}{2\ln 2}W(-\tfrac{4\alpha^2\beta(\cL)^2}{e})
    \]
    where $W$ is the principal branch of the Lambert W function.
    Furthermore, the above expression of $A$ is a continuous and increasing function of $\beta(\cL)$.
\end{corollary}
\begin{proof}
    By Lemma~\ref{lemma_alphaBDD_alt},
    we can build an oracle for any $\tfrac{1}{n}\log_23\leqslant A\leqslant 1$ such
    that the decoding radius is $\alpha=\frac{2^{-A}\sqrt{A}\sqrt{2e\ln2}}{2\beta(\cL)}-o(1)$.
    Hence, we want to find $A$ such that
    \[
        \frac{2^{-A}\sqrt{A}\sqrt{2e\ln2}}{2\beta(\cL)}=\alpha\qquad\text{and}\qquad
            \tfrac{1}{n}\log_23\leqslant A\leqslant 1.
    \]
    Let $f:A\mapsto 2^{-A}\sqrt{A}$ so that the first condition is equivalent to
    \begin{equation}\label{eq:cond_bdd_existence_A}
        f(A)=\tfrac{2\alpha\beta(\cL)}{\sqrt{2e\ln2}}.
    \end{equation}
    Now assume that \eqref{eq:cond_bdd_existence_A} holds and let $y=-2A\ln(2)$, then it is equivalent to
    \[
        e^yy=-2\ln(2)\tfrac{2\alpha^2\beta(\cL)^2}{e\ln2},
    \]
    that is
    \begin{equation}\label{eq:bdd_eq_y}
        e^yy=-\tfrac{4\alpha^2\beta(\cL)^2}{e}.
    \end{equation}
    This equation admits a solution if and only if 
    \begin{equation}
        -\tfrac{4\alpha^2\beta(\cL)^2}{e}\geqslant-\tfrac{1}{e}
        \quad
        \Leftrightarrow
        \quad
        \alpha\leqslant\tfrac{1}{2\beta(\cL)}.\label{eq:bdd_cond_alpha_1}
    \end{equation}
    Assuming this is the case, \eqref{eq:bdd_eq_y} can admit up to two solutions.
    However, since the complexity increases with $A$,
    we want the solution that minimizes $A$, \emph{i.e.} that maximizes $y$. The largest of the (up to)
    two solutions of \eqref{eq:bdd_eq_y} is always given by be the principal branch $W$ of the Lambert W function:
    \begin{equation}\label{eq:bdd_sol_A}
        y=W(-\tfrac{4\alpha^2\beta(\cL)^2}{e})
        \quad\text{that is}\quad
        A=-\tfrac{1}{2\ln 2}W(-\tfrac{4\alpha^2\beta(\cL)^2}{e})
    \end{equation}
    and always satisfies $y\geqslant-1$. In particular, we always have $A\leqslant\tfrac{1}{2\ln 2}$.
    Now check that $f$ is strictly increasing over $[0,\tfrac{1}{2\ln2}]$. Hence,
    the condition $\tfrac{1}{n}\log_23\leqslant A$ is equivalent to
    \begin{align*}
        &f(\tfrac{1}{n}\log_23)\leqslant f(A)\\
        &\Leftrightarrow f(\tfrac{1}{n}\log_23)^2\leqslant \left(\tfrac{2\alpha\beta(\cL)}{\sqrt{2e\ln2}}\right)^2
            &&\text{by \eqref{eq:cond_bdd_existence_A}}\\
        &\Leftrightarrow 2^{-\tfrac{2}{n}\log_23}\tfrac{1}{n}\log_23
            \leqslant\tfrac{2\alpha^2\beta(\cL)^2}{e\ln2}\\
        &\Leftrightarrow \tfrac{e\ln3}{2n\beta(\cL)^2}9^{-\tfrac{1}{n}}\leqslant \alpha^2\\
        &\Leftarrow \tfrac{e\ln3}{2n\beta(\cL)^2}\leqslant \alpha^2.
        \numberthis\label{eq:bdd_cond_alpha_2}
    \end{align*}
    In summary, we can always take $A$ as in \eqref{eq:bdd_sol_A} assuming \eqref{eq:bdd_cond_alpha_1}
    and \eqref{eq:bdd_cond_alpha_2} hold.
\end{proof}

\subsection{Putting everything together}

We have analyzed the construction of $\alpha$-BDDs in two regimes, based on Lemma~\ref{lem:dgs_at_smoothing}.
It is not a priori clear which construction is better and in fact we will see that it depends
in a nontrivial way on the relation between $\alpha$ and $\beta(\cL)$.

\begin{theorem}\label{theorem_alphaBDD}
    For any sufficiently large $n$, any $m>0$, any $\tfrac{\sqrt{e\ln3}}{\sqrt{2n}\beta(\cL)}\leqslant\alpha<\tfrac{1}{2}\sqrt{\tfrac{1}{1+b}}$
    and any lattice $\cL\subset \mathbb{R}^n$,
    there exists a randomized algorithm that creates a classical and a quantum (with QRAM) $(\alpha+o(1))$-$\BDD$ oracle in time $2^{(A+1)n/2+o(n)}$ and space $2^{n/2}$.
    Every call to the classical oracle takes time
    $2^{An/2+o(n)}$ and space $\poly(n)$, excluding the space of the preprocessed data;
    and every call to the quantum oracle with QRAM takes time $2^{An/4+o(n)}$, classical space $\poly(n)$, $\poly(n)$ qubits and requires a QRAM of size $2^{An/2+o(n)}$
    that contains the preprocessed data, where
    \[
        A=\begin{cases}
                -\tfrac{1}{2\ln 2}W(-\tfrac{4\alpha^2\beta(\cL)^2}{e})&\text{when }b<\tfrac{1-4\alpha^2}{2\ln2}\\
                \tfrac{4\alpha^2}{1-4\alpha^2}b&\text{when }b\geqslant\tfrac{1-4\alpha^2}{2\ln2}
            \end{cases}
    \]
    where $W$ is the principal branch of the Lambert W function and $b=\log_2\beta(\cL)$.
    Furthermore, the above expression of $A$ is a continuous and increasing function of $b$.
\end{theorem}
\begin{proof}
    Let $\tfrac{\sqrt{e\ln3}}{\sqrt{2n}\beta(\cL)}\leqslant\alpha<\tfrac{1}{2}\sqrt{\tfrac{1}{1+b}}$ and $b=\log_2\beta(\cL)$.
    By Corollary~\ref{coro_alphaBDD}, we can build an $\alpha$-$\BDD$ if
    $\tfrac{1}{2}\sqrt{1-2b\ln2}\leqslant\alpha$, in which
    case the complexity will depend on $A=A_1(\alpha,b):=\tfrac{4b\alpha^2}{1-4\alpha^2}$.
    By Corollary~\ref{coro_alphaBDDalt}, we can build an $\alpha$-$\BDD$ if
    $\alpha<\tfrac{1}{2}\sqrt{1-2b\ln2}$,
    in which case the complexity will depend on $A=A_2(\alpha,b):=-\tfrac{1}{2\ln 2}W(-\tfrac{4\alpha^2\beta(\cL)^2}{e})$.
    In both cases, the BDD oracle can be created in time $2^{(A+1)n/2+o(n)}$, space $2^{0.5n+o(n)}$
    and each call takes time $2^{An/2+o(n)}$. Now observe that
    \[
        \alpha<\tfrac{1}{2}\sqrt{1-2b\ln2}
            \;\Leftrightarrow\; 4\alpha^2<1-2b\ln2
            \;\Leftrightarrow\; b<\tfrac{1-4\alpha^2}{2\ln2}.
    \]
    Let $b^*:=\tfrac{1-4\alpha^2}{2\ln2}$, then there are two cases:
    \begin{itemize}
        \item If $b\geqslant b^*$ then $\tfrac{1}{2}\sqrt{1-2b\ln2}\leqslant\alpha$
            so only Corollary~\ref{coro_alphaBDD} applies and we can build a $\alpha$-$\BDD$.
            In this case the complexity exponent is $A_1(\alpha,b)=\tfrac{4\alpha^2}{1-4\alpha^2}b$.
        \item If $b<b^*$ then $\alpha<\tfrac{1}{2}\sqrt{1-2b\ln2}$ so Corollary~\ref{coro_alphaBDDalt}
            applies but Corollary~\ref{coro_alphaBDD} does not for this particular value of $\alpha$.
            However, we can apply Corollary~\ref{coro_alphaBDD} to build a $\alpha'$-$\BDD$ oracle with
            $\alpha'\geqslant\alpha_1^{min}(b):=\tfrac{1}{2}\sqrt{1-2b\ln2}>\alpha$.
            We will show that the $\alpha$-$\BDD$ of Corollary~\ref{coro_alphaBDDalt} is always more
            efficient than the $\alpha'$-$\BDD$ of Corollary~\ref{coro_alphaBDD} in this case
            and the complexity exponent will thus be $A_2(\alpha,b)=-\tfrac{1}{2\ln 2}W(-\tfrac{4\alpha^2\beta(\cL)^2}{e})$.
    \end{itemize}
    Assume that $b<b^*$, we claim that
    $A_1(\alpha,b)\geqslant A_2(\alpha',b)$ for any $\alpha'\geqslant\alpha_1^{min}(b)$. Indeed, on the one hand
    $A_2$ is an increasing function of $b$ so
    \[
        A_2(\alpha,b)
            <A_2(\alpha,\tfrac{5}{18\ln2})=
            -\tfrac{1}{2\ln 2}W(-4\alpha^2e^{-4\alpha^2})
            =\tfrac{4\alpha^2}{2\ln 2}
            =\tfrac{2\alpha^2}{\ln 2}
            \qquad\text{since }W(xe^x)=x.
    \]
    On the other hand, $A_1$ is an increasing function of $\alpha$ so
    \[
        A_1(\alpha,b)
            \geqslant A_1(\alpha_1^{min}(b),b)=\tfrac{1-2b\ln2}{2\ln2}
    \]
    which is a decreasing function of $b$, therefore
    \[
        A_1(\alpha,b)\geqslant A_1(\alpha_1^{min}(b^*),b^*)=\tfrac{1-2b^*\ln2}{2\ln2}=\tfrac{2\alpha^2}{\ln 2}>A_2(\alpha,b).
        %\qedhere
    \]
\end{proof}

\begin{corollary}\label{cor_1_3BDD}
    For any sufficiently large integer $n$, any integer $m>0$, any lattice $\cL\subset \mathbb{R}^n$
    there exists a randomized algorithm that creates a classical and a quantum (with QRAM) $1/3$-$\BDD$ oracle in time
    $2^{0.6604n+o(n)}$ and space $2^{0.5n+o(n)}$.
    Every call to the classical oracle takes time $2^{0.1604n+o(n)}$ and space $\poly(n)$,
    excluding the space of the preprocessed data.
    Every call to the quantum oracle with QRAM takes time $2^{0.0802n+o(n)}$, classical space $\poly(n)$, $\poly(n)$ qubits and requires a QRAM of size $2^{0.1604n+o(n)}$
    that contains the preprocessed data.
\end{corollary}
\begin{proof}
    By Theorem~\ref{theorem_alphaBDD}, the value of $A$ increases with $b=\log_2\beta(\cL)$.
    Since $b\leqslant 0.402$ and $0.402\geqslant\tfrac{5}{18\ln2}$,
    we always have $A\leqslant\tfrac{4}{5}0.402+o(1)
    \leqslant0.3208+o(1)$ and we obtain the result.
\end{proof}

\section{Solving SVP by spherical caps on the sphere}
\label{sec:spherical_caps}

We now explain how to reduce the number of queries to the $\alpha$-$\BDD$ oracle.
Consider a uniformly random target vector $\vect{t}$ such that
$\alpha(1-\frac{1}{n})\lambda_1(\cL)\leq \norm{\vect{t}} <\alpha \lambda_1(\cL)$,
it satisfies the condition of Theorem \ref{theorem_enlargeBDD},
\emph{i.e}.  $\operatorname{dist}(\cL,\vect{t})\leq \alpha\lambda_1(\cL)$.
We enumerate all lattice vectors within distance $2\alpha\lambda_1(\cL)$ of $\vect{t}$ and
keep only the shortest nonzero one. We show that for any $\alpha\geqslant\tfrac{1}{3}$,
we will get the shortest nonzero vector of the lattice
with probability at least $2^{-cn+o(n)}$ for some $c$ that depends on $\alpha$. By repeating this $2^{cn+o(n)}$ times,
the algorithm will succeed with constant probability. The optimal choice of $\alpha$ is not obvious
and is deferred to Section~\ref{section_relation_kissing}.

\begin{figure}
    \centering
    \begin{tikzpicture}[scale=0.8]
    
    \draw (0,0) coordinate (O) node[below left] {$O$} circle (1*3cm);
    \draw (0,0) circle (0.422*3cm);
    %
    %\draw (-1.2,2.6)  node {$\phi$};
    \draw[thick,dashed]  (0,0.422*3) coordinate (P)
        node[above right] {$P$} [red] circle (0.844*3cm);
    \coordinate (A) at (0,0.41*3);
            \fill (A) circle[radius=2pt];
    \coordinate (B) at (0,0);
            \fill (B) circle[radius=2pt];    
    \coordinate (C) at (0,3);
            \fill (C) circle[radius=2pt];
    \coordinate (D) at (-2.5,1.75);
            \fill (D) circle[radius=2pt];
    \coordinate (E) at (2.5,1.75);
            \fill (E) circle[radius=2pt];   
    \draw (B) -- (A);
    \draw [red,thick,dashed](A) -- (D);
    \draw [red,thick,dashed] (0,0.422*3) -- (0,3+0.422*2);
    \draw ($(P)!0.5!(D)$) node[above,red] {$2\alpha\lambda_1$};
    \draw (0,0) -- (0,-3);
    \draw (0,0) -- (0.41*3,0);
    \draw (0.633,-0.3) node {$r$};
    \draw [blue,thick] (O) -- (E) node[midway,above,blue] {$\lambda_1$};
    \draw [blue,thick] (O) -- (D);
    \draw[blue] (O) pic[pic text={$\phi$},draw=blue,angle eccentricity=1.5] {angle = P--O--D};
    \end{tikzpicture}
    \caption{One can cover the sphere of radius $\lambda_1$ by balls of radius $2\alpha\lambda_1$,
    where $\tfrac{1}{3}\leqslant\alpha<\tfrac{1}{2}$, whose centers (here $P$) are at distance $r$ from the origin $O$.
    Each such ball covers a spherical cap of half-angle $\phi$.\label{figure_covering}}
\end{figure}
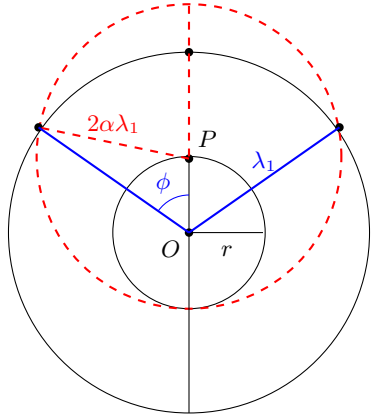

\begin{theorem}\label{theorem_bdd_svp_spherical_cap}
    Assume we can create an $\alpha$-$\BDD$ oracle, with $\alpha\geqslant\tfrac{1}{3}$, in time
    $T_c$, space $S_c$ such that each call takes time $T_o$. Then
    there is a randomized algorithm that solves, with constant probability, $\svp$ in space $S_c$ and time
    \[
        T_c+\frac{2^{o(n)}T_o}{\alpha^n}
    \]
    Furthermore, there is a quantum algorithm that
    solves $\svp$ in classical space $S_c$ using a polynomial number of qubit and time
    \[
        T_c+\frac{2^{o(n)}T_o}{\alpha^{n/2}}.
    \]
    If we can create a quantum $\alpha$-$\BDD$ oracle with QRAM in classical time
    $T_c$ and space $S_c$ such that each call takes quantum time $T_o$,
    $\poly(n)$ qubits and requires a QRAM of size $M_o$, then there is a quantum
    algorithm with QRAM that solves $\svp$ in classical space $S_c$ using a polynomial number of qubit, time
    \[
        T_c+\frac{2^{o(n)}T_o}{\alpha^{n/2}}.
    \]
    and requires a QRAM of size $M_o$.
\end{theorem}

\begin{proof}
    On input lattice $\cL(\basis)$, use the LLL algorithm~\cite{LLL82} to get a number $d$
    (the norm of the first vector of the basis) that satisfies
    $\lambda_1(\cL)\leq d\leq 2^{n/2}\lambda_1(\cL)$. For $i=1,\hdots,n^2$,
    let $d_i=d/(1+\frac{1}{n})^i$. There exists a $j$ such that
    $\lambda_1(\cL)\leq d_j\leq (1+\frac{1}{n})\lambda_1(\cL)$. We repeat the following procedure
    for all $i=1,\hdots,n^2$:

    Fix $N\in\mathbb{N}$ to be fixed later. 
    For $j=1$ to $N$, pick a uniformly random vector $\vect{v_{ij}}$ on the surface of the ball of radius
    $r(1-\frac{1}{n}) d_i$. By Theorem~\ref{theorem_enlargeBDD},
    we can enumerate $2^n$ lattice points using the function $f_{ij}:\intg^n_2 \rightarrow \cL$ defined by
    \begin{equation}\label{eq:def_fij}
        f_{ij}(\vect{x})=\basis \vect{x}- 2 \cdot\BDD_{\alpha}(\cL,(\basis\vect{x}- \vect{v_{ij}})/2).
    \end{equation}
    At each step we only store the shortest nonzero vector. At the end, we output the shortest among them.
    The running time of the algorithm is straightforward.
    We make $2^n$ queries to a $\alpha$-$\BDD$ oracle that takes time $T_o$,
    we further repeat this $n^2N$ times. Therefore the algorithm takes time $T_c+2^n\cdot n^2N\cdot T_o$
    and space $S_c$. This entire procedure is summarized in Algorithm~\ref{alg:bdd_svp_spherical_cap}.

    \begin{algorithm}[h]
        \caption{Solving SVP by spherical caps on the sphere\label{alg:bdd_svp_spherical_cap}}
        \begin{algorithmic}[1]
            \REQUIRE basis $\vect{B}$ of a lattice $\cL\subset\mathbb{R}^n$
            \REQUIRE an $\alpha$-$\BDD$ oracle (for a well-chosen $\alpha$)
            \ENSURE a shortest non-zero vector of $\cL$
            \STATE use LLL to get a number $d$ that satisfies $\lambda_1(\cL)\leq d\leq 2^{n/2}\lambda_1(\cL)$.
            \STATE $\vect{z}\gets$ any basis vector
            \FOR{$i=1,\hdots,n^2$}
                \STATE $d_i\gets d(1+\frac{1}{n})^{-i}$
                \STATE $N\gets\frac{A_n(\lambda_1)}{V_{n-1}(2\alpha\lambda_1)}$\Comment{see \eqref{eq:def_An_Vn}}
                \FOR{$j=1,...,N$}
                    \STATE $\vect{v}_{ij}\gets$ random vector of norm $r(1-\tfrac{1}{n})d_i$
                    \FOR{$\vect{x}\in\set{0,1}^n$}
                        \STATE $\vect{y}\gets f_{ij}(\vect{x})$\Comment{see \eqref{eq:def_fij}}
                        \IF{$\norm{\vect{y}}<\norm{\vect{x}}$}
                            \STATE $\vect{z}\gets\vect{y}$\Comment{shorter vector}
                        \ENDIF
                    \ENDFOR
                \ENDFOR
            \ENDFOR
            \RETURN $\vect{z}$
        \end{algorithmic}
    \end{algorithm}

    To prove the correctness of the algorithm, it suffices to show that there exists an $i\in [n^2]$
    for which the algorithm finds the shortest vector with high probability.
    Recall that there exists an $i$ such that $\lambda_1(\cL)\leq d_i\leq (1+\frac{1}{n})\lambda_1(\cL)$
    and let that index be $k$. We will show that for a uniformly random vector $\vect{v}$
    of length $r(1-\frac{1}{n})d_k $, if we enumerate $2^n$ vectors by the function  $f:\intg^n_2 \rightarrow \cL$,
    \begin{equation}\label{eq:def_f_spherical_cap}
        f(\vect{x})=\basis\vect{x}- 2 \cdot\text{BDD}_{\alpha}(\cL,(\basis\vect{x}- \vect{v})/2),
    \end{equation}
    then with probability $\delta$, whose expression is derived in the next paragraph,
    there exists $\vect{x}\in \intg^n_2$ such that $f(\vect{x})$ is the shortest nonzero lattice vector;
    we will then choose $N=1/\delta$ so that repeating $N$ times this process finds the shortest 
    vector with probability bounded from below by a constant.

    To that aim, we show that we can cover the sphere of radius $\lambda_1$ by $N$ balls of radius
    $2\alpha \lambda_1$ whose centers are at distance $r (1-\frac{1}{n})d_k\leq r\lambda_1$
    from the origin (see Figure~\ref{figure_covering} where we took $r=\alpha$).
    We have two concentric circles of radius $r (1-\frac{1}{n}) d_k$ and $\lambda_1$,
    and let $P$ be a uniformly random point on the surface of the ball of radius
    $r(1-\frac{1}{n})d_k$. A ball of radius $2\alpha\lambda_1$ at center $P$ will cover the spherical
    cap with angle $\phi$ of the ball of radius $\lambda_1$. For convenience, write $r=s\lambda_1$
    for some $s$. We can calculate the optimal choice of $r$
    by noting that if we take the center of the caps to be at distance $r$
    then the angle $\phi$ satisfies
    $\cos\phi=\frac{\lambda_1^2+r^2-4\alpha^2\lambda_1^2}{2r\lambda_1}=f(s)$ by the law of cosines,
    where $f(x)=\frac{1+x^2-4\alpha^2}{2x}$.
    We want to maximize the angle $\phi$, since the area we can
    cover increases with $\phi$. For minimizing $\cos(\phi)$, we
    minimize $f$.
    Check that $f(x)$ is decreasing until $\sqrt{1-4\alpha^2}$ and then increasing.
    We conclude that the optimal radius is when
    $s=\sqrt{1-4\alpha^2}$ and this gives an optimal angle $\phi$
    such that $\cos\phi=\sqrt{1-4\alpha^2}$
    and therefore $\sin\phi=2\alpha$.

    Now observe that the surface area of any such cap
    is lower bounded by the surface area of the base of the
    cap, which is a $(n-1)$-dimensional sphere of radius $\lambda_1\sin\phi$. Hence the number of spherical
    caps required to cover the surface of sphere is in the order
    of $N:=A_n(\lambda_1)/V_{n-1}(\lambda_1\sin\phi)$ where $A_n$ (resp. $V_n$) is the surface area
    (resp. volume) of
    a $n$-dimensional sphere:
    \begin{equation}\label{eq:def_An_Vn}
        A_n(\lambda_1)=\frac{2\pi^{n/2}\lambda_1^{n-1}}{\Gamma(n/2)},
        \qquad V_{n-1}(\lambda_1\sin\phi)=\frac{\pi^{(n-1)/2}\lambda_1^{n-1}\sin^{n-1}\phi}{\Gamma((n+1)/2)}.
    \end{equation}
    Thus we have 
    \[
        N=\frac{A_n(\lambda_1)}{V_{n-1}(\lambda_1\sin\phi)}
                =\frac{2^{o(n)}}{\sin^{n-1}\phi}
            =\frac{2^{-n+o(n)}}{\alpha^n}.
    \]
    If we randomly choose the center $\vect{v}$ of the sphere,
    the corresponding spherical caps will cover the shortest vector with probability at least
    $1/N$. By Theorem~\ref{theorem_enlargeBDD}, the list $\{f(\vect{x})\mid \vect{x}\in \intg^n_2\}$
    will contain all lattice points within radius $2\alpha d_k$ from $\vect{v}$.
    Hence, the list will contain a shortest vector with probability $1/N$.
    By repeating this process $N$ times, we can find the shortest vector with constant probability.
    Therefore, an upper bound of the total time complexity of our method can be expressed as
    \[
        T_c+2^nN\cdot T_o
            =T_c+\frac{2^{o(n)}T_o}{\alpha^n}.
    \]
    In the quantum case, we can apply the quantum minimum finding algorithm in Theorem~\ref{theorem_Qmin} to speed up search.
    Let $f$ be the function defined in \eqref{eq:def_f_spherical_cap}.
    The algorithm works on three quantum registers and our goal is to build a superposition of states of the form $\ket{\boldsymbol{s}}\ket{f(\boldsymbol{s})}\ket{x}$
    where $x=\norm{f(\boldsymbol{s})}$ most of the time (see the definition
    of $U$ below).
    Recall that we assumed that we can create a classical
    $\alpha$-BDD in time $T_c$ and space $S_c$ such that each
    call takes time $T_o$.
    Hence, we can first create the oracle in time $T_c$ and space
    $T_c$, and then hardcode the preprocessed data that we obtained
    into the oracle to obtain a new, self-contained, oracle that
    still runs in time $T_o$ and polynomial space.
    Let $\varepsilon=o(\tfrac{n}{\ln T_o})$ and apply Corollary~\ref{cor:irreversible_to_quantum}
    to construct a quantum circuit $\mathcal{O}_{BDD}$ of size $T_o^{1+\varepsilon}=2^{o(n)}T_o$ on $\poly(n)$ qubits
    that satisfies
    $\mathcal{O}_{BDD}|  \vect{s} \rangle |  \vect{0} \rangle = |  \vect{s} \rangle |   f(\vect{s}) \rangle$ for all $\vect{s}\in \mathbb{Z}^n_3$.
    We then construct another quantum circuit $U$ satisfying 
    \begin{equation*}
    U(| \vect{\omega}\rangle|  0\rangle) = \begin{cases}
    | \vect{\omega}\rangle| \norm{\vect{\omega}}\rangle &\text{if }\vect{\omega}\neq \vect{0}\\
    | \vect{\omega}\rangle| \norm{\boldsymbol{Be_1}}+1\rangle &\text{if }\vect{\omega} = \vect{0},
    \end{cases}
    \end{equation*}
    Here $\vect{e_1}\in \intg^n$ is a vector whose first coordinate is one and the rest are zero.
    We then consider the following quantum circuit (we have not drawn ancilla qubits):
    \begin{center}
        \begin{tikzpicture}
            \draw[thick] (2,-1/3) rectangle (4,-5/3) node[midway] {$\mathcal{O}_{BDD}$};
            \draw[thick] (6,-1) rectangle (8,-1-4/3) node[midway] {$U$};
            \draw (0,-2/3) node[left] {$\ket{\vect{s}}$} -- (2,-2/3)
                  (4,-2/3) -- (10,-2/3) node[right] {$\ket{\vect{s}}$};
            \draw (0,-4/3) node[left] {$\ket{\vect{0}}$} -- (2,-4/3)
                  (4,-4/3) -- (6,-4/3) node[above=-1pt,midway] {$\scriptstyle \ket{f(\vect{s})}$}
                  (8,-4/3) -- (10,-4/3) node[right] {$\ket{f(\vect{s})}$};
            \draw (0,-6/3) node[left] {$\ket{0}$} -- (6,-6/3)
                  (8,-6/3) -- (10,-6/3) node[right] {$\ket{\norm{f(\vect{s})}}$ or $\ket{\norm{\boldsymbol{Be_1}}+1}$};
        \end{tikzpicture}
    \end{center}
    This circuit $\mathcal{O}$ has size $2^{o(n)}T_o^{1+\varepsilon}$,
    satisfies $\mathcal{O}\ket{\boldsymbol{s}}\ket{\vect{0}}\ket{0}=\ket{\boldsymbol{s}}\ket{f(\boldsymbol{s})}\ket{\norm{f(\boldsymbol{s})}}$
    if $f(\boldsymbol{s})\neq 0$ and $\mathcal{O}\ket{\boldsymbol{s}}\ket{\vect{0}}\ket{0}=\ket{\boldsymbol{s}}\ket{f(\boldsymbol{s})}\ket{\norm{\boldsymbol{Be_1}}+1}$
    and uses $\poly(n)$ qubits.
    We can now apply the quantum minimum finding algorithm from Theorem~\ref{theorem_Qmin} on the first and third registers
    and the index $\vect{s'}$ of a shortest vector in this list. The output of the algorithm will be $f(\vect{s'})$.
    We can then build a circuit to generate the random vectors $\vect{v_{ij}}$
    above and therefore build a circuit that associates to every $(i,j,\vect{s})$ the lattice vector $f_{ij}(\vect{s})$.
    By a similar argument as above, using Corollary~\ref{cor:irreversible_to_quantum}, we conclude that
    this quantum circuit has size $2^{o(n)}T_o$ and uses a polynomial
    number of qubits.
    Finally, we can apply the quantum minimum finding algorithm on
    the set of $(i,j,\vect{s})\in[n^2]\times[N]\times[2^n]$ and obtain the shortest vector of that
    list by making $\sqrt{n^2N2^n}=2^{n/2+o(n)}\sqrt{N}$
    queries to the BDD oracle.
    The total running time
    (including preprocessing) is therefore
    \[
        T_c+2^{n/2+o(n)}\sqrt{N}\cdot 2^{o(n)}T_o
        =T_c+\frac{2^{o(n)}T_o}{\alpha^{n/2}}
    \]

    Lastly, we show that the quantum algorithm will output a shortest non-zero vector with constant probability.
    Since $\norm{\boldsymbol{Be_1}}+1> \lambda_1(\cL)$, with at least $1/2$ probability one will find the index
    $(i,j,\vect{s})$ such that $f_{ij}(\vect{s})$ is a shortest in the list by using the quantum minimum finding algorithm. Therefore it suffices to show that there is an index $(i,j,\vect{s})$
    such that $\|f_{ij}(\vect{s})\|=\lambda_1(\cL)$.
    By the analysis above,
    the list $\{f_{ij}(\vect{s})|  \vect{s}\in \mathbb{Z}^n_3\}$ contains the shortest nonzero vector with at least constant
    probability. Hence, the algorithm will find the
    shortest vector of the lattice with constant probability.

    When using a QRAM, the proof is exactly the same except that
    we put the samples in a QRAM instead of hardcoding them.
    This QRAM will have size $M_o$ and
    everything else is the same.
\end{proof}

\begin{corollary}\label{coro_spherical_cap_SVP}
    There is a randomized algorithm that solves $\svp$ in time $2^{1.669n+o(n)}$ and in space $2^{0.5n+o(n)}$ with constant probability.
    There is a quantum algorithm that solves $\SVP$ in time $2^{0.9497n+o(n)}$
    and classical-space $2^{0.5n+o(n)}$ with a polynomial number of qubits.
    There is a quantum algorithm with QRAM that solves the $\svp$ in time $2^{0.8345n+o(n)}$, using a QRAM of size $2^{0.293n+o(n)}$, $\poly(n)$ qubits
    and classical space $2^{0.5n}$.
\end{corollary}

\begin{proof}
    Apply Theorem~\ref{theorem_alphaBDD} with\footnote{The optimal value of $\alpha$ was found numerically,
    see Section~\ref{section_relation_kissing}.}$\alpha=0.3853$: since $0\leqslant b=\log_2\beta(\cL)\leqslant 0.402$, we indeed have that
    $\alpha<\tfrac{1}{2}\sqrt{\tfrac{1}{1+b}}$ so we can create a $(\alpha+o(1))$-$\BDD$ oracle in time $T_c=2^{(A+1)n/2+o(n)}$,
    space $S_c=2^{0.5n+o(n)}$ such that each call takes time $T_o=2^{An/2+o(n)}$ where
    $A=A(b)$ is given by Theorem~\ref{theorem_alphaBDD}. The theorem also guarantees that $A(b)$ increases
    with $b$ so $A\leqslant A(0.402)$. But $0.402\geqslant\tfrac{1-4\alpha^2}{2\ln2}\approx0.2929$
    so $A(0.402)=\tfrac{4\alpha^2}{1-4\alpha^2}\cdot 0.402\approx 0.5862$ by Theorem~\ref{theorem_alphaBDD}.
    Apply Theorem~\ref{theorem_bdd_svp_spherical_cap}
    to get a randomized algorithm that solves $\svp$ in space $S_c$ and in time
    \[
        T
            :=T_c+\frac{2^{o(n)}T_o}{\alpha^n}
            =T=2^{(A+1)n/2+o(n)}+2^{An/2-n\log_2\alpha+o(n)}
            =2^{1.6690n+o(n)}.
    \]
    For the quantum case, we use $\alpha=0.3473$
    and the same reasoning gives $A\approx 0.3738$
    and the total complexity
    \[
        T'
            :=T_c+\frac{2^{o(n)}T_o}{\alpha^n}
            =T=2^{(A+1)n/2+o(n)}+2^{An/2-n/2\log_2\alpha+o(n)}
            =2^{0.9497n+o(n)}.
    \]
    Finally, for the quantum algorithm with a QRAM, the running
    time of the quantum oracle is $T_o'=\sqrt{T_o}$ and
    it uses a QRAM of size $T_q=2^{An/2+o(n)}$.
    We use $\alpha=0.3853$ (again) to obtain $A\approx 0.5862$
    and a total time complexity of
    \[
        T''
            :=T_c+\frac{2^{o(n)}T_o'}{\alpha^n}
            =T=2^{(A+1)n/2+o(n)}+2^{An/4-n/2\log_2\alpha+o(n)}
            =2^{0.8345n+o(n)}
    \]
    with a QRAM of size $T_q=2^{0.293n+o(n)}$.
\end{proof}

\section{Dependency of the SVP on a quantity related to the kissing number}\label{section_relation_kissing}

In the previous sections, we obtained several algorithms for SVP and bounded their complexity
using the only known bound on the quantity $\beta(\cL)$, which is related to the lattice kissing number
(see Section~\ref{sym:beta_number}):
$\beta(\cL)\leqslant 2^{0.402}$.
The complexity of those algorithms is highly affected by this quantity and
since $\beta(\cL)$ can be anywhere between $1$ and $2^{0.402}$ (see Section~\ref{section_preliminaries}),
we will study the dependence of the time complexity in $\beta(\cL)$. Recall that
$b=\log_2\beta(\cL)$.

In order to avoid doing the analysis three times, we introduce a factor $\nu$ that is $1$ for classical
algorithms and $\tfrac{1}{2}$ for quantum algorithms. We also introduce a factor $\xi$
that is $1$ for classical and plain quantum algorithms, and $\tfrac{1}{2}$ for quantum algorithms
with QRAMs.
We now can reformulate 
the time complexity in Theorem~\ref{theorem_bdd_svp_spherical_cap}
as
\begin{equation}\label{eq_time_compl_param_BDD}
    T_c+\frac{2^{o(n)}T_o}{\alpha^{\nu n}}
\end{equation}
We instantiate the algorithm in Theorem~\ref{theorem_bdd_svp_spherical_cap} with
the $\alpha$-$\BDD$ oracle provided by Theorem~\ref{theorem_alphaBDD} which satisfies
\[
    T_c=2^{(A+1)n/2+o(n)},\qquad T_o=2^{A\xi n/2+o(n)}\quad\text{and}\quad S_c=2^{0.5n+o(n)}
\]
where
\[
    A=\begin{cases}
                -\tfrac{1}{2\ln 2}W(-\tfrac{4\alpha^2\beta(\cL)^2}{e})&\text{when }b<\tfrac{1-4\alpha^2}{2\ln2}\\
                \tfrac{4\alpha^2}{1-4\alpha^2}b&\text{when }b\geqslant\tfrac{1-4\alpha^2}{2\ln2}
            \end{cases}.
\]
The new expression
of the time complexity to
\begin{equation}\label{eq_time_compl_param_A}
    2^{(A+1)n/2+o(n)}+2^{(A\xi/2-\nu\log_2\alpha)n+o(n)}.
\end{equation}
Note that for the classical and plain quantum algorithms, the preprocessing is always
negligible compared to cost of the queries but this is not necessarily
the case when using a QRAM.
The optimal choice of $\alpha$ is not obvious: by increasing the decoding radius, we reduce
the number of queries but increase the cost of each queries.
While we could, in principle, obtain closed-form expressions for
the optimal value of $\alpha$, those are too complicated to be
really helpful. Instead, we express it as an optimization program. Formally,
we have $T=2^{c(b,\nu,\xi)n+o(n)}$ where
\[
    c(b,\nu,\xi)=\min_{\alpha\in[\tfrac{1}{3},\tfrac{1}{2})}\max\left(\tfrac{A+1}{2},\tfrac{A\xi}{2}-\nu\log_2\alpha)\right)
\]
where $A$ and $\cos\phi$ are given by the expressions above that depend on $\alpha$. We numerically
computed the graph of this function and plotted the result on Figure~\ref{fig:complexity_svp_all}
for the classical, plain quantum and quantum with QRAM algorithms respectively.
As mentioned earlier, it is reasonable to conjecture that $\gamma(\cL) = 2^{o(n)}$ for most lattices. We obtain the following result when $\gamma(\cL)=\beta(\cL)^n$ is subexponential in $n$:

\begin{theorem}
    There
    is an algorithm that, given a full rank lattice $\cL$ of rank $n$ with $\gamma(\cL) = 2^{o(n)}$, finds a shortest vector in $\cL$ and runs
    \begin{itemize}
        \item in classical time $2^{1.292n+o(n)}$ and space $2^{0.5n}$,
        \item in quantum time $2^{0.750n+o(n)}$, classical space $2^{0.5n}$
            and $\poly(n)$ qubits,
        \item in quantum time $2^{0.667n+o(n)}$, classical space $2^{0.5n}$,
            $\poly(n)$ qubits and using a QRAM of size $2^{0.167n+o(n)}$.
    \end{itemize}
\end{theorem}

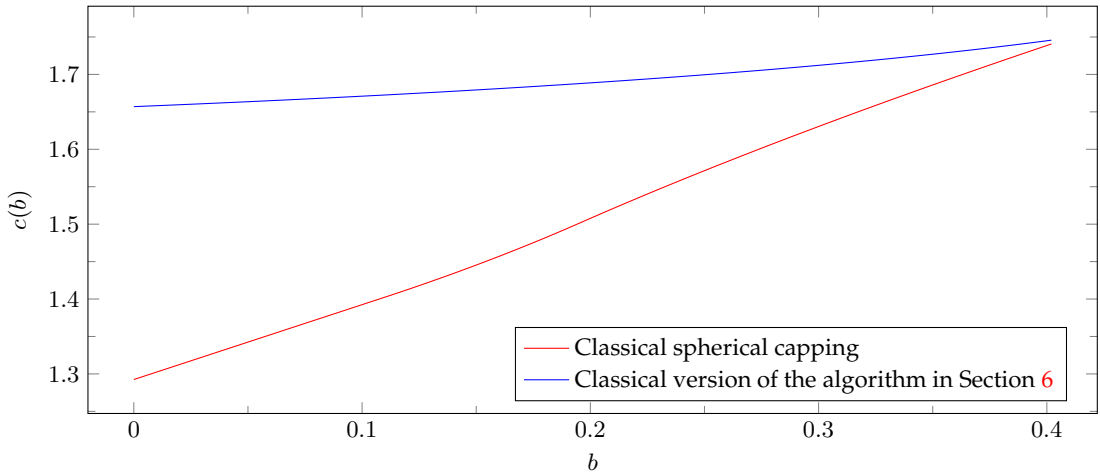
\begin{figure}
    \centering
    \begin{tikzpicture}[scale = 1]
        \begin{axis}[xlabel={$b$},ylabel={$c(b)$},
                width=15cm,height=7cm,
                minor y tick num=1,
                xtick distance=0.1,
                ytick distance=0.1,
                minor x tick num=1,
                enlarge x limits=0.05,
                legend columns=-1,
                legend to name=named legend,
                ]
            \addplot[red] table{svp_complexity_spherical_caps_classical.txt};
            \addlegendentry[black]{Classical};
            \addplot[blue] table{svp_complexity_spherical_caps_quantum.txt};
            \addlegendentry[black]{Quantum};
            \addplot[green!70!black] table{svp_complexity_spherical_caps_quantum_improved.txt};
            \addlegendentry[black]{Quantum with QRAM};
        \end{axis}
    \end{tikzpicture}
    \ref{named legend}
    \caption{(Exponent $c(b)$ of the) time complexity of the
        spherical capping algorithm, plotted against $b=\log_2\beta(\cL)$.
        The complexity of the algorithms is $2^{c(b)n+o(n)}$.}
        \label{fig:complexity_svp_all}
\end{figure}

\section{Direct application to the $\ZLIP$ problem}\label{section-:ZLIP}
The Lattice Isomorphism Problem (LIP) consist of recovering an orthogonal linear transformation sending one lattice to another,  assuming its existance. The case of the trivial lattice $\mZ^n$ is of particular interest ($\ZLIP$) for cryptographic purposes \cite{DvW22,BGPS23}. In \cite{Ducas23}, the author proposed a reduction from $\ZLIP$ of dimension $n$ to $\SVP$ of dimension $n/2 + 1$.

\begin{theorem}[\cite{Ducas23}]
$\ZLIP$ can be solved by making polynomially many calls to a Shortest Vector Problem (\SVP) oracle in dimension at most $n/2 + 1$.
\end{theorem}

Using Corollary~\ref{coro_spherical_cap_SVP}, we directly obtain the following theorem.

\begin{theorem}
 There exists a provable quantum algorithm that solves $\ZLIP$ in time $2^{0.417n+o(n)}$. The algorithm requires a QRAM of size $2^{0.147n+o(n)}$, $\poly(n)$ qubits and $2^{0.25n}$ classical space. 
\end{theorem}

\section*{Acknowledgments} We would like to thank Pierre-Alain Fouque, Paul Kirchner, Amaury Pouly and Noah Stephens-Davidowitz for useful comments and suggestions.

\bibliographystyle{alpha}
\bibliography{references}

\appendix
\section{Reduction from $\cvp$ to $\DGS$}
\label{sec:new_bdd_red}
The goal of this section is to improve Theorems~\ref{theorem_BDDtoDGS} and
\ref{th:theorem_BDDtoDGS_quantum} to not require
a sampler for the discrete Gaussian distribution
\emph{exactly at the smoothing parameter} $\eta_\varepsilon(\cL)$.
Indeed, we usually do not
know $\eta_\varepsilon(\cL)$ and it is a nontrivial problem to even estimate it \cite{CDLP13}.
It was stated in \cite[Theorem~7.3]{ADRSD15} that the reduction still holds if we only provide a
DGS oracle above the smoothing parameter.
For completeness, we provide a self-contained proof in the form of Theorem~\ref{theorem_BDDtoDGS_new}.
We first prove some technical lemmas on the discrete Gaussian distribution.

\begin{lemma}\label{lem:tailbound_new}
    For any lattice $\cL\subset\real^n$, $s>0$ and $r\geqslant s\sqrt{n}/\lambda_1(\cL)$,
    \[
        \rho_s(\cL\setminus B_n(r\lambda_1(\cL))
            \leqslant r^n\beta(\cL)^{n+o(n)}\rho_s(\cL\setminus\set{0})^{r^2}.
    \]
\end{lemma}
\begin{proof}
    Let $t=1+1/n$, $R=r\lambda_1(\cL)$, $r_i=Rt^i$ and $T_i=B_n(r_{i+1})\setminus B_n(r_i)$ for all $i\in\nat$.
    Then by definition of $\beta(\cL)$,
    \[
        |\cL\cap T_i|\leqslant |\cL\cap B_n(r_{i+1})|\leqslant \beta(\cL)^{n+o(n)}(rt^{i+1})^n
    \]
    It follows that
    \[
        \rho_s(\cL\setminus B_n(R))
            =\sum_{i=0}^\infty\rho_s(\cL\cap T_i)\\
            \leqslant\sum_{i=0}^\infty|\cL\cap T_i|e^{-\pi\tfrac{r_{i}^2}{s^2}}\\
            \leqslant \beta(\cL)^{n+o(n)}\sum_{i=0}^\infty\underbrace{(rt^{i+1})^ne^{-\pi\tfrac{R^2}{s^2}t^{2i}}}_{=f(i)}
    \]
    where $f(i)=(rt^{i+1})^ne^{-\pi\tfrac{R^2}{s^2}t^{2i}}$. But check that for all $i\in\nat$,
    \[
        \frac{f(i+1)}{f(i)}
            =t^ne^{-\pi\tfrac{R^2}{s^2}t^{2i}(t^2-1)}
            \leqslant e^{1-2\pi\tfrac{R^2}{ns^2}}
            \leqslant e^{1-3\pi}<1/2
    \]
    where we have used that $R\geqslant s\sqrt{n}$, $t^2-1\leqslant\tfrac{3}{n}$, $t^n\leqslant e$ and $t^{2i}\geqslant 1$. It follows that
    \[
        \rho_s(\cL\setminus B_n(R))
            \leqslant \beta(\cL)^{n+o(n)}\cdot 2\cdot f(0)
            \leqslant \beta(\cL)^{n+o(n)}\cdot 2(rt)^n e^{-\pi\tfrac{R^2}{s^2}}
            \leqslant r^n\beta(\cL)^{n+o(n)}\cdot e^{-\pi\tfrac{r^2\lambda_1(\cL)}{s^2}}.
    \]
    On the other hand,
    \[
        \rho_s(\cL\setminus\set{0})\geqslant e^{-\pi\tfrac{\lambda_1(\cL)^2}{s^2}}
    \]
    so the result follows immediately.
\end{proof}

\begin{lemma}\label{lem:lower_bound_eta_eps_ksquared}
    For any $c>1$,
    for any lattice $\cL\subset\real^n$, $\varepsilon\in(0,1/e)$, we have
    $t\eta_\varepsilon(\cL)\leqslant\eta_{\varepsilon^{t^2}e^{o(1)}}(\cL)$
    where $t=1+\tfrac{1}{n^c}$.
\end{lemma}
\begin{proof}
    Let $s=\eta_\varepsilon(\cL)$, $M=n^c$ and $t=1+\tfrac{1}{M}$ with $c>1$. Let
    $r=M$ and check that
    \begin{equation}\label{eq:lower_bound_r_is_big_enough}
        \tfrac{\sqrt{n}}{s\lambda_1(\cL^*)}
            \leqslant\sqrt{\tfrac{n\pi}{\ln(1/\varepsilon)}}\leqslant r
    \end{equation}
    by Lemma~\ref{lemma_smoothinglambda}, for large enough $n$ since $\varepsilon\leqslant1/e$.
    Let $\Gamma=(\cL^*\setminus\set{0})\cap B_n(r\lambda_1(\cL^*))$, then
    \[
        |\Gamma|\leqslant\beta(\cL^*)^{n+o(n)}r^n.
    \]
    Now observe that
    \begin{align*}
        \rho_{1/ts}(\cL^*\setminus\set{0})
            &=\sum_{x\in\cL^*\setminus\set{0}}\rho_{1/ts}(x)
            =\sum_{x\in\cL^*\setminus\set{0}}\rho_{1/s}(x)^{t^2}\\
            &\geqslant\sum_{x\in\Gamma}\rho_{1/s}(x)^{t^2}
            \geqslant|\Gamma|\left(\tfrac{1}{|\Gamma|}\sum_{x\in\Gamma}\rho_{1/s}(x)\right)^{t^2}
                &&\text{by Jensen's inequality}\\
            &=|\Gamma|^{1-t^2}\rho_{1/s}(\Gamma)^{t^2}\\
            &=|\Gamma|^{1-t^2}\left(\rho_{1/s}(\cL^*\setminus\set{0})-\rho_{1/s}(\cL^*\setminus B_n(r\lambda_1(\cL^*))\right)^{t^2}\\
            &\geqslant \left(\beta(\cL^*)^{n+o(n)}r^n\right)^{1-t^2}
                \left(\varepsilon-\beta(\cL^*)^{n+o(n)}r^n\varepsilon^{r^2}\right)^{t^2}
    \end{align*}
    by \eqref{eq:lower_bound_r_is_big_enough}, Lemma~\ref{lem:tailbound_new}
    and since $\rho_{1/s}(\cL^*\setminus\set{0})=\varepsilon$. Now observe that $t^2-1\leqslant 3/M$ and
    \[
        \left(\beta(\cL^*)^{n+o(n)}r^n\right)^{-3/M}
            =e^{-3n^{1-c}\left(\ln\beta(\cL^*)+o(1)+c\ln n\right)}
            =e^{-3n^{1-c}O(\ln n)}
            =e^{-o(1)}
    \]
    since $c>1$. We also have that
    \begin{align*}
        \beta(\cL^*)^{n+o(n)}r^n\varepsilon^{r^2}
            &=e^{n\ln\beta(\cL^*)+nc\ln(n)+o(n)-(r^2-1)\ln\tfrac{1}{\varepsilon}}\varepsilon\\
            &\leqslant e^{O(n\ln n)-n^{2c}}\varepsilon
                &&\text{since }\varepsilon\leqslant1/e\\
            &\leqslant e^{o(1)}\varepsilon
    \end{align*}
    for large enough $n$. It follows that
    \[
        \rho_{1/ts}(\cL^*\setminus\set{0})
            \geqslant e^{o(1)}(\varepsilon e^{o(1)})^{t^2}
            \geqslant\varepsilon^{t^2}e^{o(1)}
    \]
    since $t\leqslant2$. Therefore we must have $\eta_{\varepsilon^{t^2}e^{o(1)}}(\cL)\geqslant ts$.
    
\end{proof}

We are in a position to prove the theorem, which we restate below.

\theoremBDDtoDGSnew*

\begin{proof}
    First we note that we can easily identify an interval $I=[a,b]$ such that $\eta_\varepsilon(\cL^*)\in[a,b]$
    and $\tfrac{b}{a}\leqslant 2^{n+o(n)}$. Indeed, by e.g. \cite[Lemma~2.11 and Claim~2.13]{Regev09}
    one has
    \[
        \sqrt{\ln(1/\varepsilon)}{\pi}
            \leqslant
            \lambda_1(\cL)\eta_\varepsilon(\cL^*)
            \leqslant
            \sqrt{n}
    \]
    so $\eta_\varepsilon(\cL^*)\in \tfrac{1}{\lambda_1(\cL)}[c,d]$
    where $\tfrac{d}{c}=\sqrt{\tfrac{n}{\ln(1/\varepsilon)}}=O(\sqrt{n})$ since $\varepsilon\leqslant1/200$.
    Furthermore, by running the LLL algorithm on $\cL$ and taking the length of the shortest basis
    vector, we obtain a length $\ell$ such that $2^{-n}\ell\leqslant\lambda_1(\cL)\leqslant\ell$.
    It follows that $\eta_\varepsilon(\cL^*)\in[a,b]:=[\tfrac{c}{\ell},\tfrac{2^nd}{\ell}]$ and
    $\tfrac{b}{a}=2^n\tfrac{d}{c}=2^{n+o(n)}$.

    Now let $c=2+\alpha$, $\delta=1+\tfrac{1}{n^c}$ and $N=\left\lceil\tfrac{\ln(b/a)}{\ln\delta}\right\rceil$.
    Check that $N=\poly(n)$ since $\tfrac{b}{a}=2^{n+o(n)}$. Furthermore, if we let $s_i=a\delta^i$
    for $i=1,\ldots,N$ then there must exists some $i_0$ such that
    \begin{equation}\label{eq:interval_eta_eps}
        \tfrac{1}{\delta}s_{i_0}\leqslant\eta_\varepsilon(\cL^*)\leqslant s_{i_0}.
    \end{equation}
    The preprocessing stage of the algorithm consists in calling the $\operatorname{hDGS}^m_{\sigma}$
    sampler, where $\sigma(\cL)=\eta_\varepsilon(\cL^*)$, with parameter $s_i$ to obtain a lists $L_i$ of $m$ vectors, for each $i=1,\ldots,N$,
    and storing all the lists. This requires $N=\poly(n)$ calls and we need to store $m\cdot\poly(n)$ vectors. Note that the hDGS sampler is allowed to
    return less than $m$ samples if $s_i>\sigma(\cL)$: in this case,
    we do not care about the distribution of the vectors anyway so we can
    add random vectors until we get $m$ samples when that happens.

    We now describe the oracle for $\CVP^\phi$. On input $\vect{t}\in \mR^n$, for each $i=1,\ldots,N$, the oracle calls the algorithm of Theorem~\ref{theorem_BDDtoDGS} on
    input $\vect{t}$ and provides the list $L_i$ to the algorithm in place of the DGS samples.
    Hence, for each $i$, we either obtain a lattice vector $\vect{y}_i$ or the algorithm from Theorem~\ref{theorem_BDDtoDGS}
    fails and we let $\vect{y}_i=\vect{0}$. Finally, the oracle returns the point closest to $\vect{t}$ in the list $\vect{y}_1,\ldots,\vect{y}_N$.
    
    As noted in Remark~\ref{remark:space_complexity_theorem_BDDtoDGS}, if we assume
    that all the DGS samples have $\poly(n)$ bit-size then the reduction from
    Theorem~\ref{theorem_BDDtoDGS} has time complexity $m\cdot \poly(n)$
    and space complexity $O(\poly(n)+\ln m)$ \emph{excluding the storage space of the $m$ vectors provided by the DGS}.
    Furthermore, when the basis vectors of $\cL$ have bit-size $\poly(n)$ (which is
    the case in this paper by assumption) then we can ensure that all DGS samples have
    $\poly(n)$ bit-size by first generating more samples (say twice the amount)
    and throwing away all samples of norm larger than $\exp(\Omega(n^2))$. Since the vectors are sampled from a Gaussian with width
    at most $\exp(O(n))$ (since the basis vectors have size at most $2^{o(n)}$,
    the error induced by throwing away the tail of the distribution is smaller than $2^{-\Omega(n^2)}$. In the quantum setting, we use Theorem~\ref{theorem_BDDtoDGS_quantum} instead of Theorem~\ref{theorem_BDDtoDGS} which gives exactly the same result except for the
    exponent in the complexity of the oracle.
    Therefore running time is clear so it remains to prove that the algorithm actually solves $\CVP^\phi$
    on $\cL$.

    We first note that when called on $s_{i_0}$, the $\operatorname{hDGS}^m_\sigma$ sampler will return
    $m$ vectors that are $0.5-$close to $m$ samples from $D^{m}_{\cL,s}$ since $s_{i_0}\geqslant\sigma(\cL)=\eta_\varepsilon(\cL^*)$
    by \eqref{eq:interval_eta_eps}. Furthermore, by Lemma~\ref{lem:lower_bound_eta_eps_ksquared} we have
    \[
        t\eta_\varepsilon(\cL^*)
            \leqslant\eta_{\varepsilon^{t^2}e^{o(1)}}(\cL^*)
    \]
    where $t=1+\tfrac{1}{n^c}=\delta$. It follows by \eqref{eq:interval_eta_eps} that
    \[
        \eta_\varepsilon(\cL)
            \leqslant s_{i_0}
            \leqslant \delta\eta_\varepsilon(\cL^*)
            \leqslant \eta_{\epsilon^{\delta^2}e^{o(1)}}(\cL^*).
    \]
    But the map $\varepsilon\mapsto\eta_{\epsilon}(\cL)$ is continuous and decreasing, so it follows that
    \[
        s_{i_0}=\eta_{\varepsilon'}(\cL^*)
        \quad\text{for some }
        \varepsilon^{\delta^2}e^{o(1)}\leqslant\varepsilon'\leqslant\varepsilon.
    \]
    Therefore by Theorem~\ref{theorem_BDDtoDGS} and Remark~\ref{remark:space_complexity_theorem_BDDtoDGS}, with constant probability
    over the choice of $L_{i_0}$, the (deterministic) algorithm of Theorem~\ref{theorem_BDDtoDGS} solves
    $\CVP^{\psi}$ when given $L_{i_0}$, where
    \[
        \psi(\cL)=\frac{\sqrt{\ln(1/\varepsilon')/\pi-o(1)}}{ 2\eta_{\varepsilon'}(\cL^*)}
    \]
    assuming that $m=|L_{i_0}|\geqslant m':=O(\frac{n \ln(1/\varepsilon')}{\sqrt{\varepsilon'}})$
    which holds because
    \[
        \frac{n \ln(1/\varepsilon')}{\sqrt{\varepsilon'}}
            \leqslant \frac{n\delta^2\ln(1/\varepsilon)+o(n)}{\sqrt{\varepsilon^{\delta^2}e^{o(1)}}}
            \leqslant \frac{n\ln(1/\varepsilon)+n^{1-c}\ln(1/\varepsilon)+o(n)}{\sqrt{\varepsilon}\varepsilon^{n^{-c}/2} e^{o(1)}}
            =O\left(\frac{n \ln(1/\varepsilon)}{\sqrt{\varepsilon}}\right)
    \]
    since $n^{1-c}\ln\varepsilon=n^{1-c+\alpha}=o(1)$ and $\varepsilon^{n^{-c}/2}=e^{n^{\alpha-c}/2}=e^{o(1)}$.
    It follows that, with constant probability over the preprocessing, our oracle solves $\CVP^\psi$.
    Check that $\varepsilon^{\delta^2}e^{o(1)}\geqslant \varepsilon^{\delta^2+o(1)}$ since $\varepsilon<\tfrac{1}{e}$
    and thus by Lemma~\ref{lemma_smoothing},
    \[
        \eta_{\varepsilon^{\delta^2}e^{o(1)}}(\cL^*)
            \leqslant \eta_{\varepsilon^{\delta^2+o(1)}}(\cL^*)
            \leqslant (\delta^2+o(1))\eta_{\varepsilon}(\cL^*)
            \leqslant (1+o(1))\eta_{\varepsilon}(\cL^*)
    \]
    since $\delta=1+o(1)$. Hence we have
    \[
        \psi(\cL)
            \geqslant \frac{\sqrt{\ln(1/\epsilon)/\pi-o(1)}}{ 2(1+o(1))\eta_{\epsilon}(\cL^*)}
            \equiv \phi(\cL).
    \]
\end{proof}

\end{document}